\documentclass[11pt]{article}
\pagestyle{plain}

\usepackage[margin=1in]{geometry}
\usepackage{amsmath,amsfonts,amssymb,amsthm}
\usepackage{graphicx}
\usepackage{enumerate}
\usepackage{bbm}
\usepackage{verbatim}
\usepackage{hyperref,color}
\usepackage[capitalize,nameinlink]{cleveref}
\usepackage[dvipsnames]{xcolor}
\hypersetup{
	colorlinks=true,
	pdfpagemode=UseNone,
    citecolor=OliveGreen,
    linkcolor=NavyBlue,
    urlcolor=Magenta,
	pdfstartview=FitW
}
\usepackage{appendix}
\crefname{appsec}{Appendix}{Appendices}
\usepackage{tikz}

\theoremstyle{plain}
\newtheorem{theorem}{Theorem}[section]
\newtheorem{proposition}[theorem]{Proposition}
\newtheorem{lemma}[theorem]{Lemma}

\theoremstyle{definition}
\newtheorem{definition}[theorem]{Definition}

\newtheorem*{assumption*}{Assumption}

\theoremstyle{remark}
\newtheorem{remark}[theorem]{Remark}

\crefname{lemma}{Lemma}{Lemmas}
\crefname{theorem}{Theorem}{Theorems}
\crefname{definition}{Definition}{Definitions}
\crefname{fact}{Fact}{Facts}
\crefname{claim}{Claim}{Claims}
\crefname{proposition}{Proposition}{Propositions}

\newcommand{\E}{\mathbb{E}}
\newcommand{\Var}{\mathrm{Var}}

\newcommand{\Ent}{\mathrm{Ent}}
\newcommand{\one}{\mathbbm{1}}

\newcommand{\ceil}[1]{\left\lceil #1 \right\rceil}

\newcommand{\poly}{\mathrm{poly}}
\newcommand{\dist}{\mathrm{dist}}

\newcommand{\eps}{\varepsilon}

\newcommand{\N}{\mathbb{N}}
\newcommand{\Z}{\mathbb{Z}}

\newcommand{\R}{\mathbb{R}}

\newcommand{\BB}{\mathcal{B}}

\newcommand{\HH}{\mathcal{H}}
\newcommand{\II}{\mathcal{I}}

\newcommand{\QQ}{\mathcal{Q}}

\newcommand{\XX}{\mathcal{X}}
\newcommand{\YY}{\mathcal{Y}}
\newcommand{\ZZ}{\mathcal{Z}}

\newcommand{\dtv}{d_{\mathrm{TV}}}
\newcommand{\DKL}{D_{\mathrm{KL}}}

\newcommand{\diam}{\mathrm{diam}}
\newcommand{\tw}{\mathrm{tw}}

\newcommand{\ball}{\mathsf{B}}

\begin{document}
	
\title{Combinatorial Approach for Factorization of Variance and Entropy in Spin Systems}
\author{Zongchen Chen\thanks{Department of Mathematics, Massachusetts Institute of Technology. Email: \texttt{zongchen@mit.edu}.} }
\date{\today}

\maketitle

\begin{abstract}
We present a simple combinatorial framework for establishing approximate tensorization of variance and entropy in the setting of spin systems (a.k.a.\ undirected graphical models) based on balanced separators of the underlying graph. 
Such approximate tensorization results immediately imply as corollaries many important structural properties of the associated Gibbs distribution, in particular rapid mixing of the Glauber dynamics for sampling.
We prove approximate tensorization by recursively establishing block factorization of variance and entropy with a small balanced separator of the graph.
Our approach goes beyond the classical canonical path method for variance and the recent spectral independence approach, and allows us to obtain new rapid mixing results. 
As applications of our approach, we show that:
\begin{enumerate}
\item On graphs of treewidth $t$, the mixing time of the Glauber dynamics is $n^{O(t)}$, which recovers the recent results of Eppstein and Frishberg \cite{EF22+} with improved exponents and simpler proofs; 

\item On bounded-degree planar graphs, strong spatial mixing implies $\widetilde{O}(n)$ mixing time of the Glauber dynamics, which gives a faster algorithm than the previous deterministic counting algorithm by Yin and Zhang \cite{YZ13}. 

\end{enumerate}


\end{abstract}

\thispagestyle{empty}

%

\newpage

\setcounter{page}{1}

\section{Introduction}
\label{sec:intro}

Spin systems, also known as undirected graphical models, are important models for describing the joint distribution of interacting random variables. 
Spin systems were first studied in statistical physics but have been widely used in many other areas including computer science, social network, and biology. 

Consider a general spin system defined on a graph $G = (V,E)$.
The model describes a distribution, called \emph{Gibbs distribution}, over all spin configurations where each vertex $v$ is assigned a spin $\sigma_v$ from a finite set $\QQ$. 
Given functions $\phi: \QQ \times \QQ \to \R_{\ge 0}$ characterizing pairwise interactions and $\psi: \QQ \to \R_+$ measuring external bias, the Gibbs distribution associated with a spin system is defined as
\begin{equation}\label{eq:gibbs-int}
\mu(\sigma) = \frac{1}{Z} \prod_{e=uv \in E} \phi(\sigma_u,\sigma_v) \prod_{v \in V} \psi(\sigma_v), \quad \forall \sigma: V \to \QQ
\end{equation}
where $Z$ is a normalizing constant known as the \emph{partition function}. 

We mention two classical examples of spin systems which have been extensively studied.
The first is the hardcore model of independent sets. The Gibbs distribution is supported over the collection $\II_G$ of all independent sets of $G$ where each $I \in \II_G$ has density $\mu(I) = \lambda^{|I|}/Z$, where $Z = \sum_{I \in \II_G} \lambda^{|\lambda|}$. Observe that this corresponds to $\QQ = \{0,1\}$, $\phi(\sigma_u,\sigma_v) = 1-\sigma_u \sigma_v$, and $\psi(\sigma_v) = 1+(\lambda-1) \sigma_v$ in \cref{eq:gibbs-int} with $\sigma$ being the indicator vector.
Another example is random vertex colorings where the Gibbs distribution is the uniform distribution over all proper $q$-colorings of $G$. 
This corresponds to $\QQ = [q] = \{1,\dots,q\}$, $\phi(\sigma_u,\sigma_v) = \one\{\sigma_u \neq \sigma_v\}$, and $\psi \equiv 1$ in \cref{eq:gibbs-int}.

We study the problem of sampling from the Gibbs distribution of a spin system. 
In particular, we consider the single-site \emph{Glauber dynamics} (also called Gibbs sampling) which is perhaps the simplest and most popular Markov chain Monte Carlo (MCMC) algorithm for sampling from a high-dimensional distribution. 
The Glauber dynamics is an ergodic Markov chain where in each iteration we choose a vertex $v$ uniformly at random, and update the spin $\sigma_v$ conditional on the spin values of all other vertices. 
The mixing time of Glauber dynamics is the smallest $t$ such that, starting from any initial configuration $\sigma^{(0)}$, the distribution of $\sigma^{(t)}$ after $t$ steps is $1/4$-close to the target distribution $\mu$ in total variation distance.

Bounding the mixing time of Glauber dynamics is challenging even for simplest spin systems like the hardcore model or random colorings. 
One common method for establishing mixing time bounds of Markov chains is by proving associated functional inequalities such as the Poincar\'{e} inequality or the standard/modified log-Sobolev inequality. 
For a function $f: \QQ^V \to \R$, the expectation of $f$ with respect to a Gibbs distribution $\mu$ is defined as $\E f = \sum_{\sigma: V \to \QQ} \mu(\sigma) f(\sigma)$. The variance and entropy functionals are defined as $\Var f = \E [f^2] - (\E f)^2$, and $\Ent f = \E [f \log f] - (\E f) \log(\E f)$ for non-negative $f$. 
For Glauber dynamics specifically, the Poincar\'{e} inequality can be expressed equivalently in the form of 
\begin{equation}\label{eq:var-AT}
\Var f \le C \sum_{v \in V} \E[\Var_v (f)], \quad \forall f: \QQ^V \to \R
\end{equation}
where on the right-hand side $\Var^\eta_v f$ is the variance of $f$ under the conditional distribution $\mu^\eta_v$ of $\sigma_v$ where all other vertices are fixed to be some $\eta: V \setminus \{v\} \to \QQ$, and $\E[\Var_v f]$ takes expectation over $\eta$.
The inequality \cref{eq:var-AT} is called \emph{Approximate Tensorization (AT) of variance}. 
The Poincar\'{e} inequality, or equivalently AT of variance, implies that Glauber dynamics mixes in $O(Cn^2)$ steps.

One can also consider the entropy analog of \cref{eq:var-AT}, known as \emph{Approximate Tensorization (AT) of entropy}:
\begin{equation}\label{eq:ent-AT}
\Ent f \le C \sum_{v \in V} \E[\Ent_v (f)], \quad \forall f: \QQ^V \to \R.
\end{equation}
AT of entropy is a much stronger property. It implies the modified log-Sobolev inequality for the Glauber dynamics, and gives a sharper mixing time bound $O(Cn \log n)$ (see \cref{lem:AT-Glauber}) which is optimal if $C$ is constant.

Both \cref{eq:var-AT,eq:ent-AT} were explicitly mentioned and carefully studied in \cite{CMT15}, though they were implicitly used in even earlier works, see e.g.\ \cite{Mar99,GZ03,Cesi01}. 
On a high level, both \cref{eq:var-AT,eq:ent-AT} say that the global fluctuation of a function, quantified as variance or entropy, is always controlled by the sum of local fluctuations at each single variable, which is intuitively true when all variables are sufficiently independent from each other. Indeed, one has $C \ge 1$ in \cref{eq:var-AT,eq:ent-AT} with the equality holds iff $\mu$ is a product distribution. See also \cite{BCSV22,KHR22} for recent applications of AT in learning and testing.

Establishing AT of variance and entropy is a challenging task even for simple distributions.
For variance, the canonical path approach is a common way of proving the Poincar\'{e} inequality (i.e., bounding the spectral gap) in the setting of spin systems. 
The high-level idea is to construct a family of canonical paths or more generally a multi-commodity flow between each pair of configurations and then use the congestion of the flow to establish \cref{eq:var-AT}.
Canonical paths have found many successful applications such as matchings \cite{JS89}, ferromagnetic Ising model \cite{JS93}, and bipartite perfect matchings \cite{JSV04}. 
However, constructing canonical paths is not easy at all and usually involves some specific technical complications for each problem. 

Meanwhile, for entropy it is much more difficult to establish AT or other related functional inequalities like the standard/modified log-Sobolev inequality. 
In many cases they are proved analytically, relying on the topology being the lattice \cite{Mar99,GZ03,Cesi01}.
It was also known that for high-temperature models such as under the Dobrushin uniqueness, AT of entropy holds with $C=O(1)$ \cite{CMT15,Marton19}.

Recently, the spectral/entropic independence approach was introduced \cite{ALO20,AJKPV22} and becomes a powerful tool for establishing AT of both variance and entropy. 
For many families of spin systems it achieves $C=O(1)$ in \cref{eq:var-AT,eq:ent-AT} and thus shows optimal $O(n\log n)$ mixing of Glauber dynamics. 
For example, for the hardcore model one obtains AT of variance and entropy with $C=O(1)$ when $\lambda < \lambda_c(\Delta)$ by a sequence of recent works \cite{ALO20,CLV20,CLV21,CFYZ21,AJKPV22,CFYZ22,CE22}, where $\Delta$ denotes the maximum degree and $\lambda_c(\Delta)$ is the tree uniqueness threshold \cite{Wei06}. 
It was known that Glauber dynamics can be exponentially slow when $\lambda > \lambda_c(\Delta)$ \cite{MWW07} and hence $C = e^{\Omega(n)}$ for some graphs. 
The critical value $\lambda_c(\Delta)$ in fact pinpoints a computational phase transition, see \cite{Wei06,Sly10,SS14,GSV16} for more discussions. 
Though the spectral independence approach works well on general graphs for proving optimal mixing time, it does not apply when the mixing time is a larger polynomial instead of nearly linear. 
For example, for the hardcore model on trees the Glauber dynamics always mixes in polynomial time for all $\lambda>0$ \cite{JSTV04} but (constant) spectral independence fails for large $\lambda$.

In this paper, we ask if there is a natural and direct way of proving AT beyond canonical paths or spectral independence, especially when we have extra knowledge of the underlying graph structure.
We present a simple combinatorial approach for proving AT of variance and entropy based on the existence of balanced separators of the graph, see \cref{thm:AT-sep-decomp,thm:strong-AT-sep-decomp}. 
Using this approach we are able to obtain new rapid mixing results as immediate corollaries for certain classes of graphs such as bounded-treewidth graphs or planar graphs. 
We are also able to derive many previous results in a simple and straightforward way in contrast to the detailed technical proofs that were known previously. 
For example, we can easily deduce within one page that Glauber dynamics for $q$-colorings on complete $d$-ary trees is rapidly mixing for all $d\ge 2, q\ge 3$, see \cref{prop:coloring-tree}.

Our proof approach is nicely explained for graphs of bounded treewidth.
The treewidth of a graph characterizes the closeness of a graph to a tree, and is an important parameter for getting fixed parameter tractable algorithms for many graph problems. 
For the hardcore model and random colorings on bounded-treewidth graphs, we obtain rapid mixing of Glauber dynamics as immediate consequences of \cref{thm:AT-sep-decomp,thm:strong-AT-sep-decomp}, which improves the results in \cite{EF22+} with better exponents.
We remark that we can also obtain similar results for more general spin systems but for simplicity we only state them for the hardcore model and vertex colorings which are the most commonly studied examples .

\begin{theorem}\label{thm:hardcore-treewidth}
Let $G=(V,E)$ be an $n$-vertex graph of treewidth $t \ge 1$. 
The mixing time of the Glauber dynamics for sampling from the hardcore model on $G$ with fugacity $\lambda > 0$ is $n^{O(1+t\log(1+\lambda))}$.
\end{theorem}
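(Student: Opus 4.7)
The plan is to combine the combinatorial separator framework provided by \cref{thm:AT-sep-decomp} with the classical structural fact that bounded-treewidth graphs admit small balanced separators at every scale, then feed the resulting AT constant through \cref{lem:AT-Glauber}.

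\medskip

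\textbf{Step 1: separators at every scale.}
A standard consequence of treewidth is that every graph $H$ of treewidth $t$ has a $\tfrac{2}{3}$-balanced vertex separator of size at most $t+1$, and treewidth is monotone under induced subgraphs. Thus every induced subgraph of $G$ that arises during the recursion still has treewidth $\le t$ and still admits such a separator. Moreover, the pinned hardcore distribution on an induced subgraph (with a boundary condition from outside) is itself a hardcore-type distribution, possibly with some vertices forced to spin $0$, so the framework applies unchanged at every level.

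\medskip

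\textbf{Step 2: per-level overhead for the hardcore model.}
The separator decomposition bounds $C_{\mathrm{AT}}(\mu)$ in terms of the AT constants of the pinned distributions on the two sides of a separator $S$, times an overhead controlled by the reciprocal of a conditional probability of a fixed configuration on $S$. For the hardcore model with fugacity $\lambda$, the all-zero configuration on $S$ is consistent with \emph{every} boundary condition $\tau$, and, bounding the local partition function on $S$ by $(1+\lambda)^{|S|}$,
\begin{equation*}
\mu^\tau(\sigma_S \equiv 0) \;\ge\; (1+\lambda)^{-|S|}.
\end{equation*}
Hence the per-level overhead contributed by a separator of size $s \le t+1$ is at most $(1+\lambda)^{t+1}$, a quantity independent of $n$ and of the boundary condition.

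\medskip

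\textbf{Step 3: solving the recursion.}
Because each separator is $\tfrac{2}{3}$-balanced, the recursion terminates after $O(\log n)$ levels. Chaining the per-level bound from Step 2 along the decomposition tree, one obtains
\begin{equation*}
C_{\mathrm{AT}}(\mu) \;\le\; (1+\lambda)^{O(t \log n)} \;=\; n^{O(t\log(1+\lambda))}.
\end{equation*}
Applying \cref{lem:AT-Glauber}, which converts AT of variance with constant $C$ into an $O(Cn^2)$ bound on the mixing time of the Glauber dynamics, yields the desired $n^{2 + O(t\log(1+\lambda))} = n^{O(1+t\log(1+\lambda))}$.

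\medskip

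\textbf{Main obstacle.}
Steps 1 and 3 are routine: the treewidth/separator fact is classical and the recursion arithmetic is standard. The crux is matching Step 2 to the precise form of \cref{thm:AT-sep-decomp}: the decomposition must measure the ``cost'' of a separator through a quantity that, for the hardcore model, is controlled by $(1+\lambda)^{O(s)}$ uniformly over boundary conditions, rather than through a more delicate ratio of conditional marginals that could blow up (e.g.\ when some separator vertices are forced). Using the all-zero configuration as the reference is what makes this robust, and the combinatorial nature of the framework advertised in the introduction suggests this is exactly the form the theorem takes.
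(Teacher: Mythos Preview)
Your proposal is essentially the paper's own proof: balanced separator decomposition from treewidth, the all-zero configuration on $S$ to get the $(1+\lambda)^{-|S|}$ lower bound (this is exactly \cref{lem:hardcore-decomp-factor}), $O(\log n)$ recursion depth, and \cref{lem:AT-Glauber} to finish. One small gap: \cref{thm:AT-sep-decomp} has \emph{two} hypotheses, and your Step~2 only addresses the block factorization \cref{eq:decomp-factor}; you also need approximate tensorization on the separator itself, \cref{eq:AT-separa}, which the paper handles separately in \cref{lem:hardcore-AT-sep} via the single-vertex bound $\mu_v^\tau(0)\ge (1+\lambda)^{-1}$ and \cref{lem:AT-crude-bound}, yielding $C_S\le (1+\lambda)^{t-1}$---but this is the same mechanism you already use and is absorbed into the $n^{O(t\log(1+\lambda))}$ bound, so the omission is cosmetic rather than substantive.
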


\begin{theorem}\label{thm:main-coloring-treewidth}
Let $G=(V,E)$ be an $n$-vertex graph of maximum degree $\Delta \ge 3$ and treewidth $t \ge 1$.
For any $q \ge \Delta+2$, the mixing time of the Glauber dynamics for sampling uniformly random $q$-colorings of $G$ is $n^{O(t\Delta)}$.
\end{theorem}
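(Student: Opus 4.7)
The plan is to apply the recursive separator-based framework of \cref{thm:strong-AT-sep-decomp} to a balanced tree decomposition of $G$. Since $G$ has treewidth $t$, a centroid bag of an optimal tree decomposition furnishes a balanced separator $S$ with $|S|\le t+1$ whose removal partitions the remaining vertices into pieces $V_1,\ldots,V_k$ of size $|V_i|\le \alpha n$ for some absolute $\alpha<1$. Each induced subgraph $G[V_i\cup S]$ still has treewidth at most $t$ and maximum degree at most $\Delta$, so the recursion remains inside the same graph class.

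The quantitative heart of the argument is to control the local ``separator cost'' in \cref{thm:strong-AT-sep-decomp} that multiplies the AT constants of the pieces. First I would observe that for any boundary configuration $\tau$ on $V\setminus S$, the conditional law $\mu(\,\cdot\,\mid \sigma_{V\setminus S}=\tau)$ on $S$ is uniform over the proper colorings of $G[S]$ compatible with $\tau$; by greedy coloring the number of such completions lies between $(q-\Delta)^{|S|}\ge 2^{|S|}$ and $q^{|S|}$, so every valid configuration $\rho$ on $S$ has conditional probability at least $q^{-|S|}$. Second, one must control the intrinsic AT constant of the small Gibbs distribution on $G[S]$ itself, a coloring system on at most $t+1$ vertices of degree $\le\Delta$ with $q\ge\Delta+2$; for this the coarse estimate $\exp(O(t\Delta))$ suffices and follows from a direct Glauber mixing bound on a graph of constant size. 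Combining the two gives an overall per-level separator factor of $\exp(O(t\Delta))$.

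Given these ingredients, \cref{thm:strong-AT-sep-decomp} yields a recursion roughly of the form
$$C_{\mathrm{ent}}(G)\ \le\ \exp(O(t\Delta))\,\cdot\,\max_i C_{\mathrm{ent}}(G[V_i\cup S]),$$
and iterating this $O(\log n)$ times until each remaining piece has $O(1)$ vertices (and thus a bounded AT constant) produces $C_{\mathrm{ent}}(G)\le \exp(O(t\Delta\log n))=n^{O(t\Delta)}$. Finally, by \cref{lem:AT-Glauber}, approximate tensorization of entropy with constant $C$ translates into a Glauber mixing time of $O(Cn\log n)=n^{O(t\Delta)}$, which is the claimed bound.

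The main obstacle is the second step: extracting a per-separator cost that depends on $\Delta$ rather than on $\log q$. For the hardcore analogue \cref{thm:hardcore-treewidth} this is clean, because the inverse marginal on $S$ is controlled by $(1+\lambda)^{|S|}$, yielding the $\log(1+\lambda)$ factor in the exponent. For colorings one instead has to combine the uniform-marginal observation on the separator with a small-graph AT estimate on $G[S]$ that genuinely uses $q\ge\Delta+2$ in order to absorb the apparent $\log q$ dependence into $\Delta$. Once this per-level cost is in place, the recursive application of the block-factorization theorem proceeds just as in the hardcore proof.
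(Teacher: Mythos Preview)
Your plan has a real gap in the block-factorization step. You propose to control the separator cost by noting that, conditioned on any boundary $\tau$ on $V\setminus S$, every valid coloring of $S$ has probability at least $q^{-|S|}$. But the input needed for \cref{lem:AT-strong-correlation} is a bound on $\dtv(\mu_S^\tau,\mu_S^{\tau'})$ for \emph{different} boundaries $\tau,\tau'$, and for colorings this can equal $1$: if $v\in S$ has $\Delta$ neighbors in $V\setminus S$ and $q=\Delta+2$, two boundaries can force the available colors at $v$ to be disjoint, so $\mu_S^\tau$ and $\mu_S^{\tau'}$ have disjoint supports. Unlike the hardcore model there is no ``universal'' configuration on $S$ valid under every boundary, so the $r=0$ version (i.e.\ \cref{thm:AT-sep-decomp}) does not go through. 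The paper's fix is to take $r=1$ in \cref{thm:strong-AT-sep-decomp}, enlarging the separator block to $\ball_U(S,1)$, and then to use \cref{lem:ent_fact_mar} to reduce $\{\ball_U(S,1),\,U\setminus S\}$-factorization to $\{S,\,U\setminus\ball_U(S,1)\}$-factorization of the \emph{marginal}. Now $S$ and the complementary block are non-adjacent, so any $\sigma_S$ feasible under the outer pinning stays feasible regardless of the other block, and one obtains a per-level cost of $(2^\Delta q)^{O(t)}$ via \cref{lem:coloring-decomp-factor,lem:coloring-at-sep}.

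On your acknowledged obstacle of removing $\log q$: a ``direct Glauber mixing bound on a graph of constant size'' does not yield $\exp(O(t\Delta))$, since $S$ has up to $t+1$ vertices and crude estimates give only $q^{O(t)}=\exp(O(t\log q))$. The paper does not try to remove $\log q$ inside the separator argument at all; it first proves the bound $n^{O(t(\Delta+\log q))}$ (\cref{thm:coloring-treewidth}) and then handles $q>2\Delta$ separately by standard path coupling, while for $q\le 2\Delta$ one has $\log q=O(\Delta)$, giving $n^{O(t\Delta)}$ in either case.
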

Previously, \cite{EF22+} presented mixing time bounds $n^{O(1+t \log\hat{\lambda})}$ for hardcore model where $\hat{\lambda} = \max\{\lambda,1/\lambda\}$ and $n^{O(t\Delta \log q)}$ for random $q$-colorings. 
Our mixing time bounds are better in the exponents and our proof is much simpler avoiding the technical construction of multi-commodity flows as done in \cite{EF22+}.
We note that for bounded-treewidth graphs one can exactly compute the partition function and thus sample in time $e^{O(t)} \cdot \poly(n)$, see e.g.\ \cite{YZ13,WTZL18} and the references therein. 
However, the performance of the Glauber dynamics is still unclear for such graphs which is the focus of this paper.

We establish \cref{thm:hardcore-treewidth,thm:main-coloring-treewidth} by factorizing variance recursively using a balanced separator of the graph; this is inspired by previous works on trees \cite{MSW04} and lattice \cite{Cesi01,CP21}. 
Since $G$ has treewidth $t$, it has a balanced separator of size at most $t$. 
More specifically, there exists a partition $V = A \cup B \cup S$ such that $|S| \le t$, $|A| \le 2n/3$, $|B| \le 2n/3$ and there is no edge between $A$ and $B$. Given such we are able to establish a block factorization of variance
\begin{equation}\label{eq:int-block}
\Var f \le C_0 \left( \E[\Var_A f] + \E[\Var_B f] + \E[\Var_S f] \right),
\end{equation}
where $C_0$ is a constant independent of $n$. Since the size of $S$ is bounded, we can factorize the variance on $S$ into single vertices. Then by recursively applying \cref{eq:int-block} to $A$ and $B$ respectively, we are able to prove AT of variance with constant $C = C_0^{O(\log n)} = n^{O(1)}$. 
We present our universal proof approach for AT in \cref{thm:AT-sep-decomp,thm:strong-AT-sep-decomp}, and summarize basic tools for establishing such block factorization \cref{eq:int-block} in \cref{subsec:tools,subsec:marginal}. 

We note that \cite{EF22+} also used balanced separators to construct multi-commodity flows and obtained similar results. In comparison, our approach establishes AT in a more direct way and is arguably simpler in nature.
We remark that we could also establish AT of entropy, but since we are not aiming to get an optimal exponent in the mixing time we choose to work with variance which is much easier for calculation.

As another application of our approach, we consider planar graphs and show that the \emph{Strong Spatial Mixing (SSM)} property (\cref{def:SSM}) implies nearly optimal mixing time of Glauber dynamics. SSM is an important structural property characterizing the exponential decay of correlations between two subsets of variables as their graph distance grows. There is a rich literature in establishing rapid mixing of Glauber dynamics from SSM but mostly only for special classes of graphs such as lattice, see e.g.\ \cite{Mar99,GZ03,Cesi01,DSVW04,GMP05}. 
Here we consider general bounded-degree planar graphs with no restriction on the underlying topology.
Previously, \cite{YZ13} presented a determinstic counting algorithm under the same assumptions with large polynomial running time. Our result can be viewed as a faster sampling algorithm with nearly linear running time.
We state our result for the hardcore model but it extends easily to general spin systems.

\begin{theorem}\label{thm:SSM-planar}
Let $G=(V,E)$ be an $n$-vertex planar graph of maximum degree $\Delta \ge 3$.
Consider the hardcore model on $G$ with fugacity $\lambda > 0$.
If SSM holds, then the mixing time of the Glauber dynamics is $O(n \log^4n)$. 
\end{theorem}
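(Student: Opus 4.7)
The plan is to establish approximate tensorization of entropy with constant $C = O(\log^3 n)$; by \cref{lem:AT-Glauber} this yields mixing time $O(Cn \log n) = O(n \log^4 n)$. The argument combines the paper's recursive block factorization (\cref{thm:strong-AT-sep-decomp}) with the Lipton--Tarjan planar separator theorem, using SSM to avoid the exponential blowup that the large planar separators would otherwise incur.

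The first step is to apply the planar separator theorem recursively: every $n$-vertex planar graph admits a balanced vertex separator of size $O(\sqrt{n})$, producing a hierarchical decomposition of depth $O(\log n)$. Because a naive separator factorization over $S$ would cost $e^{O(\sqrt{n})}$ per level, the key move is to replace each $S$ by a thickened separator $S^+$ consisting of the $L$-neighborhood of $S$ for $L = \Theta(\log n)$. By SSM, the two sides of $S^+$ are approximately independent conditional on the spins in $S^+$, with total variation error $e^{-\Omega(L)} = 1/\poly(n)$.

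This thickening should give a per-level block factorization of the form
\[
\Ent f \;\le\; \bigl(1 + 1/\poly(n)\bigr)\bigl(\E[\Ent_A f] + \E[\Ent_B f]\bigr) + C_1\, \E[\Ent_{S^+} f],
\]
where $A, B$ are the two sides. The multiplicative factors then compose to $O(1)$ across the $O(\log n)$ levels of the hierarchy, while the additive entropy terms on the $S^+$'s must each be further tensorized. Since $S^+$ is an induced subgraph of $G$ and therefore inherits planarity, bounded degree, and SSM, I would recursively apply the same scheme to each $S^+$, yielding polylogarithmic single-vertex factorization costs; these contributions from all levels then sum to the claimed $O(\log^3 n)$.

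The main obstacle is this nested recursion on the thickened separators. Each $S^+$ has size $|S^+| = O(|S| \cdot \Delta^L) = \poly(n)$, far too large for direct factorization into single sites, so the inner scheme must itself be a polylog-cost recursion rather than a brute-force expansion. Making this rigorous amounts to solving a two-parameter recurrence of the form $C(n) \le (1 + O(1/\log^2 n))\, C(2n/3) + C'(|S^+|)$, and carefully choosing $L$ to balance the SSM decay rate against the growth of $|S^+|$ and the depth of the hierarchy so that the final constant is $O(\log^3 n)$. This bookkeeping, together with a precise statement of a quantitative SSM-based block factorization lemma suitable for planar separators, is the technical heart of the proof.
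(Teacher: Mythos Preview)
Your high-level picture is right---SSM lets you thicken a separator so that the block factorization constant per level is $1+o(1)$---but the recursion you set up does not shrink. With the Lipton--Tarjan separator $|S|=\Theta(\sqrt{n})$ and $L=\Theta(\log n)$, the thickened piece satisfies $|S^+|\le |S|\cdot\Delta^L = n^{\Theta(1)}$, and on a bounded-degree planar graph (e.g.\ a tree) this bound is tight. So ``recursively apply the same scheme to $S^+$'' is a recursion from an instance of size $n$ to one of size $n^{\Theta(1)}$: it never reaches a base case, and the recurrence $C(n)\le (1+o(1))\,C(2n/3)+C'(|S^+|)$ cannot be solved because $C'(|S^+|)$ is as hard as $C(n)$. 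There is no choice of $L$ that fixes this: making SSM deliver a $(1+1/\poly(n))$ factor forces $L=\Omega(\log n)$, and then $\Delta^L$ swamps the $\sqrt{n}$ saving from the planar separator unless $\delta>\log\Delta$, which SSM does not guarantee.

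The paper avoids this trap by inserting an extra reduction before the separator recursion. It first applies a Linial--Saks low-diameter decomposition (\cref{lem:decomp}) to cover $V$ by pieces $\ball(V_i,r)$ of diameter $O(\log^2 n)$; SSM gives block factorization into these pieces at total cost $O(\log^2 n)$. The point is that for planar graphs, small diameter implies small treewidth (the Eppstein/Demaine--Hajiaghayi diameter--treewidth property), so each piece has treewidth $\mathrm{polylog}(n)$ and hence separators of size $\mathrm{polylog}(n)$. Now the thickening radius in \cref{thm:strong-AT-sep-decomp} only needs to be $O(\log\log n)$, so $|\ball_U(S,r)|\le |S|\cdot\Delta^{O(\log\log n)}=\log^{O(1)} n$, and the recursion genuinely goes from size $k$ to size $\mathrm{polylog}(k)$; solving $\varphi(k)\lesssim \log^2 k\cdot\varphi(\mathrm{polylog}\,k)$ gives $\varphi(n)=O(\log^3 n)$. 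The missing idea in your sketch is precisely this preliminary reduction to low-diameter (hence low-treewidth) pieces.
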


In fact, we prove \cref{thm:SSM-planar} more generally for all graphs of polynomially bounded local treewidth, see \cref{thm:local-treewidth}. 
This is a class of graphs such that all local balls of radius $r$ have treewidth $\poly(r)$.
This includes for examples bounded-treewidth graphs, planar graphs, or graphs with polynomial neighborhood growth such as regions in $\Z^d$ (see \cref{thm:growth}).
We remark that the result in \cite{YZ13} holds only for graphs of linearly bounded local treewidth and thus our result holds for a larger class of graphs.
We establish AT of entropy by combining SSM and local structure of the underlying graph. 
In particular, we show a new low-diameter decomposition of graphs (see \cref{lem:decomp}) based on a classical result of Linial and Saks \cite{LS93}, 
which allows us to focus on subgraphs of $\poly(\log n)$ diameter assuming SSM.
Note that here we have to work with entropy in order to get an $\widetilde{O}(n)$ mixing time.

\paragraph{Paper organization.} 
After giving preliminaries in \cref{sec:pre}, we present our new framework for establishing approximate tensorization in \cref{sec:approach}. We prove \cref{thm:hardcore-treewidth,thm:main-coloring-treewidth} in \cref{sec:treewidth} for Glauber dynamics on bounded-treewidth graphs. We prove \cref{thm:SSM-planar} more generally for graphs of bounded local treewidth in \cref{sec:SSM}. We present missing proofs of basic tools for approximate tensorization and block factorization in \cref{sec:proofs}.

\paragraph{Acknowledgments.}
The author thanks Kuikui Liu, Eric Vigoda, and Thuy-Duong Vuong for stimulating discussions. The author thanks Eric Vigoda for helpful comments on the manuscript. 

\section{Preliminaries}
\label{sec:pre}

\subsection{Spin systems}

We consider general families of spin systems, also known as undirected graphical models. 
Let $G = (V,E)$ be a graph. 
Every vertex $v \in V$ is associated with a finite set $\QQ_v$ of spins (colors, labels). Let $\XX = \prod_{v\in V} \QQ_v$ denote the product space of all spin assignments called configurations. 
Let $\Phi = \{\phi_e: e \in E\}$ be a collection of edge interactions such that for every edge $e=uv \in E$, $\phi_e: \QQ_u \times \QQ_v \to \R_{\ge 0}$ is a function mapping spins of the two endpoints to a non-negative weight, characterizing the interaction between neighboring vertices.
Let $\Psi = \{\psi_v: v \in V\}$ be a collection of external fields such that for every vertex $v \in V$, $\psi_v: \QQ_v \to \R_{\ge 0}$ assigns a weight to each color representing the bias towards each color. 
The induced Gibbs distribution $\mu = \mu_{G, \Phi, \Psi}$ is given by
\[
\mu(\sigma) = \frac{1}{Z} \prod_{e \in E} \phi_e(\sigma_e) \prod_{v \in V} \psi_v(\sigma_v), \quad \forall \sigma\in \XX
\]
where $\sigma_v$ denotes the color of a vertex $v \in V$ and $\sigma_e$ denotes the (partial) spin assignment of vertices in $e$, and $Z = Z_{G,\Phi,\Psi}$ is the partition function defined as
\[
Z = \sum_{\sigma \in \XX} \prod_{e \in E} \phi_e(\sigma_e) \prod_{v \in V} \psi_v(\sigma_v).
\]
We assume $Z>0$ so that the Gibbs distribution is well-defined. 

For a subset $W \subseteq V$ let $\mu_W$ denotes the marginal distribution projected on $W$. 
We say a spin configuration $\eta \in \prod_{v\in\Lambda} \QQ_v$ on some subset $\Lambda \subseteq V$ is a (feasible) pinning if $\mu_\Lambda(\eta) > 0$. 
For any pinning $\eta$ on $\Lambda \subseteq V$, the conditional Gibbs distribution $\mu^\eta$, where the configuration on $\Lambda$ is fixed to be $\eta$, can be viewed as another spin system on the graph $G \setminus \Lambda$. 
For any $W \subseteq V \setminus \Lambda$ we further define $\mu^\eta_W$ to be the marginal on $W$ under the conditional distribution $\mu^\eta$.

The Glauber dynamics is one of the simplest and most popular Markov chain Monte Carlo algorithms for sampling from the Gibbs distribution of a spin system. 
In each step of Glauber dynamics, we pick a vertex $v \in V$ uniformly at random and update the spin value $\sigma_v$ conditional on the configuration $\sigma_{V \setminus \{v\}}$ of all other vertices, i.e., from the conditional marginal $\mu_v^{\sigma_{V\setminus \{v\}}}$.
Two distinct configurations $\sigma,\tau \in \XX$ are said to be adjacent iff they differ in the spin value at a single vertex. 
Hence, for Glauber dynamics configurations only move to adjacent ones. 
We assume that under any pinning $\eta$, the Glauber dynamics for the conditional distribution $\mu^\eta$ is irreducible; namely, one feasible configuration can move to another through a chain of feasible configurations where consecutive pairs are adjacent. 
This is necessary for the Glauber dynamics to be ergodic and is naturally true for many spin systems of interest, including the hardcore model, random $q$-colorings with $q\ge \Delta+2$, etc.

\subsection{Block factorization of variance and entropy}

Consider the Gibbs distribution $\mu$ of a spin system $(G,\Phi,\Psi)$ defined on a graph $G = (V,E)$ with state space $\XX = \prod_{v\in V} \QQ_v$.

\begin{definition}
Let $f: \XX \to \R$ be a function.

\begin{itemize}
\item The expectation of $f$ with respect to $\mu$ is defined as 
$\E_\mu f = \sum_{\sigma \in \XX} \mu(\sigma) f(\sigma).$

\item The variance of $f$ with respect to $\mu$ is defined as 
$\Var_\mu f = \E[(f-\E f)^2].$ 

\item For non-negative $f$, the entropy of $f$ with respect to $\mu$ is defined as 
$\Ent_\mu f = \E[f \log f] - (\E f) \log(\E f),$ 
with the convention that $0\log 0 = 0$. 
\end{itemize}
 
\end{definition}

We often omit the subscript and write $\E f,\Var f,\Ent f$ when the underlying distribution $\mu$ is clear from context.

Let $B \subseteq V$ be a subset of vertices. 
For any pinning $\eta$ on $V \setminus B$, the expectation, variance, and entropy of a function $f$ with respect to the conditional Gibbs distribution $\mu^\eta_B$ is denoted as $\E^\eta_B f$, $\Var^\eta_B f$, and $\Ent^\eta_B f$; in particular, we view all of them as functions mapping a full configuration $\sigma \in \XX$ to a real, which depends only on the pinning $\eta = \sigma_{V \setminus B}$.

The following lemma summarizes several basic and important properties of variance and entropy in spin systems. The proof can be found in \cite{MSW04} and the references therein.

\begin{lemma}[{\cite[Eq.\ (3), (4), (5)]{MSW04}}]
\label{lem:MSW04}
Let $f: \XX \to \R_{\ge 0}$ be a function. 
\begin{enumerate}[(1)]
\item For any subsets $B \subseteq A\subseteq V$, it holds that $\E[\Ent_A f] = \E[\Ent_B f] + \E [\Ent_A(\E_B f)]$;
\item For $B = \bigcup_{i=1}^m B_i$ where $B_1,\dots,B_m \subseteq V$ are pairwise disjoint subsets such that every distinct pair of $B_i,B_j$ are disconnected in $G[B]$, it holds that $\E[\Ent_B f] \le \sum_{i=1}^m \E[\Ent_{B_i} f]$;
\item For any subsets $A,B \subseteq V$ such that there are no edges between $A$ and $B \setminus A$, it holds that $\E\left[ \Ent_A (\E_B f) \right] \le \E\left[ \Ent_A (\E_{A \cap B} f) \right]$.
\end{enumerate}
All three properties (1), (2), (3) hold with variance as well.
\end{lemma}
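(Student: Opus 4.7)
\emph{Plan.} All three claims will be derived from the same three ingredients: (i) the tower/chain rule for conditional expectations, (ii) the Markov property of the Gibbs distribution --- blocks with no edges between them become conditionally independent once a separating pinning is fixed, and (iii) convexity of the functionals $\Ent$ and $\Var$ in their (nonnegative) argument. I will treat the entropy version; the variance case then follows by replacing $\Ent$ with $\Var$ throughout and invoking the analogous elementary facts (law of total variance, Efron--Stein on a product, convexity of $\Var$ in the function).

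\emph{Property (1).} Fix a pinning $\eta$ on $V \setminus A$ and view $\mu^\eta_A$ as a distribution on the product $\sigma_A = (\sigma_{A \setminus B}, \sigma_B)$. The classical chain rule on this product space gives
\[
\Ent^\eta_A(f) = \E^\eta_{A \setminus B}\!\left[\Ent^{(\eta,\sigma_{A \setminus B})}_B(f)\right] + \Ent^\eta_{A \setminus B}\!\left[\E^{(\eta,\sigma_{A \setminus B})}_B(f)\right].
\]
I would then average over $\eta \sim \mu_{V \setminus A}$ and invoke the tower rule (sampling $V \setminus A$ followed by $A \setminus B$ is the same as sampling $V \setminus B$) to turn the first summand into $\E[\Ent_B(f)]$. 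For the second, $\E_B f$ does not depend on $\sigma_B$ since $B \subseteq A$, so $\Ent^\eta_A(\E_B f) = \Ent^\eta_{A \setminus B}(\E_B f)$, and averaging yields $\E[\Ent_A(\E_B f)]$.

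\emph{Property (2).} Here I would invoke the Markov property: since the $B_i$'s are pairwise disconnected in $G[B]$, once the boundary $V \setminus B$ is pinned to $\eta$ the joint weight factorizes across the $B_i$, i.e.\ $\mu^\eta_B = \bigotimes_i \mu^\eta_{B_i}$. The classical tensorization of entropy under a product measure, $\Ent_{\otimes_i \nu_i}(f) \le \sum_i \E_{\otimes_{j \neq i} \nu_j}[\Ent_{\nu_i}(f)]$, then gives $\Ent^\eta_B(f) \le \sum_i \E^\eta_{B \setminus B_i}\!\left[\Ent^{(\eta,\sigma_{B \setminus B_i})}_{B_i}(f)\right]$, and averaging over $\eta$ with the tower rule delivers the claim.

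\emph{Property (3).} This will be the step requiring the most care. Partition $V = U \sqcup C \sqcup D \sqcup W$ with $U = V \setminus (A \cup B)$, $C = A \setminus B$, $D = A \cap B$, $W = B \setminus A$. Since no edges join $A = C \cup D$ and $W$, the global Markov property of $\mu$ yields the conditional independence $\sigma_A \perp \sigma_W \mid \sigma_U$. Writing $g := \E_B f$ (which depends only on $(\sigma_U, \sigma_C)$) and $h := \E_{A \cap B} f$ (which depends on $(\sigma_U, \sigma_C, \sigma_W)$), the tower rule combined with this conditional independence expresses $g$ as a $\sigma_C$-independent average of $h$:
\[
g(\sigma_U, \sigma_C) = \sum_{\sigma_W} \mu^{\sigma_U}_W(\sigma_W)\, h(\sigma_U, \sigma_C, \sigma_W).
\]
Convexity of the entropy functional in its nonnegative argument (a consequence of the log-sum inequality) then gives, pointwise in $\eta = (\sigma_U, \sigma_W)$,
\[
\Ent^\eta_A(g) \le \sum_{\sigma_W'} \mu^{\sigma_U}_W(\sigma_W')\, \Ent^\eta_A\!\left(h(\cdot,\cdot,\sigma_W')\right),
\]
and averaging over $\eta$ using the factorization $\mu_{V \setminus A}(\sigma_U, \sigma_W) = \mu_U(\sigma_U)\, \mu^{\sigma_U}_W(\sigma_W)$ delivers the desired inequality. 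The main obstacle will be keeping the notation honest so that the Markov property collapses the correct conditionings; once it does, the remaining ingredient is just Jensen applied to a convex functional.
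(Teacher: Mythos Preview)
Your proposal is correct and matches the standard argument from \cite{MSW04}: the chain rule for (1), product tensorization via the Markov property for (2), and convexity of $\Ent$ combined with the conditional independence $\sigma_A \perp \sigma_W \mid \sigma_U$ for (3). The paper itself gives no proof of this lemma, simply citing \cite{MSW04}, so there is nothing further to compare; the one point worth making explicit when you write this up is that in (3) the functional $\Ent_A^{\eta}$ does not depend on the $\sigma_W$-coordinate of $\eta$ (again by conditional independence), which is what allows the final averaging step to collapse correctly.
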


\begin{definition}
Let $\BB$ be a collection (possibly multiset) of subsets of $V$. 
We say that $\mu$ satisfies \emph{$\BB$-factorization of variance (resp., entropy)} with multiplier $C$ if for every function $f: \XX \to \R$ (resp., $f: \XX \to \R_{\ge 0}$) it holds
\begin{equation}\label{eq:block-factor}
\Var f \le C \sum_{B \in \BB} \E[\Var_{B} f] 
\qquad
\Big(\Big.~ \text{resp.,~~} \Ent f \le C \sum_{B \in \BB} \E[\Ent_{B} f] ~\Big.\Big).
\end{equation}
\end{definition}

\begin{remark}
Following \cite{BGGS22}, we call $C$ the ``multiplier'' instead of ``constant'' since it may depend on $n$. 
\end{remark}
\begin{remark}
The block factorization of variance/entropy can be defined more generally for weighted blocks; we refer to \cite{CP21} for more details.
\end{remark}

The $\BB$-factorization corresponds to the heat-bath block dynamics for sampling from $\mu$: In each iteration, we pick $B \in \BB$ uniformly at random and update $\sigma_B$ from the conditional distribution $\mu_B^{\sigma_{V \setminus B}}$.
More specifically, the $\BB$-factorization of variance is equivalent to the Poincar\'{e} inequality of such block dynamics and the $\BB$-factorization of entropy implies the modified log-Sobolev inequality (but the converse is not true). See \cite{CMT15,CP21}.

%

\begin{definition}
We say that $\mu$ satisfies \emph{approximate tensorization (AT) of variance (resp., entropy)} with multiplier $C$ if for every function $f: \XX \to \R$ (resp., $f: \XX \to \R_{\ge 0}$) it holds
\begin{equation}
\Var f \le C \sum_{v \in V} \E[\Var_{v} f] 
\qquad
\Big(\Big.~ \text{resp.,~~} \Ent f \le C \sum_{v \in V} \E[\Ent_{v} f] ~\Big.\Big).
\end{equation}
\end{definition}

Observe that AT is exactly $\BB$-factorization with $\BB = \{\{v\}: v \in V\}$.
Establishing AT with a small $C$ allows us to conclude rapid mixing of Glauber dynamics, see \cite{CLV21} and the references therein.

\begin{lemma}\label{lem:AT-Glauber}
Suppose it holds that $\min_{\sigma \in \XX:\, \mu(\sigma)>0} \mu(\sigma) = e^{-O(n)}$. 
If $\mu$ satisfies AT of variance with multiplier $C$, then the mixing time of Glauber dynamics is $O(Cn^2)$.
If $\mu$ satisfies AT of entropy with multiplier $C$, then the mixing time of Glauber dynamics is $O(Cn\log n)$.
\end{lemma}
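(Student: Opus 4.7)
The plan is to translate each approximate tensorization inequality into a standard functional inequality for the Glauber dynamics transition kernel $P$, and then invoke well-known mixing time bounds for reversible Markov chains. The bridge is the identity
\[
\mathcal{E}_P^{\Var}(f) \;:=\; \tfrac{1}{2} \sum_{\sigma,\tau \in \XX} \mu(\sigma) P(\sigma,\tau) \bigl(f(\sigma)-f(\tau)\bigr)^2 \;=\; \frac{1}{n} \sum_{v \in V} \E[\Var_v f],
\]
with the analogous identity $\mathcal{E}_P^{\Ent}(f) = \tfrac{1}{n} \sum_{v} \E[\Ent_v f]$ in the entropy setting. Both identities follow from a routine unpacking of the definition of Glauber dynamics, using that $P$ picks a vertex uniformly and then resamples from the conditional marginal.

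For the variance statement, AT of variance with multiplier $C$ is then precisely the Poincar\'e inequality $\Var f \le Cn \cdot \mathcal{E}_P^{\Var}(f)$, so the spectral gap of $P$ satisfies $\lambda \ge 1/(Cn)$. The standard spectral-gap mixing bound for reversible chains gives $t_{\mathrm{mix}}(1/4) \le \lambda^{-1} \log(4/\mu_{\min})$, and the hypothesis $\mu_{\min} \ge e^{-O(n)}$ turns this into $O(Cn) \cdot O(n) = O(Cn^2)$, as required.

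For the entropy statement, AT of entropy with multiplier $C$ is the (discrete) modified log-Sobolev inequality $\Ent f \le Cn \cdot \mathcal{E}_P^{\Ent}(f)$, hence the MLSI constant satisfies $\rho \ge 1/(Cn)$. This yields exponential decay of relative entropy, $\DKL(\mu^{(t)} \,\|\, \mu) \le e^{-\rho t} \DKL(\mu^{(0)} \,\|\, \mu)$, and starting from a point mass gives $\DKL(\mu^{(0)} \,\|\, \mu) \le \log(1/\mu_{\min}) = O(n)$. Applying Pinsker's inequality $\TV{\mu^{(t)}}{\mu} \le \sqrt{\DKL(\mu^{(t)}\,\|\,\mu)/2}$ and choosing $t = \rho^{-1} \cdot O(\log n)$ drives the total variation distance below $1/4$, giving mixing time $O(Cn \log n)$.

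The main subtlety, and the only step that is not a one-line invocation, is the MLSI-to-TV conversion: one must use the correct discrete-time version of the entropy decay (the continuous-time version is cleanest, and one then transfers to discrete time through a standard comparison, or alternatively one uses the known fact that the discrete MLSI itself yields entropy contraction). Beyond this bookkeeping, both halves of the lemma reduce immediately to textbook mixing estimates once the Dirichlet form identities above are in place, so I would cite standard references (e.g., the surveys discussed in \cite{CMT15,CLV21}) rather than reprove these conversions from scratch.
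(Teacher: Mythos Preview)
The paper does not actually prove this lemma: it is stated with a pointer to \cite{CLV21} and the references therein, and used as a black box throughout. Your proposal is essentially the standard argument that those references contain, and it is correct.

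One small terminological quibble: you write that AT of entropy ``is'' the modified log-Sobolev inequality via the identity $\mathcal{E}_P^{\Ent}(f)=\tfrac{1}{n}\sum_v \E[\Ent_v f]$. Strictly speaking this quantity is not the usual MLSI Dirichlet form $\mathcal{E}(f,\log f)$, and indeed the paper itself notes that block factorization of entropy \emph{implies} MLSI but is not equivalent to it. However, your argument does not actually need this identification: from AT of entropy and convexity of $\Ent$ one gets directly the discrete entropy contraction
\[
\Ent(Pf)\;\le\;\frac{1}{n}\sum_{v}\Ent(\E_v f)\;=\;\Ent f-\frac{1}{n}\sum_v \E[\Ent_v f]\;\le\;\Bigl(1-\tfrac{1}{Cn}\Bigr)\Ent f,
\]
after which Pinsker and the hypothesis $\mu_{\min}\ge e^{-O(n)}$ give the $O(Cn\log n)$ bound exactly as you describe. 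So the substance of your proof is fine; only the labeling of the intermediate inequality as ``the MLSI'' is slightly loose.
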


%
%
%

\subsection{Separator decomposition}

We prove AT via a divide-and-conquer argument. 
To accomplish this we need the following separator decomposition, slightly modified from \cite{YZ13}. Such ideas also appeared in many previous works to obtain fixed-parameter tractable algorithms in graphs of bounded treewidth. 

\begin{definition}[Separator Decomposition, \cite{YZ13}]
For a graph $G=(V,E)$, a separator decomposition tree $T_{\mathsf{SD}}$ of $G$ is a rooted tree satisfying the following conditions:
\begin{itemize}
\item Every node of $T_{\mathsf{SD}}$ is a pair $(U,S)$ where $U$ is a subset of vertices and $S$ is a separator of $G[U]$;

\item The root node of $T_{\mathsf{SD}}$ is a pair $(V,S_V)$;

\item For every non-leaf node $(U,S)$, the children of $(U,S)$ are connected components of $G[U \setminus S]$ and their separators; 

\item Every leaf of $T_{\mathsf{SD}}$ is a pair $(U,U)$.
\end{itemize}
A separator decomposition tree is said to be balanced if for every internal node $(U,S)$ and every child $(U',S')$ of $(U,S)$, it holds that $|U'| \le 2|U|/3$. 
\end{definition}

\begin{remark}\label{rem:sep-partition}
Observe that all separators $S$ appearing in $T_{\mathsf{SD}}$ form a partition of $V$.
\end{remark}

It is easy to see that the height of a balanced separator decomposition tree is $O(\log n)$.

\begin{lemma}\label{lem:height}
Suppose $G=(V,E)$ is an $n$-vertex graph with $n\ge 2$. 
If $T_{\mathsf{SD}}$ is a balanced separator decomposition tree of $G$, then the height of $T_{\mathsf{SD}}$ is less than $3\log n$.
\end{lemma}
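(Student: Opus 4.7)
The plan is a straightforward induction-on-depth argument that tracks how fast the ground set shrinks along any root-to-leaf path. Let $(U_0,S_0), (U_1,S_1), \dots, (U_h,S_h)$ be any such path, with $(U_0,S_0) = (V, S_V)$ the root and $(U_h,S_h) = (U_h, U_h)$ a leaf. By the balanced-separator assumption, $|U_{i+1}| \le \frac{2}{3}|U_i|$ for every $i$, so by induction $|U_i| \le \left(\frac{2}{3}\right)^i n$ for every $i$.

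Next I would observe that a leaf must have $|U_h| \ge 1$: at a leaf we have $U_h = S_h$, and since $S_h$ is defined as a separator of the induced subgraph $G[U_h]$ arising in the decomposition, $U_h$ is the nonempty vertex set of a connected component of its parent's $G[U_{h-1} \setminus S_{h-1}]$ (and the root case $n \ge 2$ gives $|U_0| = n \ge 1$ directly). Combining with the geometric decay yields
\[
1 \le |U_h| \le \left(\frac{2}{3}\right)^h n,
\]
hence $h \le \log n / \log(3/2)$.

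Finally I would verify the numerical constant. Regardless of whether $\log$ denotes $\log_2$ or $\ln$, one checks $1/\log(3/2) < 3$: for base $2$, $\log_2(3/2) = \log_2 3 - 1 > 0.58$, so $1/\log_2(3/2) < 1.72 < 3$; for natural logarithm, $\ln(3/2) > 0.405$, so $1/\ln(3/2) < 2.47 < 3$. Therefore $h < 3 \log n$, which is the claimed bound.

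There is essentially no obstacle here; the only small care needed is to justify the leaf lower bound $|U_h| \ge 1$ (immediate from the definition, since leaves are pairs $(U,U)$ arising from nonempty connected components) so that the geometric inequality can be inverted. The rest is a one-line calculation.
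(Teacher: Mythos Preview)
Your proof is correct and follows essentially the same argument as the paper: track the geometric decay $|U_i|\le(2/3)^i n$ along a root-to-leaf path, use $|U_h|\ge 1$ at the leaf, and invert to get $h\le \log n/\log(3/2)<3\log n$. The paper phrases this as a contradiction (assume $h\ge 3\log n$ and derive $|U|<1$), but the content is identical.
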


\begin{proof}
Assume $h \ge 3\log n$. 
Let $(U,U)$ be a leaf of $T_{\mathsf{SD}}$ at distance $h$ from the root node $(V,S_V)$. 
Since all separators are balanced, we have $1 \le |U| \le (2/3)^h n \le n^{1-3\log(3/2)} < 1$, contradiction. 
\end{proof}

\section{Combinatorial Approach for Approximate Tensorization}
\label{sec:approach}

\subsection{Approximate tensorization via separator decomposition}

Our main step for establishing approximate tensorization of variance and entropy is given by the following proposition. 
Roughly speaking, if we can find a (small) separator $S\subseteq V$ of the underlying graph $G$ whose removal disconnects $G$, then we can factorize variance/entropy into the block $S$ and all connected components of $V \setminus S$, whose sizes are significantly smaller if the separator $S$ is balanced. 
Given a (balanced) separator decomposition tree, we can continue this process for each smaller block, and in the end tensorize into single vertices.

\begin{proposition}
\label{thm:AT-sep-decomp}
Let $(G,\Phi,\Psi)$ be a spin system defined on a graph $G=(V,E)$ with associated Gibbs distribution $\mu$.
Suppose that $T_{\mathsf{SD}}$ is a separator decomposition tree of $G$ satisfying:
\begin{enumerate}
\item (Block Factorization for Decomposition) 
For every node $(U,S)$, there exists $C_{U,S} \ge 1$, such that for any function $f: \XX \to \R$ we have
\begin{equation}\label{eq:decomp-factor}
\E[\Var_U f] \le C_{U,S} \left( \E\left[ \Var_S f \right] + \E\left[ \Var_{U \setminus S} f \right] \right).
\end{equation}
For all leaves $(U,U)$ we take $C_{U,U} = 1$.

\item (Approximate Tensorization for Separators) 
For every node $(U,S)$, there exists $C_{S} \ge 1$, such that for any function $f: \XX \to \R$ we have
\begin{equation}\label{eq:AT-separa}
\E\left[ \Var_S f \right] \le C_{S} \sum_{v \in S} \E[\Var_v f]. 
\end{equation}
\end{enumerate}
Then the Gibbs distribution $\mu$ satisfies approximate tensorization of variance with multiplier $C$ given by
\[
C = \max_{(U,S)} \left\{ C_{S} \prod_{(U',S')} C_{U',S'} \right\},
\]
where the maximum is taken over all nodes of $T_{\mathsf{SD}}$, and the product is over all nodes $(U',S')$ in the unique path from the root $(V,S_V)$ to $(U,S)$. 
Namely, for every function $f: \XX \to \R$ we have
\begin{align*}
\Var f 
&\le C \sum_{v \in V} \E[\Var_v f].
\end{align*}
\end{proposition}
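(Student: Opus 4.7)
The plan is to unfold the recursion implicit in $T_{\mathsf{SD}}$ by a bottom-up induction on subtrees: property 1 peels off each separator $S$ from its containing block $U$, and \cref{lem:MSW04}(2) for variance splits $\E[\Var_{U \setminus S} f]$ into the variances over the connected components of $G[U \setminus S]$, which are precisely the children blocks of $(U, S)$ in $T_{\mathsf{SD}}$. Concretely, I would prove by induction on the height of the subtree $T_{(U,S)}$ rooted at $(U, S)$ the claim
$$\E[\Var_U f] \le \sum_{(U^*, S^*) \in T_{(U,S)}} \left(\prod_{(U', S') \in \mathrm{path}((U,S) \to (U^*,S^*))} C_{U', S'}\right) \E[\Var_{S^*} f],$$
where the inner product is over all nodes on the unique downward path from $(U, S)$ to $(U^*, S^*)$, inclusive.

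The base case of a leaf $(U, U)$ is trivial: since $C_{U,U} = 1$ the claim reduces to $\E[\Var_U f] \le \E[\Var_U f]$. For the inductive step at an internal node $(U, S)$ with children $(U_1, S_1), \ldots, (U_k, S_k)$, where $U_1, \ldots, U_k$ are the connected components of $G[U \setminus S]$, property 1 gives $\E[\Var_U f] \le C_{U,S}\bigl(\E[\Var_S f] + \E[\Var_{U\setminus S} f]\bigr)$; since the $U_i$ are pairwise disconnected in $G[U \setminus S]$, \cref{lem:MSW04}(2) for variance yields $\E[\Var_{U\setminus S} f] \le \sum_{i=1}^k \E[\Var_{U_i} f]$. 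Applying the inductive hypothesis to each $\E[\Var_{U_i} f]$ and absorbing the factor $C_{U,S}$ into each resulting path product completes the step.

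To finish, apply the claim at the root $(V, S_V)$, using $\E[\Var_V f] = \Var f$, and then bound each $\E[\Var_{S^*} f]$ by $C_{S^*} \sum_{v \in S^*} \E[\Var_v f]$ via property 2. By \cref{rem:sep-partition}, the separators $S^*$ form a partition of $V$, so each vertex $v \in V$ contributes to exactly one term, and its coefficient has the form $C_{S^*} \prod_{(U', S')} C_{U', S'}$ with the product taken along the root-to-$(U^*, S^*)$ path; this is at most the stated multiplier $C$. I expect the only real care to lie in the index bookkeeping of the path products; the combinatorial heart of the argument is the partition property of separators together with the two hypothesized inequalities, so there is no conceptually difficult step beyond that.
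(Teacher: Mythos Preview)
Your proposal is correct and follows essentially the same approach as the paper: the paper unfolds the recursion top-down ``level by level'' while you phrase it as a bottom-up induction on subtrees, but both arguments use assumption~1 to peel off each separator, \cref{lem:MSW04}(2) to split $\E[\Var_{U\setminus S} f]$ into the children blocks, and then assumption~2 together with the partition property of \cref{rem:sep-partition} to collect single-vertex terms with exactly the stated coefficients.
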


\begin{remark}
The entropy version of \cref{thm:AT-sep-decomp} is also true and the proof is exactly the same with $\Ent(\cdot)$ replacing $\Var(\cdot)$.
\end{remark}


The proof of \cref{thm:AT-sep-decomp} is straightforwardly applying properties of variance and entropy given in \cref{lem:MSW04}. 
We use it as our basic strategy for obtaining meaningful AT bounds in many applications. 

\begin{proof}[Proof of \cref{thm:AT-sep-decomp}]
The lemma follows by decomposing the variance level by level on the separator decomposition tree $T_{\mathsf{SD}}$ by \cref{lem:MSW04,eq:decomp-factor,eq:AT-separa}. 
More specifically, for the root $(V,S_V)$ we have that
\begin{align*}
\Var f &\le C_{V,S_V} \left( \E\left[ \Var_S f \right] + \E\left[ \Var_{V\setminus S} f \right] \right) \\
&\le C_{V,S_V} C_{S_V} \sum_{v \in S_V} \E\left[ \Var_v f \right] + C_{V,S_V} \sum_{U: \text{ c.c.\ of $G[V\setminus S_V]$}}\E\left[ \Var_U f \right], 
\end{align*}
where every $U$ is a connected component of $G[V \setminus S_V]$ and we can factorize $\E\left[ \Var_{V\setminus S_V} f \right]$ without loss since it is a product distribution (\cref{lem:MSW04}). In particular, $U$ (together with its separator) is a child of $(V,S_V)$ in $T_{\mathsf{SD}}$. 
Continue the process for each child $(U,S_U)$, we obtain
\begin{align*}
\E\left[ \Var_U f \right] 
&\le C_{U,S_U} \left( \E\left[ \Var_{S_U} f \right] + \E\left[ \Var_{U\setminus S_U} f \right] \right) \\
&\le C_{U,S_U} C_{S_U} \sum_{v \in S_U} \E\left[ \Var_v f \right] + C_{U,S_U} \sum_{W: \text{ c.c.\ of $G[U\setminus S_U]$}}\E\left[ \Var_W f \right], 
\end{align*}
where every $W$ is a connected component of $G[U \setminus S_U]$. 
So in the end we obtain
\[
\Var f \le \sum_{v \in V} C_v \E\left[ \Var_v f \right] \le C \sum_{v \in V} \E\left[ \Var_v f \right],
\]
where for each $v$, 
\[
C_v = C_{S} \prod_{(U',S')} C_{U',S'} \le C
\]
where $(U,S)$ is the unique node such that $v \in S$ (see \cref{rem:sep-partition}) and the product runs through all $(U',S')$ on the unique path from $(V,S_V)$ to $(U,S)$.
This shows the lemma.
\end{proof}

\subsection{Tools for factorization of variance and entropy}
\label{subsec:tools}

To apply \cref{thm:AT-sep-decomp}, one needs to establish block factorization of variance/entropy for decomposition \cref{eq:decomp-factor} and approximate tensorization for separators \cref{eq:AT-separa}. 
In this subsection, we summarizes known and gives new results for factorization of variance and entropy in a very general setting, which are useful for showing \cref{eq:decomp-factor,eq:AT-separa}.
The lemmas in this subsection are suitable for establishing AT or block factorization for disjoint blocks; for overlapping blocks we also need \cref{lem:ent_fact_mar} from \cref{subsec:marginal}.

\paragraph{Two-variable factorization with weak correlation}

We first consider AT for two variables.
Let $X$ and $Y$ be two random variables with joint distribution $\pi=\pi_{XY}$, fully supported on finite state spaces $\XX$ and $\YY$ respectively. 
For applications such as proving \cref{eq:decomp-factor}, $X,Y$ would represent a block of vertices, namely $X = \sigma_S$ and $Y = \sigma_{U \setminus S}$, and we consider the joint distribution of $(X,Y) = \sigma_U$ under an arbitrary pinning $\eta$ outside $U$. 

Denote the marginal distribution of $X$ by $\pi_X$, and for $y \in \YY$ let $\pi_X^y$ be the distribution of $X$ conditioned on $Y = y$.
We define the marginal distribution $\pi_Y$ and for $x \in \XX$ the conditional distribution $\pi_Y^x$ in the same way.

We use $\E f= \E_\pi f$, $\Var f = \Var_\pi f$, and $\Ent f = \Ent_\pi f$ to denote the expectation, variance, and entropy of some function $f: \XX \times \YY \to \R$ or $\R_{\ge 0}$ under the distribution $\pi$, and use $\E_X^y f$, $\Var_X^y f$, and $\Ent_X^y f$ to denote the variance and entropy under the conditional distribution $\pi_X^y$.
As before, $\E[\Var_X(f)]$ and $\E[\Ent_X(f)]$ represent the expectation of $\Var_X^Y f$ and $\Ent_X^Y f$ where $Y$ is chosen from $\pi_Y$.

We first show AT for $\pi$ when the correlation between $X$ and $Y$ is bounded pointwisely. 
The following lemma is implicitly given in \cite{Cesi01,DPP02}. 
Here we give a self-contained, simplified proof with an improved constant.
The lemma below can also be applied to the main results from \cite{Cesi01,DPP02}.

\begin{lemma}[{\cite[Proposition 2.1]{Cesi01} and \cite[Lemma 5.1 \& 5.2]{DPP02}}]
\label{lem:AT-weak-correlation}
Suppose there exists a real $\eps \in[0,1/2)$ such that for all $y \in \YY$, 
\begin{equation}\label{eq:AT-wc-cond}
\left| \frac{\pi_X^y(x)}{\pi_X(x)} - 1 \right| \le \eps, \quad \forall x \in \XX.
\end{equation}
Then we have
\begin{align}
\Var f &\le \left( 1+\frac{\eps}{1-2\eps} \right) \left( \E[\Var_X f] + \E[\Var_Y f] \right), 
\quad \forall f: \XX \times \YY \to \R \label{eq:var-AT-wc-goal}\\
\text{and}\quad \Ent f &\le \left( 1+\frac{\eps}{1-2\eps} \right) \left( \E[\Ent_X f] + \E[\Ent_Y f] \right), 
\quad \forall f: \XX \times \YY \to \R_{\ge 0}. \label{eq:AT-wc-goal}
\end{align}
\end{lemma}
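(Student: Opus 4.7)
The plan is to prove both \cref{eq:var-AT-wc-goal} and \cref{eq:AT-wc-goal} in parallel using the same skeleton, with Cauchy--Schwarz in the variance version replaced by a log-sum / Jensen-type inequality in the entropy version. The starting point in each case is the two-variable chain rule,
\[
\Var_\pi f = \E_\pi[\Var_X f] + \Var_{\pi_Y}[h],
\qquad
\Ent_\pi f = \E_\pi[\Ent_X f] + \Ent_{\pi_Y}[h],
\]
where $h(y) := \E_X^y f$. The $\E_\pi[\Var_X f]$ (resp.\ $\E_\pi[\Ent_X f]$) term is already in the form we want, so the entire argument is devoted to bounding the ``top-level'' term $\Var_{\pi_Y}[h]$ (resp.\ $\Ent_{\pi_Y}[h]$) by $\E_\pi[\Var_Y f]$ (resp.\ $\E_\pi[\Ent_Y f]$), up to a small $\eps$-dependent slack that will be absorbed back into $\E_\pi[\Var_X f]$ (resp.\ $\E_\pi[\Ent_X f]$).

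The engine of the argument is a comparison with the product distribution $\tilde\pi := \pi_X \otimes \pi_Y$. Writing $\pi_X^y(x) = \pi_X(x)(1+\delta(x,y))$ with $|\delta|\le\eps$ and $\E_{\pi_X}[\delta(\cdot,y)]=0$, set $\bar h(y) := \E_{\pi_X}[f(\cdot,y)]$. The centering of $\delta$ gives the exact identity
\[
h(y)-\bar h(y) = \sum_x \pi_X(x)\,\delta(x,y)\,\bigl(f(x,y)-\bar h(y)\bigr),
\]
and Cauchy--Schwarz (with $\sum_x \pi_X(x)\,\delta(x,y)^2 \le \eps^2$) yields the pointwise estimate $(h(y)-\bar h(y))^2 \le \eps^2\,\Var_{\pi_X}[f(\cdot,y)]$. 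Since $\bar h$ is averaged against the marginal $\pi_X$, convexity of variance gives the product-distribution bound $\Var_{\pi_Y}[\bar h] \le \E_{\pi_X}[\Var_{\pi_Y}[f(X,\cdot)]]$. The density-ratio hypothesis also lets one transfer marginal variances into conditional variances through the symmetric representation $\Var_{\pi_X}[g] = \tfrac{1}{2}\sum_{x,x'}\pi_X(x)\pi_X(x')(g(x)-g(x'))^2$, at a cost no worse than a factor of $1/(1-\eps)$ per variable. Plugging all of this into $\Var_{\pi_Y}[h] \le (1+\alpha)\Var_{\pi_Y}[\bar h] + (1+\alpha^{-1})\E_{\pi_Y}[(h-\bar h)^2]$ and optimizing over $\alpha$ should deliver the multiplier $1+\eps/(1-2\eps) = (1-\eps)/(1-2\eps)$ claimed by the lemma.

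For the entropy case I would mirror the above template, using the chain rule for entropy, the tensorization of entropy under the product $\tilde\pi$ (which also holds with constant $1$), and a log-sum / Pinsker-style comparison in place of Cauchy--Schwarz. The main obstacle I anticipate is purely quantitative: nailing down the \emph{exact} constant $1+\eps/(1-2\eps)$ rather than a marginally looser one like $(1+\eps)(1+\eps^2)/(1-\eps)^2$ requires that (i) the free parameter $\alpha$ in the two-term splitting be tuned so the coefficients of $\E_\pi[\Var_X f]$ and $\E_\pi[\Var_Y f]$ simultaneously cap at $C^*$, and (ii) the marginal-to-conditional transfer be paid only once in the final bound rather than symmetrically in both variables. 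A further subtlety in the entropy case is that the perturbation $h-\bar h$ must be controlled in a Bregman/entropic seminorm rather than $L^2$; I expect this to be handled by a small detour through $\chi^2(\pi_X^y,\pi_X) \le \eps^2$ together with the fact that, up to the density-ratio distortion, $\chi^2$ dominates the relative entropy one needs. Closing this last gap cleanly, and checking that the constant improves upon those stated in \cite{Cesi01,DPP02}, is where the bookkeeping will be most delicate.
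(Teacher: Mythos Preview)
Your route is genuinely different from the paper's, and while it is workable for variance, the entropy half of your plan has a real gap. The paper does \emph{not} run variance and entropy in parallel: it proves only \cref{eq:AT-wc-goal} and obtains \cref{eq:var-AT-wc-goal} by the standard linearization $f\mapsto 1+\theta g$, $\theta\to 0$. For entropy it works symmetrically rather than via your one-sided chain rule. After normalizing $\E f=1$ it sets $D:=\Ent f-\Ent[\E_X f]-\Ent[\E_Y f]$ and shows the equivalent inequality $D\ge -\eps\bigl(\Ent[\E_X f]+\Ent[\E_Y f]\bigr)$. The three ingredients are: (i) the elementary bound $a\log(a/b)\ge a-b$, which gives $D\ge -\E\bigl[(\E_X f-1)(\E_Y f-1)\bigr]$; (ii) writing $\nu:=f\pi$ and splitting $\XX\times\YY$ by the signs of $\nu_X/\pi_X-1$ and $\nu_Y/\pi_Y-1$, which yields the covariance bound $\E[(\E_X f-1)(\E_Y f-1)]\le 4\eps\,\dtv(\nu_X,\pi_X)\,\dtv(\nu_Y,\pi_Y)$; and (iii) Pinsker together with $\DKL(\nu_X\Vert\pi_X)=\Ent[\E_Y f]$. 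There is no product-measure comparison and no free parameter to tune.

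Your chain-rule plus product-comparison scheme does produce a constant $1+O(\eps)$ for variance, but carrying your own estimates through and optimizing $\alpha$ gives (to first order) $C\approx 1+\tfrac{1+\sqrt5}{2}\,\eps$, strictly worse than the target $1+\eps$; the asymmetry of the decomposition is what prevents the sharp constant. The more serious problem is the entropy case. The step ``$\Var[h]\le(1+\alpha)\Var[\bar h]+(1+\alpha^{-1})\Var[h-\bar h]$'' has no entropy analog: entropy is not a quadratic form and obeys no such Young-type splitting, so there is no place to insert your pointwise control of $h-\bar h$. Your proposed $\chi^2$ detour gives $L^2$ control of $h-\bar h$, which is precisely what variance needs but is not convertible into an entropy bound without an extra $\log(1/\pi_{\min})$-type factor. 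The paper's covariance-then-Pinsker trick is exactly the missing idea that makes the entropy inequality go through with the sharp constant, and it is not something your outline would lead you to.
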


The proof of \cref{lem:AT-weak-correlation} can be found in \cref{subsec:AT-weak-correlation}.

\paragraph{Two-variable factorization with strong correlation}
Our next result allows stronger correlations between $X$ and $Y$, at a cost of larger constants for AT.

\begin{lemma}
\label{lem:AT-strong-correlation}
Suppose there exist reals $\eps_X,\eps_Y \in [0,1]$ with $\eps_X \eps_Y > 0$ such that
\begin{align}
\dtv(\pi_X^y, \pi_X^{y'}) &\le 1-\eps_X, \quad \forall y,y' \in \YY \label{eq:AT-sc-cond-X} \\
\text{and}\quad
\dtv(\pi_Y^x, \pi_Y^{x'}) &\le 1-\eps_Y, \quad \forall x,x' \in \XX. \label{eq:AT-sc-cond-Y}
\end{align}
Then we have
\begin{align}
\Var f &\le \frac{2}{\eps_X + \eps_Y} \left( \E[\Var_X f] + \E[\Var_Y f] \right), 
\quad \forall f: \XX \times \YY \to \R \label{eq:var-AT-sc-goal} \\
\text{and}\quad
\Ent f &\le \frac{4+2\log(1/\pi_{\min})}{\eps_X + \eps_Y} \left( \E[\Ent_X f] + \E[\Ent_Y f] \right), 
\quad \forall f: \XX \times \YY \to \R_{\ge 0} \label{eq:AT-sc-goal}
\end{align}
where $\pi_{\min} = \min_{(x,y) \in \XX\times \YY:\, \pi(x,y)>0} \pi(x,y)$.
\end{lemma}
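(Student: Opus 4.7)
The plan is to interpret both inequalities as functional inequalities for the two-block heat-bath chain $P = \tfrac{1}{2}(K_X + K_Y)$ on $\XX \times \YY$, which in each step picks one of $X, Y$ uniformly at random and resamples it from its conditional distribution given the other. The Dirichlet form and entropy production of $P$ are
\[
\EE_P(f,f) = \tfrac{1}{2}\bigl(\E[\Var_X f] + \E[\Var_Y f]\bigr), \qquad \DD_P(f) = \tfrac{1}{2}\bigl(\E[\Ent_X f] + \E[\Ent_Y f]\bigr),
\]
so \eqref{eq:var-AT-sc-goal} is the Poincar\'e inequality for $P$ with spectral gap $\lambda_P \ge (\eps_X+\eps_Y)/4$, and \eqref{eq:AT-sc-goal} is the modified log-Sobolev inequality with constant $\rho_P \ge (\eps_X+\eps_Y)/\bigl(8+4\log(1/\pi_{\min})\bigr)$. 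The task reduces to quantifying the mixing of $P$ directly from the two-sided total-variation hypotheses \eqref{eq:AT-sc-cond-X}--\eqref{eq:AT-sc-cond-Y}.

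For the variance bound I would average the two conditional-variance identities $\Var f = \E[\Var_X f] + \Var(\E_X f)$ and $\Var f = \E[\Var_Y f] + \Var(\E_Y f)$, reducing the goal to an upper bound on $\Var(\E_X f) + \Var(\E_Y f)$ in terms of $\E[\Var_X f] + \E[\Var_Y f]$. I would then expand, e.g., $\Var(\E_Y f) = \tfrac{1}{2}\E_{x,x'\sim \pi_X}\bigl[(\E_Y^x f - \E_Y^{x'} f)^2\bigr]$ and insert the maximal coupling of $\pi_Y^x, \pi_Y^{x'}$ supplied by \eqref{eq:AT-sc-cond-Y}: with probability at least $\eps_Y$ the coupled $Y$-values agree, in which case the squared difference sees only the $X$-variation of $f$ and is absorbed into $\E[\Var_X f]$, while the residual mass $\le 1-\eps_Y$ is handled by a crude bound. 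The symmetric estimate for $\Var(\E_X f)$ via \eqref{eq:AT-sc-cond-X} makes the $\eps_X$- and $\eps_Y$-savings enter additively (because they attack distinct summands), delivering the multiplier $2/(\eps_X+\eps_Y)$.

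For the entropy bound I would invoke the standard comparison between the modified log-Sobolev constant and the spectral gap of a finite reversible chain, of the form $\rho_P \ge \lambda_P / (2+\log(1/\pi_{\min}))$, and combine it with the variance step above. Accounting for the factor $\tfrac{1}{2}$ inside $\DD_P$ yields precisely the multiplier $(4 + 2\log(1/\pi_{\min}))/(\eps_X+\eps_Y)$ in \eqref{eq:AT-sc-goal}. The main obstacle I anticipate is the first step, obtaining the \emph{additive} saving $(\eps_X+\eps_Y)/4$ in $\lambda_P$ rather than the $\min(\eps_X,\eps_Y)/2$ that a naive one-step path coupling on Hamming distance would produce. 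The two-identity decomposition above is what overcomes this: it pairs each maximal coupling with a separate conditional-variance term, so that the $\eps_X$ and $\eps_Y$ gains accumulate rather than being combined by a minimum.
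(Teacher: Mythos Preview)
Your overall framework---recasting \eqref{eq:var-AT-sc-goal} and \eqref{eq:AT-sc-goal} as the Poincar\'e and log-Sobolev inequalities for the two-block chain $P=\tfrac12(K_X+K_Y)$, and then pulling the entropy bound down from the variance bound via $\rho \ge \lambda/(2+\log(1/\pi_{\min}))$---is exactly what the paper does, and the entropy step is fine.

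The gap is in your variance argument. The paper does not average the two conditional-variance identities and insert maximal couplings; it bounds the spectral radius of the $2\times 2$ influence matrix
\[
\Psi_\pi \le \begin{pmatrix} 0 & 1-\eps_Y \\ 1-\eps_X & 0 \end{pmatrix},
\qquad
\rho(\Psi_\pi) \le \sqrt{(1-\eps_X)(1-\eps_Y)} \le 1 - \frac{\eps_X+\eps_Y}{2},
\]
the last inequality being AM--GM, and then invokes the spectral-independence theorem of \cite{FGYZ21} to conclude $\lambda_P \ge (\eps_X+\eps_Y)/4$. The additive combination of $\eps_X$ and $\eps_Y$ enters through AM--GM on the spectral radius, not through any pairing of coupling estimates with separate variance terms.

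Your proposed route has a concrete obstruction. In $\Var(\E_Y f)=\tfrac12\,\E_{x,x'\sim\pi_X}\bigl[(\E_Y^{x} f - \E_Y^{x'} f)^2\bigr]$, on the event $Y=Y'=y$ of the maximal coupling the integrand becomes $(f(x,y)-f(x',y))^2$ with $x,x'$ drawn from the \emph{marginal} $\pi_X$ and $y$ coming from the coupling of $\pi_Y^{x},\pi_Y^{x'}$; whereas $\E[\Var_X f]$ averages $(f(x,y)-f(x',y))^2$ with $y\sim\pi_Y$ and $x,x'\sim\pi_X^{y}$. These are different measures on $(x,x',y)$, so the ``absorption into $\E[\Var_X f]$'' does not go through, and once it fails the residual $(1-\eps_Y)$-mass handled ``crudely'' will not close to an additive saving---you are back to $\min(\eps_X,\eps_Y)$. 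The paper's own remark points to the correct coupling alternative: run path coupling for $P$ in a \emph{weighted} Hamming metric with $w_Y/w_X=\sqrt{(1-\eps_X)/(1-\eps_Y)}$ (the Perron eigenvector of $\Psi_\pi$); the one-step contraction factor is then $\tfrac12\bigl(1+\sqrt{(1-\eps_X)(1-\eps_Y)}\bigr)\le 1-(\eps_X+\eps_Y)/4$, and again the additive gain appears via AM--GM, not via the two-identity decomposition.
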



\begin{remark}
We remark that the constant for entropy factorization \cref{eq:AT-sc-goal} is not optimal since $\log(1/\pi_{\min})$ depends logarithmically on the size of state space, $|\XX\times \YY| = |\XX| \cdot |\YY|$. Getting rid of this is an interesting open question.
\end{remark}

\paragraph{Multi-variable factorization}

Finally, we consider approximate tensorization for multiple variables, which is helpful for establishing \cref{eq:AT-separa} when the size of the separator is bounded.

Let $X_1,\dots,X_n$ be $n$ random variables where $X_i$ is fully supported on finite $\XX_i$ for each $i$.
The joint distribution of $(X_1,\dots,X_n)$, over the product space $\XX = \prod_{i=1}^n \XX_i$ is denoted by $\pi$. 
For two disjoint subsets $A,B \subseteq [n]$ and a partial assignment $x_A \in \prod_{i \in A} \XX_i$ with $\pi(X_A = x_A) > 0$, let $\pi_B^{x_A} = \pi(X_B = \cdot \mid X_A = x_A)$ denote the conditional marginal distribution on $B$ conditioned on that the variables in $A$ are assigned values from $x_A$. 
In particular, $\pi_B$ denotes the marginal on $B$. 
We write $\pi_i$ and $x_i$ when the underlying set is $\{i\}$ for simplicity. 

As before, we write $\Ent_i^{x_{\bar{i}}} f = \Ent_{X_i}^{x_{\bar{i}}} f$ as the conditional entropy of $f$ on $X_i$ given $x_{\bar{i}} = (x_1,\dots,x_{i-1},x_{i+1},\dots,x_n)$ and let $\E[\Ent_i f]$ be its expectation where $x_{\bar{i}}$ is drawn from $\mu_{[n] \setminus i}$; the variance versions are defined in the same way.

\begin{lemma}
\label{lem:AT-crude-bound}
Suppose there exists a real $\eps \in (0,1]$ such that for every $\Lambda \subseteq [n]$ with $|\Lambda| \le n-2$, every $x_\Lambda \in \prod_{i \in \Lambda} \XX_i$ with $\pi_\Lambda(x_\Lambda) > 0$, 
every $i,j \in [n] \setminus \Lambda$ with $i\neq j$, and every $x_i,x'_i \in \XX_i$ with $\pi_i^{x_\Lambda}(x_i) > 0$ and $\pi_i^{x_\Lambda}(x'_i) > 0$, it holds
\[
\dtv( \pi_j^{x_\Lambda,x_i}, \pi_j^{x_\Lambda,x'_i} ) \le 1-\eps. 
\]
Then we have
\begin{align}
\Var f &\le \frac{1}{\eps^{n-1}} \sum_{i=1}^n \E[\Var_i f],
\quad \forall f: \XX \to \R \\
\text{and} \quad 
\Ent f &\le \frac{2+\log(1/\pi_{\min})}{\eps^{n-1}} \sum_{i=1}^n \E[\Ent_i f],
\quad \forall f: \XX \to \R_{\ge 0} 
\end{align}
where $\pi_{\min} = \min_{x \in \XX:\, \pi(x) > 0} \pi(x)$. 
\end{lemma}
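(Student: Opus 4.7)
The plan is to proceed by strong induction on $n$, splitting off a single variable at each step via the two-variable factorization \cref{lem:AT-strong-correlation} and invoking the induction hypothesis on the conditional distribution over the remaining $n-1$ variables.

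For the base case $n = 2$, the hypothesis of \cref{lem:AT-crude-bound} with $\Lambda = \emptyset$ supplies exactly the two TV conditions of \cref{lem:AT-strong-correlation} with $\eps_X = \eps_Y = \eps$, and the conclusion immediately yields multipliers $1/\eps$ for variance and $(2+\log(1/\pi_{\min}))/\eps$ for entropy, matching the claim.

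For the inductive step $n \ge 3$, apply \cref{lem:AT-strong-correlation} with $X = X_n$ and $Y = (X_1,\ldots,X_{n-1})$ treated as a single block to obtain
\[
\Var f \le \frac{1}{\eps}\bigl(\E[\Var_n f] + \E[\Var_Y f]\bigr), \quad \Ent f \le \frac{2+\log(1/\pi_{\min})}{\eps}\bigl(\E[\Ent_n f] + \E[\Ent_Y f]\bigr).
\]
For every $x_n \in \XX_n$, the conditional distribution $\pi^{x_n}$ on $Y$ inherits the hypothesis of \cref{lem:AT-crude-bound} for $n-1$ variables: any $\Lambda' \subseteq [n-1]$ with $|\Lambda'| \le n-3$ corresponds to $\Lambda = \Lambda' \cup \{n\}$ of size $\le n-2$ in the original hypothesis, so the single-coordinate TV bounds carry over verbatim. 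Applying the induction hypothesis then gives $\Var^{x_n}_Y f \le \eps^{-(n-2)} \sum_{i<n} \E^{x_n}[\Var_i f]$ and analogously for entropy (with $\pi_{\min}^{(x_n)} \ge \pi_{\min}$). Taking expectation over $x_n \sim \pi_n$ and combining, using $\eps \le 1$ to absorb the outer $1/\eps$, yields the desired multipliers $\eps^{-(n-1)}$ and $(2+\log(1/\pi_{\min}))/\eps^{n-1}$.

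The main obstacle is verifying the two TV conditions of \cref{lem:AT-strong-correlation} at the split point: the hypothesis of \cref{lem:AT-crude-bound} directly supplies $\dtv \le 1-\eps$ only when the two pinnings differ in a \emph{single} coordinate, whereas \cref{lem:AT-strong-correlation} requires $\dtv(\pi_n^y, \pi_n^{y'}) \le 1-\eps$ for arbitrary $y, y' \in \prod_{i<n} \XX_i$ (and symmetrically for the other side), which may differ in many coordinates. Closing this gap calls for a careful coupling argument chaining through intermediate single-coordinate pinnings, exploiting that the hypothesis remains valid under \emph{every} pinning of up to $n-2$ variables (including conditional on intermediate ones). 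An additional subtlety specific to entropy is that naive iteration would compound the $(2+\log(1/\pi_{\min}))/\eps$ factor $n-1$ times; the cleanest remedy is to establish the variance bound first and then deduce the entropy bound at the end via a comparison inequality that incurs the $\log(1/\pi_{\min})$ factor only once, in the spirit of the classical Diaconis--Saloff-Coste comparison between spectral gap and modified log-Sobolev constants.
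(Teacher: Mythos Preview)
Your inductive approach has a genuine gap at the very step you flag as ``the main obstacle,'' and the proposed fix does not close it.

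To apply \cref{lem:AT-strong-correlation} with $X=X_n$ and $Y=(X_1,\dots,X_{n-1})$ and obtain the multiplier $1/\eps$, you need $\eps_X+\eps_Y\ge 2\eps$, i.e.\ both
\[
\dtv(\pi_n^{y},\pi_n^{y'})\le 1-\eps\quad\text{for all feasible }y,y'\in\prod_{i<n}\XX_i,
\qquad
\dtv(\pi_{[n-1]}^{x_n},\pi_{[n-1]}^{x_n'})\le 1-\eps\quad\text{for all feasible }x_n,x_n'.
\]
Neither follows from the hypothesis. For the first, the hypothesis only controls $\dtv(\pi_n^{y},\pi_n^{y'})$ when $y,y'$ differ at a \emph{single} coordinate; chaining through a path of single-coordinate flips and the triangle inequality gives $\dtv\le k(1-\eps)$ when they differ in $k$ coordinates, which is vacuous for $k\ge 2$, and intermediate pinnings along such a path need not even be feasible. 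For the second, the natural sequential coupling (couple $X_1$, then $X_2$ given $X_1$, etc.) only yields $\dtv(\pi_{[n-1]}^{x_n},\pi_{[n-1]}^{x_n'})\le 1-\eps^{\,n-1}$, so at best $\eps_Y=\eps^{\,n-1}$. Plugging this into \cref{lem:AT-strong-correlation} already costs a factor $\Theta(\eps^{-(n-1)})$ at the very first split, after which the recursion buys nothing and the final constant overshoots the target.

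The paper sidesteps this entirely by appealing to spectral independence rather than to \cref{lem:AT-strong-correlation}. The hypothesis says every off-diagonal entry of the influence matrix $\Psi_\pi^{x_\Lambda}$ is at most $1-\eps$, so its $\ell_\infty$ row sum, hence its spectral radius, is at most $(n-|\Lambda|-1)(1-\eps)$. Feeding $\eta_k=(n-k-1)(1-\eps)$ into \cref{thm:FGYZ-SI} gives spectral gap $\ge \eps^{\,n-1}/n$ in one stroke, with no block-level TV bounds required; \cref{lem:CMT} then yields the variance inequality, and \cref{lem:LSC-gap} converts it to the entropy inequality with a single $\log(1/\pi_{\min})$ loss. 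Your closing remark about deducing entropy from variance via a Diaconis--Saloff-Coste comparison is exactly right and matches the paper, but the variance argument itself needs to be replaced by the influence-matrix bound.
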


Observe that for $n=2$, \cref{lem:AT-crude-bound} recovers \cref{lem:AT-strong-correlation} for the special case $\eps_X = \eps_Y = \eps$.
The proofs of \cref{lem:AT-strong-correlation,lem:AT-crude-bound} can be found in \cref{subsec:crude-bound}, which are simple applications of the spectral independence approach based on \cite{AL20,ALO20,FGYZ21}.

\subsection{Example}

%
%
%
%

Here we give a simple example as an application of \cref{thm:AT-sep-decomp} and tools from \cref{subsec:tools}.
We show polynomial mixing time of Glauber dynamics for sampling $q$-colorings on a complete $d$-ary tree for all $d\ge 2$ and $q \ge 3$. 
This was known previously, see \cite{GJK10,LM11,LMP09,TVVY12,SZ17} for even sharper results.
However, \cref{thm:AT-sep-decomp} allows us to establish this fact in a more straightforward manner, avoiding technical complications such as constructing canonical paths or Markov chain decomposition.

\begin{proposition}[\cite{GJK10}, see \cite{LM11,LMP09,TVVY12,SZ17} for sharper results]
\label{prop:coloring-tree}
Let $d\ge 2$ and $q \ge 3$ be integers. 
Suppose $T$ is a complete $d$-ary tree of height $h$, and denote the number of vertices by $n=\sum_{i=0}^h d^i$.
The mixing time of the Glauber dynamics for sampling uniformly random $q$-colorings of $T$ is $n^{O\big( 1+\frac{d}{q\log d} \big)}$.
\end{proposition}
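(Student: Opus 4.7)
The plan is to apply \cref{thm:AT-sep-decomp} with the canonical separator decomposition tree $T_{\mathsf{SD}}$ of $T$ in which each separator is a single vertex---the root of the current subtree. For $d \ge 2$ every child subtree has size $(|U|-1)/d \le |U|/2 \le 2|U|/3$, so the decomposition is balanced; because $T_{\mathsf{SD}}$ mirrors the structure of $T$, its height equals the height $h = \Theta(\log n / \log d)$ of $T$. Each separator $S$ has $|S| = 1$, so the AT-for-separators condition \cref{eq:AT-separa} holds trivially with $C_S = 1$.

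The heart of the proof is establishing the block factorization \cref{eq:decomp-factor} at each node $(U, \{r\})$: for every pinning $\eta$ outside $U$,
\[
\E[\Var_U f] \le C_0 \Big( \E[\Var_r f] + \sum_{i=1}^d \E[\Var_{T_i} f] \Big),
\]
where $T_1, \ldots, T_d$ are the subtrees at the children of $r$ and $C_0 = \exp(O(d/q))$ is independent of $|U|$ and of $\eta$. I would apply \cref{lem:AT-strong-correlation} with $X = \sigma_r$ and $Y = \sigma_{U \setminus \{r\}}$. Given $\sigma_r = x$, the subtrees $T_1, \ldots, T_d$ are mutually independent under $\mu_U^\eta$, and by the color-permutation symmetry of proper colorings the marginal on each child $u_i$ is uniform on $[q] \setminus \{x\}$; hence $\dtv(\mu_{T_i}^{\sigma_r = x}, \mu_{T_i}^{\sigma_r = x'}) \le 1/(q-1)$ for every $i$, and an independent optimal coupling across the $d$ subtrees yields
\[
1 - \dtv(\pi_Y^x, \pi_Y^{x'}) \ge \Big( \tfrac{q-2}{q-1} \Big)^d \ge \exp\bigl(-O(d/q)\bigr),
\]
so $\eps_Y \ge \exp(-O(d/q))$, from which \cref{lem:AT-strong-correlation} (combined with a positive $\eps_X$) yields $C_0 = \exp(O(d/q))$.

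Plugging into \cref{thm:AT-sep-decomp} gives AT of variance with multiplier
\[
C \le C_0^h = \exp\bigl(O(d \log n/(q \log d))\bigr) = n^{O(d/(q \log d))},
\]
and by \cref{lem:AT-Glauber} the mixing time of the Glauber dynamics is $O(Cn^2) = n^{O(1 + d/(q \log d))}$, as claimed.

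The technical crux is the block-factorization step. While $\eps_Y$ is controlled cleanly by the product-over-subtrees structure and the pointwise TV bound $\dtv(U_{[q]\setminus\{x\}}, U_{[q]\setminus\{x'\}}) \le 1/(q-1)$ on the child marginals, the symmetric quantity $\eps_X$ can vanish in the worst case: when the colors of $r$'s children happen to admit a unique residual color for $r$, the distributions $\pi_X^y$ degenerate into distinct point masses as $y$ varies, forcing $\eps_X = 0$. Circumventing this---either via a one-sided variant of \cref{lem:AT-strong-correlation} that uses only $\eps_Y > 0$, or via a more careful comparison that leverages the uniformity of the root marginal $\mu_r$ on $[q] \setminus \{p\}$ to absorb the support-dependence of $\pi_X^y$---is where the main work must go.
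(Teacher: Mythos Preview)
Your approach is exactly the paper's: the same single-vertex separators, the same product-coupling bound $\eps_Y\ge(1-\tfrac{1}{q-1})^d=e^{-O(d/q)}$, and the same final arithmetic $C\le C_0^h=n^{O(d/(q\log d))}$.

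The worry about $\eps_X$ is unnecessary; there is no remaining ``main work.'' Although \cref{lem:AT-strong-correlation} is stated with the hypothesis $\eps_X\eps_Y>0$, its constant $2/(\eps_X+\eps_Y)$ and its proof require only $\eps_X+\eps_Y>0$: the spectral radius of the $2\times 2$ influence matrix is $\sqrt{(1-\eps_X)(1-\eps_Y)}\le 1-\tfrac{\eps_X+\eps_Y}{2}$, which is strictly below $1$ as soon as \emph{either} parameter is positive, and the spectral-gap bound from \cref{thm:FGYZ-SI} then reads $(\eps_X+\eps_Y)/4$. So you may simply set $\eps_X=0$ and obtain $C_{U,\{r\}}\le 2/\eps_Y=e^{O(d/q)}$ directly. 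The paper does precisely this: it bounds only $\dtv\bigl(\mu_{T_v\setminus\{v\}}^{v\gets c_1},\,\mu_{T_v\setminus\{v\}}^{v\gets c_2}\bigr)$ and invokes \cref{lem:AT-strong-correlation} without ever mentioning the other direction. Your argument is already complete.
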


\begin{proof}
We apply \cref{thm:AT-sep-decomp}. 
We may assume that $q \le 3d$ since otherwise rapid mixing follows by standard path coupling arguments \cite{Jbook}. 
The separator decomposition tree $T_{\mathsf{SD}}$ can be obtain from the original tree $T$: every node of $T_{\mathsf{SD}}$ is of the form $(T_v, \{v\})$ where $T_v$ is the subtree rooted at $v$ and the single-vertex set $\{v\}$ is a separator of $T_v$. 
In particular, we have $C_{\{v\}} = 1$ in \cref{eq:AT-separa} for each separator $\{v\}$. 
We claim that $C_{T_v,\{v\}} = e^{O(d/q)}$ in \cref{eq:decomp-factor} for each node $(T_v, \{v\})$. 
To see this, consider two pinnings where $v$ receives colors $c_1$ and $c_2$, and we can couple all children of $v$ with probability $(1-\frac{1}{q-1})^d = e^{-O(d/q)}$ when neither $c_1$ nor $c_2$ is used. Hence, we can couple the whole subtree $T_v \setminus \{v\}$ with the same probability, implying
\[
\dtv\left( \mu_{T_v \setminus \{v\}}^{v \gets c_1}, \mu_{T_v \setminus \{v\}}^{v \gets c_2} \right) = 1-e^{-O(d/q)}.
\]
Thus, we deduce from \cref{lem:AT-strong-correlation} that $C_{T_v,\{v\}} = e^{O(d/q)}$.
By \cref{thm:AT-sep-decomp} we obtain that AT of variance holds with multiplier
\[
C = \left(e^{O(d/q)}\right)^h = n^{O\left( \frac{d}{q\log d} \right)}.
\]
The mixing time then follows from \cref{lem:AT-Glauber}.
\end{proof}

%
%

\section{Rapid Mixing for Graphs of Bounded Treewidth}
\label{sec:treewidth}


In this section we consider graphs of bounded treewidth. 
It is well-known that all bounded-treewidth graphs have a balanced separator decomposition tree with separators of bounded size. 
\begin{lemma}[\cite{RS86,Gruber}]
\label{lem:treewidth}
If $G$ is a graph of treewidth $t$, then there exists a balanced separator decomposition tree $T_{\mathsf{SD}}$ for $G$ such that for every node $(U,S)$ in $T_{\mathsf{SD}}$ it holds $|S| \le t$. 
\end{lemma}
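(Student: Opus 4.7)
The plan is to invoke a classical balanced-bag lemma for tree decompositions and then build $T_{\mathsf{SD}}$ recursively. Fix a tree decomposition $(T,\{B_x\}_{x\in V(T)})$ of $G$ certifying treewidth $t$, so that every bag satisfies $|B_x|\le t$, the bags containing any fixed vertex $v\in V$ form a connected subtree of $T$, and every edge of $G$ is contained in some bag. The key structural observation is that deleting any tree edge $\{x,y\}$ splits $T$ into subtrees $T_x,T_y$, and the induced vertex sets $V_x=\bigcup_{z\in V(T_x)}B_z$ and $V_y=\bigcup_{z\in V(T_y)}B_z$ satisfy $V_x\cap V_y\subseteq B_x\cap B_y$, with no edge of $G$ crossing $V_x\setminus V_y$ and $V_y\setminus V_x$. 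Thus any bag is a separator for the partition induced by cutting tree edges at it.

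The heart of the argument is the standard \emph{centroid bag lemma}: there exists $x^*\in V(T)$ such that for every component $T'$ of $T-x^*$, the set $\bigcup_{z\in V(T')}B_z\setminus B_{x^*}$ has at most $2n/3$ vertices. I would prove this by orienting each tree edge $\{x,y\}$ toward whichever of $V_x\setminus B_y$ or $V_y\setminus B_x$ has strictly more than $n/2$ vertices (arbitrarily if neither does); a short handshake / sink-finding argument on the resulting orientation of $T$ produces a node $x^*$ with no outgoing edge, and one checks this $x^*$ witnesses the claim. Setting $S_V:=B_{x^*}$ yields a separator of $G[V]$ with $|S_V|\le t$ whose removal leaves components each of size at most $2n/3$.

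Finally, I recurse. Each connected component $U'$ of $G[V\setminus S_V]$ corresponds to (the vertex set associated with) a component of $T-x^*$, and it inherits a tree decomposition of width at most $t$ by intersecting every bag of $T$ with $U'$ and keeping only the nonempty ones. I define the children of the root $(V,S_V)$ in $T_{\mathsf{SD}}$ to be the pairs $(U',S_{U'})$, where $S_{U'}$ is obtained by re-applying the centroid bag lemma to this inherited decomposition, and iterate. The recursion terminates at a leaf $(U,U)$ whenever $|U|\le t$, so that even leaves satisfy $|S|\le t$. Balancedness $|U'|\le 2|U|/3$ and $|S_U|\le t$ hold at every node by construction, giving the claimed $T_{\mathsf{SD}}$. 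The main technical ingredient is the centroid bag lemma, which is entirely classical (Robertson--Seymour \cite{RS86}) and can be cited rather than reproved; the rest is bookkeeping on the recursion.
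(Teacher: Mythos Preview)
The paper does not give its own proof of this lemma; it is stated with citations to \cite{RS86,Gruber} and treated as a known structural fact. Your argument via the centroid bag lemma and recursion on the inherited tree decomposition is exactly the standard proof of this classical result, and it is correct. One small imprecision: a connected component $U'$ of $G[V\setminus S_V]$ need not equal the full vertex set of a component of $T-x^*$; it is merely contained in one. This does not affect the argument, since containment already gives $|U'|\le 2|U|/3$ and the restricted bags still form a valid tree decomposition of $G[U']$. A second caveat, shared with the paper's own statement, is the off-by-one: under the standard convention, treewidth $t$ means bags of size at most $t+1$, so the separator one extracts is of size at most $t+1$ rather than $t$; this is harmless for all downstream applications in the paper, which only use $|S|=O(t)$.
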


See also \cite{Bod98,Reed03,HW17} for surveys on treewidth.


\subsection{Hardcore model}

In this subsection we prove \cref{thm:hardcore-treewidth} for the hardcore model via \cref{thm:AT-sep-decomp}.
As a byproduct, we also show $e^{O(\sqrt{n})}$ mixing time of the Glauber dynamics on arbitrary planar graphs with no restriction on the maximum degree, see \cref{prop:hardcore-planar}.

To apply \cref{thm:AT-sep-decomp}, we need to establish block factorization for every decomposition \cref{eq:decomp-factor} and approximate tensorization for all separators \cref{eq:AT-separa}, which are given by the following two lemmas.

\begin{lemma}\label{lem:hardcore-decomp-factor}
Consider the hardcore model on a graph $G=(V,E)$ with fugacity $\lambda >0$.
Let $S\subseteq U \subseteq V$ be subsets with $|S| \le t$. 
For any pinning $\eta$ on $V \setminus U$, the conditional hardcore Gibbs distribution $\mu^\eta_U$ satisfies $\{S,U\setminus S\}$-factorization of variance with constant $C = 2(1+\lambda)^t$.
In particular, for every function $f: \XX \to \R$ it holds
\begin{equation}\label{eq:hc-dec-fac}
\E[\Var_U f] \le C \left( \E[\Var_S f] + \E[\Var_{U \setminus S} f] \right).
\end{equation}
\end{lemma}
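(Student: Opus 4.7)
The plan is to apply \cref{lem:AT-strong-correlation} to the conditional distribution $\pi := \mu^\eta_U$ on $\XX_U$, viewing each $\sigma_U$ as a pair $(X, Y) := (\sigma_S, \sigma_{U \setminus S})$. It will suffice to exhibit $\eps_X \ge (1+\lambda)^{-t}$ together with any $\eps_Y > 0$, since then \cref{lem:AT-strong-correlation} yields $\{S, U \setminus S\}$-factorization of variance for $\pi$ with multiplier $2/(\eps_X + \eps_Y) \le 2/\eps_X \le 2(1+\lambda)^t$, which is exactly the desired bound.

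The key step is a uniform lower bound on the probability that $S$ receives the all-zero configuration. This exploits the hardcore-specific fact that the all-zero configuration is always feasible with weight $1$. Fix any feasible $y$ on $U \setminus S$, and write $w(\cdot,\cdot)$ for the hardcore weight of the pair $(x,y)$ under the pinning $\eta$ (with $\eta$-factors absorbed into a common constant). Then $w(\mathbf{0}_S, y) = \lambda^{|y|}$, while
\[
\sum_{x} w(x, y) \;\le\; \sum_{x \in \{0,1\}^S} \lambda^{|x|+|y|} \;=\; \lambda^{|y|}(1+\lambda)^{|S|},
\]
where the inequality overcounts by including infeasible $x$'s. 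Hence $\pi_X^y(\mathbf{0}_S) \ge (1+\lambda)^{-|S|} \ge (1+\lambda)^{-t}$. Using the identity $\dtv(P, Q) = 1 - \sum_x \min\{P(x), Q(x)\}$, we conclude that $\dtv(\pi_X^y, \pi_X^{y'}) \le 1 - (1+\lambda)^{-t}$ for all feasible $y, y'$, so we may set $\eps_X := (1+\lambda)^{-t}$. The analogous computation with $\mathbf{0}_{U \setminus S}$ shows $\pi_Y^x(\mathbf{0}_{U \setminus S}) \ge (1+\lambda)^{-|U \setminus S|} > 0$, providing some positive $\eps_Y$ whose precise (possibly very small) value is irrelevant.

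Invoking \cref{lem:AT-strong-correlation} with these parameters yields $\Var_\pi f \le 2(1+\lambda)^t \bigl( \E_\pi[\Var_S f] + \E_\pi[\Var_{U \setminus S} f] \bigr)$ for the conditional distribution $\pi = \mu^\eta_U$, which is precisely the claimed $\{S, U \setminus S\}$-factorization. The ``in particular'' inequality \cref{eq:hc-dec-fac} then follows by taking outer expectation over $\eta \sim \mu_{V \setminus U}$ on both sides of the pointwise bound and applying the tower property to each of $\E[\Var_U f]$, $\E[\Var_S f]$, $\E[\Var_{U \setminus S} f]$. There is essentially no technical obstacle: the substantive point is recognizing that \cref{lem:AT-weak-correlation} cannot be used here (since $\sigma_S$ and $\sigma_{U \setminus S}$ may be strongly correlated through forbidden neighbor interactions), while the strong-correlation version \cref{lem:AT-strong-correlation} does apply because the empty configuration furnishes a uniform-in-pinning overlap that depends only on $|S|$.
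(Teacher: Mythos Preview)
Your proposal is correct and follows essentially the same argument as the paper: both apply \cref{lem:AT-strong-correlation} with $X=\sigma_S$, $Y=\sigma_{U\setminus S}$, use the all-zero configuration to get $\pi_X^y(\mathbf{0}_S)\ge(1+\lambda)^{-t}$ and hence $\eps_X=(1+\lambda)^{-t}$, and then average over $\eta$. You are actually slightly more careful than the paper in explicitly verifying the hypothesis $\eps_X\eps_Y>0$ via the symmetric bound $\pi_Y^x(\mathbf{0}_{U\setminus S})>0$, which the paper leaves implicit.
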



%
%

\begin{proof}
For any pinning $\tau$ on $U \setminus S$, we have
\[
\mu^{\eta,\tau}_S(\sigma_S = \vec{0})
\ge \frac{1}{(1+\lambda)^{|S|}} \ge \frac{1}{(1+\lambda)^t}.
\]
Hence, for any two pinnings $\tau,\xi$ on $U \setminus S$ we have $\dtv(\mu^{\eta,\tau}_S,\mu^{\eta,\xi}_S) \le 1-(1+\lambda)^{-t}$. 
Therefore, by \cref{lem:AT-strong-correlation} we have that $\mu^\eta_U$ satisfies $\{S,U\setminus S\}$-factorization of variance with constant $C = 2(1+\lambda)^t$.
Taking expectation over $\eta$, we also obtain \cref{eq:hc-dec-fac}.
\end{proof}

\begin{lemma}\label{lem:hardcore-AT-sep}
Consider the hardcore model on a graph $G=(V,E)$ with fugacity $\lambda >0$.
Let $S \subseteq V$ be a subset with $|S| \le t$. 
For any pinning $\eta$ on $V \setminus S$, the conditional hardcore Gibbs distribution $\mu^\eta_S$ satisfies approximate tensorization of variance with constant $C=(1+\lambda)^{t-1}$.
In particular, for every function $f: \XX \to \R$ it holds
\begin{equation}\label{eq:hc-AT-sep}
\E[\Var_S f] \le C \sum_{v \in S} \E[\Var_v f].
\end{equation}
\end{lemma}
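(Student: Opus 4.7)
My plan is to apply Lemma~\ref{lem:AT-crude-bound} directly to the $|S|$-variable joint distribution $\pi = \mu^\eta_S$, with parameter $\eps = 1/(1+\lambda)$. Since $|S| \le t$, the lemma will immediately give
\[
\Var^\eta_S f \;\le\; \frac{1}{\eps^{|S|-1}} \sum_{v \in S} \E^\eta_S[\Var_v f] \;\le\; (1+\lambda)^{t-1} \sum_{v \in S} \E^\eta_S[\Var_v f],
\]
and taking expectation over $\eta \sim \mu_{V \setminus S}$ will yield (\ref{eq:hc-AT-sep}) via the tower property.

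The only thing I need to verify is the TV-contraction hypothesis of Lemma~\ref{lem:AT-crude-bound}: for any $\Lambda \subseteq S$ with $|\Lambda| \le |S|-2$, any feasible partial assignment $x_\Lambda$ (extending $\eta$), any distinct $v_i, v_j \in S \setminus \Lambda$, and any $x_i, x'_i \in \{0,1\}$ that are both feasible at $v_i$ under the pinning $(\eta, x_\Lambda)$, I must show $\dtv(\pi_j^{x_\Lambda, x_i},\, \pi_j^{x_\Lambda, x'_i}) \le 1 - 1/(1+\lambda)$. The whole argument rests on one elementary observation: in the hardcore model, for every vertex $v$ and every feasible pinning $\tau$ on $V \setminus \{v\}$,
\[
\mu_v^\tau(\sigma_v = 0) \;\ge\; \frac{1}{1+\lambda},
\]
because the unnormalized weights at $v$ are $1$ for spin $0$ and at most $\lambda$ for spin $1$ (and $0$ if $v$ has an occupied neighbor under $\tau$).

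Applying this bound to the pinnings $(\eta, x_\Lambda, x_i)$ and $(\eta, x_\Lambda, x'_i)$ on $V \setminus \{v_j\}$, both $\pi_j^{x_\Lambda, x_i}$ and $\pi_j^{x_\Lambda, x'_i}$ assign mass at least $1/(1+\lambda)$ to the common state $\sigma_{v_j} = 0$. Since any two distributions sharing a state with mass at least $p$ have TV distance at most $1-p$ (couple them on that shared state), the required bound follows with $\eps = 1/(1+\lambda)$, completing the hypothesis.

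I do not foresee any real obstacle. The degenerate cases are all benign: when $|S| = 1$ the statement is the trivial $\Var_v f \le \Var_v f$ with $(1+\lambda)^0 = 1$; when $\sigma_v = 1$ is infeasible under some $\tau$ we still have $\mu_v^\tau(0) = 1 \ge 1/(1+\lambda)$; and the feasibility condition on $x_i, x'_i$ built into Lemma~\ref{lem:AT-crude-bound} is precisely what I have assumed. The conceptual point is that no coupling construction, canonical path, or analysis of the particular structure of $S$ is needed here—the black-box multi-variable tensorization from \cref{lem:AT-crude-bound}, combined with the universal single-vertex lower bound $\mu_v^\tau(0) \ge 1/(1+\lambda)$, yields the stated multiplier $(1+\lambda)^{t-1}$ essentially for free.
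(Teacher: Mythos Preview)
Your proposal is correct and follows exactly the paper's route: verify the hypothesis of \cref{lem:AT-crude-bound} with $\eps = 1/(1+\lambda)$ via the universal lower bound $\mu_v^\tau(\sigma_v=0)\ge 1/(1+\lambda)$, deduce AT of variance with constant $(1+\lambda)^{|S|-1}\le (1+\lambda)^{t-1}$, and average over $\eta$. One small wording point: the pinnings $(\eta,x_\Lambda,x_i)$ do not cover all of $V\setminus\{v_j\}$ when $|S\setminus\Lambda|>2$, but the bound $\mu_{v_j}^{\,\cdot}(0)\ge 1/(1+\lambda)$ holds under any \emph{partial} pinning as well (average over the remaining free vertices), so the argument goes through unchanged.
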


\begin{proof}
For any subset $\Lambda \subseteq S$ with $|\Lambda| \le |S|-2$ and any pinning $\tau$ on $\Lambda$, let $u,v \in S \setminus \Lambda$ be two distinct vertices and we have
\[
\mu^{\eta,\tau}_{S \setminus \Lambda}(\sigma_v = 0 \mid \sigma_u = 0) \ge \frac{1}{1+\lambda} 
\quad\text{and}\quad
\mu^{\eta,\tau}_{S \setminus \Lambda}(\sigma_v = 0 \mid \sigma_u = 1) \ge \frac{1}{1+\lambda}.
\]
Hence, $\dtv(\mu^{\eta,\tau,u\gets 0}_v, \mu^{\eta,\tau,u\gets 1}_v) \le 1-(1+\lambda)^{-1}$.
It follows from \cref{lem:AT-crude-bound} that $\mu^\eta_S$ satisfies approximate tensorization of variance with constant
\[
C = (1+\lambda)^{|S|-1} \le (1+\lambda)^{t-1}.
\]
Taking expectation over $\eta$, we also obtain \cref{eq:hc-AT-sep}.
\end{proof}

We are now ready to prove \cref{thm:hardcore-treewidth}. 
\begin{proof}[Proof of \cref{thm:hardcore-treewidth}]
We deduce the theorem from \cref{thm:AT-sep-decomp}. 
Since the graph has treewidth $t$, there exists a balanced separator decomposition tree $T_{\mathsf{SD}}$ by \cref{lem:treewidth} where all separators have size at most $t$. 
The height of $T_{\mathsf{SD}}$ is $3\log n$ by \cref{lem:height}. 
Block factorization for decomposition \cref{eq:decomp-factor} is shown by \cref{lem:hardcore-decomp-factor} with $C_{U,S} = 2(1+\lambda)^t$ for each node $(U,S)$. 
Approximate tensorization for separators \cref{eq:AT-separa} is shown by \cref{lem:hardcore-AT-sep} with $C_S = (1+\lambda)^{t-1}$ for each separator $S$. 
Thus, we conclude from \cref{thm:AT-sep-decomp} that the hardcore Gibbs distribution $\mu$ satisfies approximate tensorization of variance with multiplier 
\[
C = (1+\lambda)^{t-1} \cdot \left( 2(1+\lambda)^t \right)^{3\log n}
= n^{O(1+t\log(1+\lambda))}.
\]
The mixing time then follows from \cref{lem:AT-Glauber}.
\end{proof}

As a byproduct, we also show $e^{O(\sqrt{n})}$ mixing of Glauber dynamics for the hardcore model on any planar graph. See \cite{BKMP05,Hein20,EF22+} which can obtain similar results.
\begin{theorem}\label{prop:hardcore-planar}
Suppose $G$ is an $n$-vertex planar graph. 
The mixing time of the Glauber dynamics for sampling from the hardcore model on $G$ with fugacity $\lambda > 0$ is $(1+\lambda)^{O(\sqrt{n})}$.
\end{theorem}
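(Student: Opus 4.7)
The plan is to apply \cref{thm:AT-sep-decomp} using the classical Lipton–Tarjan planar separator theorem, which states that every $m$-vertex planar graph admits a balanced separator of size $O(\sqrt{m})$. Since planarity is preserved under taking subgraphs, we can recursively build a balanced separator decomposition tree $T_{\mathsf{SD}}$ of $G$ in which each node $(U, S)$ satisfies $|S| = O(\sqrt{|U|})$ and each child $(U', S')$ of $(U, S)$ satisfies $|U'| \le 2|U|/3$.

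First I would invoke \cref{lem:hardcore-decomp-factor,lem:hardcore-AT-sep} with parameter $t = |S|$ at each node $(U, S)$ to obtain $C_{U,S} = 2(1+\lambda)^{|S|}$ for block factorization and $C_S = (1+\lambda)^{|S|-1}$ for approximate tensorization on separators. By \cref{lem:height} the height of $T_{\mathsf{SD}}$ is at most $3\log n$. For any node $(U, S)$, let $(V, S_V) = (U_0, S_0), \dots, (U_h, S_h) = (U, S)$ be the unique root-to-node path; balancedness yields $|U_k| \le (2/3)^k n$ and hence $|S_k| = O\bigl(\sqrt{(2/3)^k n}\bigr)$.

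The crucial estimate is the geometric sum
\[
\sum_{k=0}^{h} |S_k| \;\le\; c\sqrt{n} \sum_{k=0}^{\infty} \bigl(\sqrt{2/3}\bigr)^{k} \;=\; O(\sqrt{n}),
\]
which converges because $\sqrt{2/3} < 1$. Multiplying the constants along the path then gives
\[
C_{S} \prod_{k=0}^{h} C_{U_k, S_k} \;\le\; 2^{h+1} (1+\lambda)^{|S_h|-1 + \sum_{k=0}^{h} |S_k|} \;=\; n^{O(1)} (1+\lambda)^{O(\sqrt{n})}.
\]
Taking the maximum over all nodes, \cref{thm:AT-sep-decomp} yields approximate tensorization of variance with multiplier $n^{O(1)} (1+\lambda)^{O(\sqrt{n})}$, and \cref{lem:AT-Glauber} converts this into the desired mixing time bound $(1+\lambda)^{O(\sqrt{n})}$ after absorbing the polynomial factor into the exponent.

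The main conceptual point, and essentially the only obstacle, is ensuring the recursion is well-founded: the connected components of $G[U \setminus S]$ appearing at deeper levels of $T_{\mathsf{SD}}$ must themselves be planar so that the Lipton–Tarjan theorem can be applied repeatedly. This is immediate since planarity is inherited by induced subgraphs. All remaining steps are mechanical substitutions into the framework developed in \cref{subsec:tools}, and the key quantitative input—that separator sizes decay geometrically and thus telescope to $O(\sqrt{n})$ rather than $O(\sqrt{n}\log n)$—is exactly what allows us to get $(1+\lambda)^{O(\sqrt{n})}$ instead of a superpolynomial loss.
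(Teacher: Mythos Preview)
Your proposal is correct and follows essentially the same approach as the paper: both recursively apply the Lipton--Tarjan planar separator theorem to build a balanced separator decomposition tree, invoke \cref{lem:hardcore-decomp-factor,lem:hardcore-AT-sep} at each node, and sum the resulting exponents as a geometric series in $\sqrt{2/3}$ to obtain $(1+\lambda)^{O(\sqrt{n})}$. The only cosmetic difference is that you track the $2^{h+1}$ factor explicitly and absorb it at the end, whereas the paper absorbs the factor of $2$ into $(1+\lambda)^{O(\sqrt{|U|})}$ at each step; both rely on the same convention that the implicit constant in $O(\sqrt{n})$ may depend on $\lambda$.
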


\begin{proof}
We apply \cref{thm:AT-sep-decomp}. 
It is well-known that every planar graph has a balanced separator $S \subseteq V$ of size $O(\sqrt{|V|})$, such that each connected component of $G[V \setminus S]$ has size at most $2|V|/3$; see \cite{LT79}. 
We can further find balanced separators for each component, and construct a balanced separator decomposition tree $T_{\mathsf{SD}}$ recursively. 
By \cref{lem:hardcore-decomp-factor}, for every node $(U,S)$ we have block factorization for decomposition \cref{eq:decomp-factor} with constant
\[
C_{U,S} = 2(1+\lambda)^{|S|} = (1+\lambda)^{O\big(\sqrt{|U|}\big)}. 
\]
By \cref{lem:hardcore-AT-sep}, for every separator $S$ we have approximate tensorization for separators \cref{eq:AT-separa} with constant
\begin{equation*}
C_{S} = (1+\lambda)^{|S|-1} = (1+\lambda)^{O(\sqrt{n})}.
\end{equation*}
Therefore, we obtain from \cref{thm:AT-sep-decomp} that AT of variance holds with multiplier
\begin{align*}
C 
\le (1+\lambda)^{O(\sqrt{n})}
\cdot
\prod_{i=0}^{\infty} (1+\lambda)^{O\left(\sqrt{\left( \frac{2}{3} \right)^i n}\right)} 
= (1+\lambda)^{O(\sqrt{n})}.
\end{align*}
The mixing time then follows from \cref{lem:AT-Glauber}.
\end{proof}

\subsection{List colorings}

In this subsection we prove \cref{thm:main-coloring-treewidth} from the introduction for colorings.
We consider the more general setting of list colorings, where each vertex $v$ is associated with a list $L_v$ of available colors, and every list coloring assigns to each vertex a color from its list such that adjacent vertices receive distinct colors. 
The Glauber dynamics is ergodic for list colorings if $|L_v| \ge \deg_G(v)+2$ for all $v$. 
One handy feature of list colorings is that any pinning $\eta$ on a subset $\Lambda \subseteq V$ of vertices induces a new list coloring instance on the induced subgraph $G[V \setminus \Lambda]$, for which $|L^\eta_v| \ge \deg_{G[V \setminus \Lambda]}(v)+2$ still holds for all unpinned $v$ where $L^\eta_v$ is the new list of available colors conditioned on $\eta$; see, e.g., \cite{GKM15}.

\begin{theorem}\label{thm:coloring-treewidth}
Let $G=(V,E)$ be a graph of maximum degree $\Delta \ge 3$ and treewidth $t \ge 1$. 
Suppose each vertex $v \in V$ is associated with a color list $L_v$ such that $\deg_G(v) + 2 \le |L_v| \le q$. 
The mixing time of the Glauber dynamics for sampling uniformly random list colorings of $G$ is $n^{O(t(\Delta+\log q))}$. 
\end{theorem}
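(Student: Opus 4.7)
My plan is to follow the strategy of \cref{thm:hardcore-treewidth}: apply \cref{thm:AT-sep-decomp} to the list-coloring Gibbs measure on $G$ using the balanced separator decomposition tree $T_{\mathsf{SD}}$ from \cref{lem:treewidth} (separators of size $\le t$; depth $\le 3\log n$ by \cref{lem:height}). To instantiate \cref{thm:AT-sep-decomp} I need block factorization for decomposition (\cref{eq:decomp-factor}) at each node with constant $C_{U,S} = O\bigl((q\cdot 2^\Delta)^{t}\bigr)$ and approximate tensorization for separators (\cref{eq:AT-separa}) with constant $C_S = O\bigl((q\cdot 2^\Delta)^{t-1}\bigr)$; these combine to give AT multiplier $C \le C_S \cdot C_{U,S}^{3\log n} = n^{O(t(\Delta+\log q))}$, and \cref{lem:AT-Glauber} then yields mixing time $O(Cn^2) = n^{O(t(\Delta+\log q))}$.

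For the separator AT, I apply \cref{lem:AT-crude-bound} to $\mu_S^\eta$. The key TV input is $\dtv\bigl(\mu_j^{\eta,x_\Lambda,x_i},\, \mu_j^{\eta,x_\Lambda,x'_i}\bigr) \le 1 - 1/(1+(q-1)2^\Delta)$. It follows from a Kempe-chain/one-swap recoloring argument: starting from a proper coloring with $\sigma_j = c'$, flip $\sigma_j$ to $c$ and greedily recolor each neighbor $u\sim j$ that previously had color $c$ using a fresh color from its list---the condition $|L_u|\ge \deg(u)+2$ guarantees at least two admissible choices. The induced map is at most $2^\Delta$-to-one, so every color $c$ in the conditional list at $j$ has marginal probability at least $1/(1+(q-1)2^\Delta)$. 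The two lists $L_j^{\eta,x_\Lambda,x_i}$ and $L_j^{\eta,x_\Lambda,x'_i}$ share a color: they are identical when $i\not\sim j$ in $G[S\setminus\Lambda]$, and when $i\sim j$ the parent list $L_j^{\eta,x_\Lambda}$ has size $\ge \deg_{G[S\setminus\Lambda]}(j)+2\ge 3$, so removing $\{x_i\}$ and $\{x'_i\}$ separately still leaves at least one common color. Plugging $\varepsilon = 1/(q\cdot 2^\Delta)$ into \cref{lem:AT-crude-bound} yields $C_S = O\bigl((q\cdot 2^\Delta)^{t-1}\bigr)$.

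The block factorization is the main obstacle and does not follow from a direct analogue of \cref{lem:hardcore-decomp-factor}: under the tight condition $|L_v|\ge\deg(v)+2$, two pinnings $\tau\ne\xi$ on $U\setminus S$ can force \emph{disjoint} supports for $\mu_S^{\eta,\tau}$ and $\mu_S^{\eta,\xi}$ (take a boundary vertex of $S$ whose external neighbors under $\tau$ use exactly half of $L_v$ and under $\xi$ the other half), so the ``universal configuration'' trick of the hardcore proof is unavailable and \cref{lem:AT-strong-correlation} cannot be applied directly to the pair $(\sigma_S,\sigma_{U\setminus S})$. My plan is to combine three ingredients: (i) the exact decomposition $\Var_U^\eta f = \E[\Var_{U\setminus S} f] + \Var_S^\eta(g)$ with $g(\sigma_S) = \E_{U\setminus S}^{\eta,\sigma_S} f$; (ii) the independence of the components $A_1,\ldots,A_k$ of $G[U\setminus S]$ given $\sigma_S$ (by \cref{lem:MSW04}), which lets me apply the single-vertex recoloring estimate from the previous paragraph locally inside each component to control the sensitivity of $g$ to individual coordinates of $\sigma_S$; and (iii) a telescoping argument that peels off the coordinates of $S$ one by one, using the single-vertex TV bound at each peel to introduce one factor of $q\cdot 2^\Delta$, so that after $|S|\le t$ peels one obtains $\Var_S^\eta(g) \le O\bigl((q\cdot 2^\Delta)^t\bigr)\cdot\bigl(\E[\Var_S f]+\E[\Var_{U\setminus S} f]\bigr)$. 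Converting the local TV bound into a global block factorization without a universal ground state is where the technical work lies; this is the main place the argument substantively departs from the hardcore proof.
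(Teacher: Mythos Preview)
Your separator-AT argument is essentially the paper's \cref{lem:coloring-at-sep}: the recoloring/resampling bound $\mu_v(c)\ge (2^\Delta q)^{-1}$ and the observation that the two conditional lists at $j$ share a color are the same ideas (the paper states the marginal bound as \cref{eq:mag-coloring}). So that part is fine.

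The block-factorization step, however, has a genuine gap. You correctly diagnose that \cref{lem:AT-strong-correlation} cannot be applied to the pair $(\sigma_S,\sigma_{U\setminus S})$ because the supports of $\mu_S^{\eta,\tau}$ and $\mu_S^{\eta,\xi}$ can be disjoint. But your proposed cure---a ``telescoping/peeling'' of the $t$ coordinates of $S$, each costing a factor of $q\cdot 2^\Delta$---is not a proof as written. The single-vertex TV bound you establish controls influences \emph{within} $S$ (that is what feeds \cref{lem:AT-crude-bound} for the separator), not the dependence between $\sigma_S$ and $\sigma_{U\setminus S}$, which is what \cref{eq:decomp-factor} needs. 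Peeling a vertex $v\in S$ would require factorizing $\mu_U^\eta$ into $\{\{v\},\,U\setminus\{v\}\}$, and the same disjoint-support obstruction already appears there: with $|L_v|=\deg_G(v)+2$, two pinnings of $v$'s neighbors in $U\setminus\{v\}$ can leave two disjoint $2$-element lists for $v$. The paper explicitly notes (see the discussion before \cref{thm:strong-AT-sep-decomp}) that it could not make \cref{thm:AT-sep-decomp} work for list colorings for exactly this reason.

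The paper's fix is to replace the disjoint-block framework of \cref{thm:AT-sep-decomp} by the overlapping-block framework \cref{thm:strong-AT-sep-decomp} with $r=1$: factorize into $\{\ball_U(S,1),\,U\setminus S\}$ instead of $\{S,\,U\setminus S\}$. By \cref{lem:ent_fact_mar}, this overlapping-block factorization is \emph{equivalent} to $\{S,T\}$-factorization of the \emph{marginal} $\mu^\eta_{S\cup T}$, where $T=U\setminus\ball_U(S,1)$. The crucial gain is that $S$ and $T$ are non-adjacent in $G$, so any $\eta$-feasible partial coloring on $S$ remains feasible under every pinning $\tau$ on $T$; the resampling bound \cref{eq:mag-coloring} then gives $\mu^{\eta,\tau}_S(\sigma)\ge (2^\Delta q)^{-t}$ uniformly, and \cref{lem:AT-strong-correlation} applies (this is \cref{lem:coloring-decomp-factor}). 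The price is a bounded-coverage factor $A\le \Delta+1$ from \cref{lem:A} and that the separator-AT must now be proved for $\ball_U(S,1)$ (size $\le(\Delta+1)t$) rather than $S$, but neither changes the exponent. In short: the missing idea is the one-step buffer around $S$ together with \cref{lem:ent_fact_mar}; without it, the block-factorization step in your outline does not go through.
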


\subsubsection{A generalized version of \texorpdfstring{\cref{thm:AT-sep-decomp}}{Proposition 3.1}}

For list colorings we are not able to directly apply \cref{thm:AT-sep-decomp} to prove \cref{thm:coloring-treewidth}. 
Unlike the hardcore model where a vertex can always be unoccupied under any pinning of its neighbors, the lack of such ``universal'' color makes it hard to establish block factorization for decomposition \cref{eq:decomp-factor} using tools from \cref{subsec:tools}.
In this subsection we present a stronger version of \cref{thm:AT-sep-decomp} which allows us to establish block factorization more easily for list colorings. 
Our applications in \cref{sec:SSM} also requires this more general version.

For subsets $S \subseteq U \subseteq V$ and an integer $r \ge 0$, define
\[
\ball(S,r) = \{v \in V: \dist_G(v,S) \le r\}
\]
to be the ball around $S$ of radius $r$ in $G$, 
and define
\[
\ball_U(S,r) = U \cap \ball(S,r) = \{v \in U: \dist_G(v,S) \le r\}
\]
to be the portion of the ball contained in $U$. 
In \cref{thm:strong-AT-sep-decomp} below, we replace $S$ by $\ball_U(S,r)$ in all suitable places in \cref{thm:AT-sep-decomp}, which makes it easier for us to establish block factorization for decomposition \cref{eq:st-decomp-factor} for hard-constraint models like list colorings.

\begin{proposition}\label{thm:strong-AT-sep-decomp}
Let $(G,\Phi,\Psi)$ be a spin system defined on a graph $G=(V,E)$ with associated Gibbs distribution $\mu$.
Let $r \ge 0$ be an integer.
Suppose that $T_{\mathsf{SD}}$ is a separator decomposition tree of $G$ satisfying
\begin{enumerate}
\item (Block Factorization for Decomposition) 
For every node $(U,S)$, there exists $C_{U,S} \ge 1$, such that for any function $f: \XX \to \R_{\ge 0}$ we have
\begin{equation}\label{eq:st-decomp-factor}
\E[\Ent_U f] \le C_{U,S} \left( \E\left[ \Ent_{\ball_U(S,r)} f \right] + \E\left[ \Ent_{U \setminus S} f \right] \right).
\end{equation}
For all leaves $(U,U)$ we take $C_{U,U} = 1$.

\item (Approximate Tensorization for Separators) 
For every node $(U,S)$, there exists $C_{S} \ge 1$, such that for any function $f: \XX \to \R_{\ge 0}$ we have
\begin{equation}\label{eq:st-AT-separa}
\E\left[ \Ent_{\ball_U(S,r)} f \right] \le C_{S} \sum_{v \in \ball_U(S,r)} \E[\Ent_v f]. 
\end{equation}

\item (Bounded Coverage)
There exists $A \ge 1$ such that for any vertex $v \in V$,
\begin{equation}\label{eq:st-a}
|\{(U,S): v \in \ball_U(S,r)\}| \le A. 
\end{equation}
\end{enumerate}
Then the Gibbs distribution $\mu$ satisfies approximate tensorization of entropy with multiplier $C$ given by
\[
C = A \cdot \max_{(U,S)} \left\{ C_{S} \prod_{(U',S')} C_{U',S'} \right\},
\]
where the maximum is taken over all nodes of $T_{\mathsf{SD}}$, and the product is over all nodes $(U',S')$ in the unique path from the root $(V,S_V)$ to $(U,S)$. 
Namely, for every function $f: \XX \to \R_{\ge 0}$ we have
\begin{align*}
\Ent f 
&\le C \sum_{v \in V} \E[\Ent_v f].
\end{align*}
\end{proposition}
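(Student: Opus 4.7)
The plan is to mirror the proof of \cref{thm:AT-sep-decomp}, replacing $\Var(\cdot)$ by $\Ent(\cdot)$ and $S$ by $\ball_U(S,r)$ throughout, and relying on \cref{lem:MSW04} to split entropy over disconnected components at each level of the separator decomposition tree $T_{\mathsf{SD}}$. I would begin at the root $(V,S_V)$ by applying hypothesis~(1) (\cref{eq:st-decomp-factor}) to bound $\Ent f = \E[\Ent_V f]$ as
\[
\Ent f \le C_{V,S_V} \bigl( \E[\Ent_{\ball_V(S_V,r)} f] + \E[\Ent_{V \setminus S_V} f] \bigr),
\]
then invoking hypothesis~(2) (\cref{eq:st-AT-separa}) on the first summand to tensorize it into single-site entropies $\E[\Ent_v f]$ for $v \in \ball_V(S_V,r)$. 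For the second summand I would use \cref{lem:MSW04}(2): the children of $(V,S_V)$ in $T_{\mathsf{SD}}$ are precisely the connected components $U_1,\dots,U_m$ of $G[V\setminus S_V]$ and are pairwise disconnected in $G[V\setminus S_V]$ by definition, so $\E[\Ent_{V \setminus S_V} f] \le \sum_i \E[\Ent_{U_i} f]$.

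The recursion then continues by applying the same two steps to each $\E[\Ent_{U_i} f]$, pushing the argument down $T_{\mathsf{SD}}$ until a leaf $(U,U)$ is reached, where $C_{U,U}=1$ and hypothesis~(2) directly tensorizes $\E[\Ent_U f] = \E[\Ent_{\ball_U(U,r)} f]$ into single-vertex entropies. Tracking the multiplicative constants that accumulate along each root-to-node path, the final bound takes the form
\[
\Ent f \le \sum_{v \in V}\ \sum_{(U,S):\, v \in \ball_U(S,r)} \left( C_S \prod_{(U',S')} C_{U',S'} \right) \E[\Ent_v f],
\]
where for each fixed pair $(U,S)$ the product runs over all ancestors $(U',S')$ of $(U,S)$ on its path from the root in $T_{\mathsf{SD}}$.

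The essential new ingredient compared with \cref{thm:AT-sep-decomp} is that a single vertex $v$ can now lie in $\ball_U(S,r)$ for several distinct nodes $(U,S)$, whereas previously each vertex belonged to a unique separator (\cref{rem:sep-partition}). This is exactly the role of hypothesis~(3): the inner sum in the displayed bound contains at most $A$ terms for each $v$, each of size at most $\max_{(U,S)}\{C_S \prod_{(U',S')} C_{U',S'}\}$, which yields the stated multiplier $C = A \cdot \max_{(U,S)}\{C_S \prod_{(U',S')} C_{U',S'}\}$. The main step requiring care, and the one I would write out most explicitly, is the decomposition across the children of an internal node via \cref{lem:MSW04}(2); once that is in place, everything else is routine bookkeeping along the recursion.
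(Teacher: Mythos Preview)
Your proposal is correct and matches the paper's own proof essentially line for line: the paper also says the argument is the same as for \cref{thm:AT-sep-decomp}, obtains $\Ent f \le \sum_{v} C_v\, \E[\Ent_v f]$ with $C_v = \sum_{(U,S):\, v \in \ball_U(S,r)} C_S \prod_{(U',S')} C_{U',S'}$, and then bounds the inner sum by $A$ via hypothesis~(3). Your identification of \cref{lem:MSW04}(2) for splitting across the children and of bounded coverage as the only new bookkeeping step is exactly right.
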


\begin{remark}
The variance version of \cref{thm:strong-AT-sep-decomp} is also true and the proof is exactly the same.
\end{remark}

\begin{remark}
Observe that if $r=0$ then $\ball_U(S,r) = S$ and we have $A=1$ in \cref{eq:st-a}, so we recover exactly \cref{thm:AT-sep-decomp}.
\end{remark}

\begin{proof}[Proof of \cref{thm:strong-AT-sep-decomp}]
The proof is the same as for \cref{thm:AT-sep-decomp} by decomposing the entropy level by level on the separator decomposition tree $T_{\mathsf{SD}}$. 
We similarly obtain 
\[
\Ent f \le \sum_{v \in V} C_v \E\left[ \Ent_v f \right], 
\]
where for each $v$, 
\[
C_v = \sum_{(U,S):\, v \in \ball_U(S,r)} C_{S} \prod_{(U',S')} C_{U',S'} \le C
\]
where the product runs through all $(U',S')$ on the unique path from $(V,S_V)$ to $(U,S)$.
This establishes the proposition.
\end{proof}

The following lemma is helpful for establishing bounded coverage \cref{eq:st-a} when applying \cref{thm:strong-AT-sep-decomp}.

\begin{lemma}\label{lem:A}
Under the assumptions of \cref{thm:strong-AT-sep-decomp}:
\begin{itemize}
\item If $G$ has maximum degree $\Delta \ge 3$, then $A \le \Delta\sum_{i=0}^{r-1} (\Delta-1)^i = O(\Delta^r)$;
\item If $n\ge 3$ and $T_{\mathsf{SD}}$ is a balanced separator decomposition tree, then $A \le 4\log n$. 
\end{itemize}
\end{lemma}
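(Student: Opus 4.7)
The plan is to exploit a key structural fact about the separator decomposition tree $T_{\mathsf{SD}}$: for any fixed vertex $v \in V$, the set of nodes $(U, S)$ in $T_{\mathsf{SD}}$ with $v \in U$ forms a path from the root down to the unique node whose separator contains $v$. I would justify this inductively from the construction of $T_{\mathsf{SD}}$: at the root we have $v \in V = U_0$; at any node $(U, S)$ with $v \in U$, either $v \in S$ and the path terminates (since every child $(U', S')$ satisfies $U' \subseteq U \setminus S$), or $v \in U \setminus S$ belongs to exactly one connected component of $G[U \setminus S]$, determining the unique child $(U', S')$ with $v \in U'$. Combined with \cref{rem:sep-partition}, which says the separators partition $V$, this gives a unique terminating node $(U_k, S_k)$ with $v \in S_k$, and a well-defined path $(V, S_V) = (U_0, S_0), (U_1, S_1), \ldots, (U_k, S_k)$.

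Given this structure, I would observe that a node $(U_i, S_i)$ on the path satisfies $v \in \ball_{U_i}(S_i, r)$ precisely when $\dist_G(v, S_i) \le r$, i.e., when $S_i$ contains at least one vertex of $\ball(v, r)$. So both bullets reduce to counting indices $i$ along the path with $S_i \cap \ball(v, r) \neq \emptyset$.

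For the first bullet, I would use the pairwise disjointness of $S_0, \ldots, S_k$ (inherited from \cref{rem:sep-partition}): for each valid $i$, selecting a witness $u_i \in S_i \cap \ball(v, r)$ yields a map $i \mapsto u_i$ that is injective into $\ball(v, r)$, so the count is at most $|\ball(v, r)|$. Standard ball-growth for maximum degree $\Delta \ge 3$ gives $|\ball(v, r)| \le 1 + \Delta \sum_{i=0}^{r-1}(\Delta - 1)^i = O(\Delta^r)$, matching the stated bound. For the second bullet, the count is trivially at most the length of the path, which is at most the height of $T_{\mathsf{SD}}$ plus one; \cref{lem:height} bounds the height by $3 \log n$, so $A \le 3 \log n + 1 \le 4 \log n$ for $n \ge 3$ (since $\log 3 > 1$). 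There is no substantive obstacle: once the ``$v \in U$'' nodes are identified as a root-to-descendant path in $T_{\mathsf{SD}}$, the two bounds fall out by elementary counting — separators bounded by the ball of $v$ in the first case, and by the depth of the tree in the second.
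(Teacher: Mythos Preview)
Your proposal is correct and takes essentially the same approach as the paper: for the first bullet the paper bounds $|\{(U,S): v \in \ball_U(S,r)\}| \le |\{(U,S): \dist_G(v,S) \le r\}| \le |\ball(v,r)|$ using \cref{rem:sep-partition}, and for the second it bounds by $|\{(U,S): v \in U\}| \le h+1$ via \cref{lem:height}---your path-structure description just makes these same steps more explicit. (One small note: your ball-size bound $1 + \Delta\sum_{i=0}^{r-1}(\Delta-1)^i$ carries an extra $+1$ relative to the paper's stated $\Delta\sum_{i=0}^{r-1}(\Delta-1)^i$; the paper appears to drop the contribution of $v$ itself, but both are $O(\Delta^r)$ so nothing downstream is affected.)
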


\begin{proof}
If $G$ has maximum degree $\Delta \ge 3$, we have for any $v \in V$ that
\begin{align*}
|\{(U,S): v \in \ball_U(S,r)\}|
&\le |\{(U,S): \dist_{G}(v,S) \le r\}| \\
&\le |\{u \in V: \dist_{G}(v,u) \le r\}| \\
&\le \Delta\sum_{i=0}^{r-1} (\Delta-1)^i = O(\Delta^r),
\end{align*}
where the second inequality is because all separators form a partition of $V$ (see \cref{rem:sep-partition}).

If $T_{\mathsf{SD}}$ is balanced, then the height $h$ of $T_{\mathsf{SD}}$ is at most $3\log n$ by \cref{lem:height}. 
Hence, for any $v \in V$ we have
\[
|\{(U,S): v \in \ball_U(S,r)\}|
\le |\{(U,S): v \in U\}|
\le h+1 \le 4\log n,
\]
as claimed.
\end{proof}

\subsubsection{Block factorization via marginal distributions}
\label{subsec:marginal}

Note that to apply \cref{thm:strong-AT-sep-decomp}, we need to establish the block factorization for decomposition \cref{eq:st-decomp-factor} where the two blocks $\ball_U(S,r)$ and $U \setminus S$ overlap with each other. 
In particular, we can no longer apply \cref{lem:AT-strong-correlation,lem:AT-weak-correlation} directly since fixing the spin assignment in one block greatly changes the conditional distribution on the other block because of the overlap. 
Instead, we show block factorization for the marginal distribution $\mu_{S \cup (U \setminus \ball_U(S,r))}$ for the two blocks $S$ and $U \setminus \ball_U(S,r)$, where we essentially exclude the overlap part. 
It turns out these two notions of block factorization are equivalent to each other, and for the latter we are able to apply tools from \cref{subsec:tools} since the blocks are now disjoint. 
 
We show this equivalence in a general setting. 
Let $X,Y,Z$ be three random variables taking values from finite state spaces $\XX,\YY,\ZZ$ respectively. Their joint distribution is denoted by $\pi = \pi_{XYZ}$.
Denote the marginal distribution for $(X,Y)$ by $\pi_{XY}$ and similarly for other choices of subsets of variables.
We establish block factorization for $\pi$ into two blocks $\{X,Z\}$ and $\{Y,Z\}$ from approximate tensorization for the marginal distribution $\pi_{XY}$. 
More precisely, we say $\pi$ satisfies $\{\{X,Z\}, \{Y,Z\}\}$-factorization of entropy with constant $C$ if for every function $f: \XX \times \YY \times \ZZ \to \R_{\ge 0}$, it holds
\begin{equation}
\label{eq:XYZ-block}
\Ent f \le C \left( \E\left[ \Ent_{XZ} f \right] + \E\left[ \Ent_{YZ} f \right] \right).
\end{equation}
We say $\pi_{XY}$ satisfies $\{\{X\}, \{Y\}\}$-factorization (i.e., approximate tensorization) of entropy with constant $C$ if for every function $g: \XX \times \YY \to \R_{\ge 0}$, it holds
\begin{equation}
\label{eq:XYZ-marginal-AT}
\Ent_{XY} g \le C \left( \E_{XY}\left[ \Ent_X g \right] + \E_{XY}\left[ \Ent_Y g \right] \right),
\end{equation}
where the reference distribution is the marginal distribution $\pi_{XY}$ over the state space $\XX \times \YY$ and in particular $\Ent_X g$ will be viewed as a function of $Y$ (instead of $Y,Z$). 
The variance versions of \cref{eq:XYZ-block,eq:XYZ-marginal-AT} are defined in the same way with $\Var(\cdot)$ replacing $\Ent(\cdot)$.

We show that these two notions of block factorization of entropy are equivalent to each other.
\begin{lemma}
\label{lem:ent_fact_mar}
The joint distribution $\pi$ satisfies $\{\{X,Z\}, \{Y,Z\}\}$-factorization of entropy (resp., variance) with constant $C$ if and only if the marginal distribution $\pi_{XY}$ satisfies $\{\{X\}, \{Y\}\}$-factorization (i.e., approximate tensorization) of entropy (resp., variance) with constant $C$. 
\end{lemma}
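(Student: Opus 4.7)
The plan is to reduce both implications to the chain-rule identity from \cref{lem:MSW04}(1), applied to the conditional expectation $g := \E_Z f$ viewed as a function of $X, Y$ alone. I will present the entropy case; the variance case is verbatim since \cref{lem:MSW04}(1) holds for variance as well.

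For the forward direction (AT on $\pi_{XY}$ implies block factorization on $\pi$), I take an arbitrary $f: \XX \times \YY \times \ZZ \to \R_{\ge 0}$ and set $g(x,y) := \E[f \mid X=x,\, Y=y]$. Applying the chain rule globally with $A=\{X,Y,Z\}$ and $B=\{Z\}$ gives $\Ent f = \E[\Ent_Z f] + \Ent_{XY} g$, where I use that $g$ does not depend on $Z$ so $\Ent_\pi g = \Ent_{\pi_{XY}} g$. Applying the same identity under the conditional distribution given $Y=y$ (with $A=\{X,Z\}$, $B=\{Z\}$) and then averaging over $Y \sim \pi_Y$ yields
\[
\E[\Ent_{XZ} f] = \E[\Ent_Z f] + \E_{XY}[\Ent_X g], \qquad \E[\Ent_{YZ} f] = \E[\Ent_Z f] + \E_{XY}[\Ent_Y g],
\]
where I use that the $Y$-marginal of $\pi$ equals the $Y$-marginal of $\pi_{XY}$, so the outer expectations agree under either reference. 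Substituting the assumed AT bound $\Ent_{XY} g \le C\bigl(\E_{XY}[\Ent_X g] + \E_{XY}[\Ent_Y g]\bigr)$ into the identity for $\Ent f$ gives
\[
\Ent f \le (1-2C)\,\E[\Ent_Z f] + C\bigl(\E[\Ent_{XZ} f] + \E[\Ent_{YZ} f]\bigr).
\]
Since $C \ge 1$ (a standing property of factorization multipliers, as noted after \cref{eq:ent-AT}) and $\E[\Ent_Z f] \ge 0$, the first term is non-positive, and \cref{eq:XYZ-block} follows.

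For the reverse direction, I apply the hypothesis to the ``lifted'' function $f(x,y,z) := g(x,y)$, which does not depend on $z$. Because $f$ is constant in $z$ we have $\Ent_Z f \equiv 0$, and the same chain-rule identities collapse to $\Ent f = \Ent_{XY} g$, $\E[\Ent_{XZ} f] = \E_{XY}[\Ent_X g]$, and $\E[\Ent_{YZ} f] = \E_{XY}[\Ent_Y g]$. Substituting into \cref{eq:XYZ-block} directly yields \cref{eq:XYZ-marginal-AT} with the same constant $C$. I do not anticipate a substantive obstacle: the argument is a mechanical application of the chain rule, and the only care needed is tracking which reference distribution governs each outer expectation, which is handled by the fact that $\pi$ and $\pi_{XY}$ have identical $Y$-marginals.
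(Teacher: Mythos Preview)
Your proposal is correct and follows essentially the same route as the paper. Both directions hinge on the chain-rule identity of \cref{lem:MSW04}(1) applied with $g = \E_Z f$, yielding the same key computation $\Ent f \le C(\E[\Ent_{XZ} f] + \E[\Ent_{YZ} f]) - (2C-1)\E[\Ent_Z f]$ and the same appeal to $C \ge 1$; the paper merely packages the argument via an intermediate equivalence (restricting \cref{eq:XYZ-block} to functions of $X,Y$ only), whereas you work directly with the marginal $\pi_{XY}$.
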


The proof of \cref{lem:ent_fact_mar} can be found in \cref{subsec:ent_fact_mar}.

\subsubsection{Proof of \texorpdfstring{\cref{thm:coloring-treewidth}}{Theorem 4.5}}
We apply \cref{thm:strong-AT-sep-decomp} with $r=1$, 
where we take $A=\Delta+1$ in \cref{eq:st-a} by \cref{lem:A}. 
We establish block factorization for decomposition \cref{eq:st-decomp-factor} and approximate tensorization for separators \cref{eq:st-AT-separa} in the following two lemmas.

\begin{lemma}\label{lem:coloring-decomp-factor}
Consider list colorings on a graph $G=(V,E)$ of maximum degree $\Delta \ge 3$ where each vertex $v \in V$ is associated with a color list $L_v$ such that $\deg_G(v) + 2 \le |L_v| \le q$. 
Let $S\subseteq U \subseteq V$ be subsets with $|S| \le t$. 
For any pinning $\eta$ on $V \setminus U$, the uniform distribution $\mu^\eta_U$ over list colorings conditioned on $\eta$ satisfies $\{\ball_U(S,1), U\setminus S\}$-factorization of variance with constant $C = 2(2^\Delta q)^t$.
In particular, for every function $f: \XX \to \R$ it holds
\begin{equation}\label{eq:coloring-dec-fac}
\E[\Var_U f] \le C \left( \E[\Var_{\ball_U(S,1)} f] + \E[\Var_{U \setminus S} f] \right).
\end{equation}
\end{lemma}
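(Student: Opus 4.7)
The plan is to combine \cref{lem:ent_fact_mar} with \cref{lem:AT-strong-correlation}. Let $Z = \ball_U(S,1) \setminus S$ (i.e., the neighbors of $S$ lying in $U \setminus S$) and $Y = U \setminus \ball_U(S,1)$, so that $U = S \sqcup Z \sqcup Y$, the two blocks decompose as $\ball_U(S,1) = S \cup Z$ and $U \setminus S = Y \cup Z$, and crucially there are no edges in $G$ between $S$ and $Y$ (since $Z$ captures all neighbors of $S$ in $U \setminus S$). By the variance analogue of \cref{lem:ent_fact_mar} applied with the lemma's $X, Y, Z$ set to $S, Y, Z$ respectively, it suffices to prove that the marginal distribution $\pi := \mu^\eta_{S \cup Y}$ satisfies approximate tensorization of variance with constant $2(2^\Delta q)^t$.

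To apply \cref{lem:AT-strong-correlation} to $\pi$ with block $S$, I will exhibit a single configuration $c^*$ on $S$ such that $\pi_S^y(c^*) \ge (2^\Delta q)^{-t}$ uniformly over all feasible pinnings $y$ on $Y$. By coupling both $\pi_S^y$ and $\pi_S^{y'}$ to their common mass at $c^*$, this gives $\dtv(\pi_S^y, \pi_S^{y'}) \le 1 - (2^\Delta q)^{-t}$, so we may take $\eps_X = (2^\Delta q)^{-t}$. The other parameter $\eps_Y > 0$ is automatic: for any pinning $x$ on $S$ and any feasible $y^*$ on $Y$, we have $\pi_Y^x(y^*) > 0$ because $y^*$ can always be extended to a valid coloring of $Z$ by greedy list-coloring (using $|L_v| \ge \deg_G(v) + 2$), so supports overlap. \cref{lem:AT-strong-correlation} then yields constant $2/(\eps_X + \eps_Y) \le 2(2^\Delta q)^t$, as desired.

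To construct $c^*$ and lower-bound $\pi_S^y(c^*) = \mu^{\eta,y}(\sigma_S = c^*)$, I fix any feasible list coloring $c^*$ of $G[S]$ extending $\eta$ (which exists by greedy coloring) and expand via the chain rule along an arbitrary ordering $v_1, \ldots, v_{|S|}$ of $S$:
\[
\mu^{\eta,y}(\sigma_S = c^*) \;=\; \prod_{i=1}^{|S|} \mu^{\eta, y, c^*_{<i}}\bigl(\sigma_{v_i} = c^*_{v_i}\bigr).
\]
After the pinning $(\eta, y, c^*_{<i})$, the unpinned vertices $R_i := \{v_i, \ldots, v_{|S|}\} \cup Z$ still satisfy the $\deg+2$ list-coloring condition on $G[R_i]$ with effective lists, since the pinning removes at most $\deg_G(u) - \deg_{G[R_i]}(u)$ colors from each $L_u$. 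Hence each factor is the probability of a specific color at a single vertex in a $\deg+2$ list-coloring distribution.

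The key technical ingredient is the following per-vertex marginal lower bound, which I would prove separately: in any list-coloring distribution $\nu$ on a graph $H$ with $|L_u| \ge \deg_H(u) + 2$ for all $u$, every vertex $v$ and color $c \in L_v$ satisfy $\nu(\sigma_v = c) \ge \bigl(|L_v|\cdot 2^{\deg_H(v)}\bigr)^{-1} \ge (q \cdot 2^\Delta)^{-1}$. The proof is a color-swap counting argument: for any other $c' \in L_v$, define a map sending each valid $\sigma$ with $\sigma_v = c'$ to a valid $\sigma'$ with $\sigma'_v = c$ by setting $\sigma'_v = c$ and, for each neighbor $u \in N_H(v)$ with $\sigma_u = c$, recoloring $u$ to the lexicographically smallest element of $L_u \setminus \{c\} \setminus \{\sigma_w : w \in N_H(u) \setminus \{v\}\}$, a set of size at least $|L_u| - \deg_H(u) \ge 2$. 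A crucial observation is that the $c$-colored neighbors of $v$ form an independent set in $H$, so the simultaneous recolorings do not interfere and the map is well-defined. Each image $\sigma'$ has at most $2^{\deg_H(v)}$ preimages (each $u \in N_H(v)$ originally had color either $c$ or $\sigma'_u$), yielding $\nu(\sigma_v = c') \le 2^{\deg_H(v)} \nu(\sigma_v = c)$; summing over $c' \in L_v$ and using $\sum_{c'} \nu(\sigma_v = c') = 1$ gives the bound. Combined with the chain rule this yields $\pi_S^y(c^*) \ge (q \cdot 2^\Delta)^{-|S|} \ge (2^\Delta q)^{-t}$, completing the proof. The main obstacle is precisely this per-vertex bound: verifying that the color-swap map produces valid colorings and correctly tracking the preimage count requires the independent-set observation in an essential way.
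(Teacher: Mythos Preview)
Your proof is correct and follows the same overall architecture as the paper: reduce via \cref{lem:ent_fact_mar} to approximate tensorization of the marginal on $S \cup (U\setminus \ball_U(S,1))$, then apply \cref{lem:AT-strong-correlation} using a uniform lower bound on the conditional marginal $\mu^{\eta,\tau}_S$, established by the chain rule and a per-vertex bound of $(2^\Delta q)^{-1}$.

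The one genuine difference is how you prove the per-vertex marginal bound. The paper uses a short resampling argument: starting from a random full coloring, resample all free neighbors of $v$ (each avoids the target color with probability at least $1/2$, contributing $2^{-\Delta}$), then resample $v$ (hitting the target with probability at least $1/q$). Your color-swap counting argument is also valid --- the independent-set observation is exactly what makes the map well-defined --- but it is longer and requires more care in the preimage bookkeeping. Both arguments land on the same $(q\cdot 2^\Delta)^{-1}$ bound, so there is no loss; the resampling proof is just more economical. As a minor note, you are actually more explicit than the paper in verifying $\eps_Y>0$ before invoking \cref{lem:AT-strong-correlation}, which the paper leaves implicit.
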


\begin{proof}
Let $T = U \setminus \ball_U(S,1)$. 
We first prove $\{ S, T \}$-factorization for the marginal distribution $\mu^\eta_{S \cup T}$ on $S \cup T$ via \cref{lem:AT-strong-correlation}.
Let $\tau$ is an arbitrary pinning on $T$ which is feasible under $\eta$.
For any partial list coloring $\sigma$ on $S$ that is feasible under $\eta$, we claim that
\begin{equation}\label{eq:coloring-claim}
\mu^{\eta,\tau}_S(\sigma)
\ge \frac{1}{(2^\Delta q)^{|S|}} \ge \frac{1}{(2^\Delta q)^t}.
\end{equation}
Note that $\sigma$ must also be feasible under $\tau$ since $S$ and $T$ are not adjacent, and we can extend $\sigma \cup \tau \cup \eta$ to a full list coloring by the assumption $|L_v| \ge \deg_G(v)+2$ for all $v$.
By the chain rule, it suffices to show that for any $\Lambda \subseteq S$ and $v \in S \setminus \Lambda$ we have
\begin{equation}\label{eq:mag-coloring}
\mu^{\eta,\tau,\sigma_\Lambda}_v(\sigma_v) \ge \frac{1}{2^\Delta q}.
\end{equation}
Consider a resampling procedure: starting from a random full list coloring generated from $\mu^{\eta,\tau,\sigma_\Lambda}$, we first resample all (free) neighbors of $v$ and then $v$. 
When resampling neighbors, the probability of avoiding color $\sigma_v$ is at least $1/2^\Delta$ since there are at least two available colors for each neighbor by assumption, and when resampling $v$ the probability of getting $\sigma_v$ is at least $1/|L_v| \ge 1/q$, thus establishing \cref{eq:mag-coloring} and consequently \cref{eq:coloring-claim}. 

We deduce from \cref{eq:coloring-claim} that for any two pinnings $\tau,\xi$ on $T$, it holds 
\[
\dtv(\mu^{\eta,\tau}_S,\mu^{\eta,\xi}_S) \le 1-\frac{1}{(2^\Delta q)^t}.
\]
Therefore, by \cref{lem:AT-strong-correlation} we have that $\mu^\eta_{S\cup T}$ satisfies $\{S,T\}$-factorization of variance with constant $C = 2(2^\Delta q)^t$. 
Finally, by \cref{lem:ent_fact_mar} we conclude that $\mu^\eta_U$ satisfies $\{ \ball_U(S,1), U \setminus S \}$-factorization of variance with the same constant, and \cref{eq:coloring-dec-fac} follows by taking expectation over $\eta$.
\end{proof}

\begin{lemma}\label{lem:coloring-at-sep}
Consider list colorings on a graph $G=(V,E)$ of maximum degree $\Delta \ge 3$ where each vertex $v \in V$ is associated with a color list $L_v$ such that $\deg_G(v) + 2 \le |L_v| \le q$. 
Let $B\subseteq V$ be a subset with $|B| \le k$. 
For any pinning $\eta$ on $V \setminus B$, the uniform distribution $\mu^\eta_B$ over list colorings conditioned on $\eta$ satisfies approximate tensorization of variance with constant $C = (2^\Delta q)^{k-1}$.
In particular, for every function $f: \XX \to \R$ it holds
\begin{equation}\label{eq:coloring-AT-sep}
\E[\Var_B f] \le C \sum_{v \in B} \E[\Var_v f].
\end{equation}
\end{lemma}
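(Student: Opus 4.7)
The plan is to derive this lemma as a direct application of \cref{lem:AT-crude-bound} (the crude multi-variable bound) to the conditional distribution $\mu^\eta_B$, with influence parameter $\varepsilon = 1/(2^\Delta q)$. Since $|B| \le k$, this yields approximate tensorization of variance with constant $\varepsilon^{-(|B|-1)} \le (2^\Delta q)^{k-1}$; taking expectations over $\eta$ then gives \cref{eq:coloring-AT-sep}. Everything therefore reduces to verifying the pairwise influence hypothesis: for any $\Lambda \subsetneq B$ with $|\Lambda| \le |B|-2$, any feasible pinning $\tau$ on $\Lambda$ under $\eta$, any distinct $u,v \in B \setminus \Lambda$, and any two colors $c_1, c_2 \in L_u$ that are feasible under $(\eta,\tau)$, I must show
\[
\dtv\!\left(\mu^{\eta,\tau,u\gets c_1}_v,\ \mu^{\eta,\tau,u\gets c_2}_v\right) \le 1 - \frac{1}{2^\Delta q}.
\]

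To prove this, I will exhibit a single color $c^* \in L_v$ on which both conditional marginals above place mass at least $1/(2^\Delta q)$, since then $\dtv \le 1 - \min(\mu_1(c^*),\mu_2(c^*))$. Among the neighbors of $v$ in $G$, at most $\deg_G(v)-1$ are pinned by $(\eta,\tau)$, because $u$ (if adjacent to $v$) lies in $B \setminus \Lambda$ and is not pinned. Hence $v$ has at least $|L_v|-(\deg_G(v)-1) \ge 3$ locally feasible colors under $(\eta,\tau)$, so I may pick such a $c^*$ further distinct from both $c_1$ and $c_2$; this $c^*$ remains feasible for $v$ under each of the two pinnings $(\eta,\tau,u\gets c_i)$.

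For each $i \in \{1,2\}$, the lower bound $\mu^{\eta,\tau,u\gets c_i}_v(c^*) \ge 1/(2^\Delta q)$ will follow from the same resampling argument used in the proof of \cref{lem:coloring-decomp-factor}: sample a full list coloring from $\mu^{\eta,\tau,u\gets c_i}$, resample each unpinned neighbor of $v$ (there are at most $\Delta$) to avoid $c^*$---each step succeeds with probability $\ge 1/2$ because every such neighbor has at least two locally feasible colors---and then resample $v$ itself, which takes value $c^*$ with probability $\ge 1/|L_v| \ge 1/q$. Multiplying these probabilities gives the desired pointwise lower bound, hence the TV bound, and \cref{lem:AT-crude-bound} closes the argument. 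The only subtle point is the case where $u$ is adjacent to $v$, which is precisely why $c^*$ must be chosen distinct from both $c_1$ and $c_2$; aside from this, the argument is essentially a single-vertex copy of the calculation already carried out in \cref{lem:coloring-decomp-factor} and presents no real obstacle.
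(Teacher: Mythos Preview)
Your proposal is correct and follows exactly the paper's approach: apply \cref{lem:AT-crude-bound} with $\eps = (2^\Delta q)^{-1}$, verifying the TV bound by exhibiting a common color $c^*$ with marginal at least $(2^\Delta q)^{-1}$ under both pinnings, via the resampling estimate from \cref{lem:coloring-decomp-factor}.

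One small imprecision: your sentence ``at most $\deg_G(v)-1$ neighbors of $v$ are pinned by $(\eta,\tau)$'' is not true in general---it holds only when $u$ is adjacent to $v$. If $u$ is not adjacent to $v$, all neighbors of $v$ may be pinned, leaving only $|L_v|-\deg_G(v)\ge 2$ locally feasible colors; but then $c^*$ need not avoid $c_1,c_2$, so any such color works. The paper makes exactly this two-case split explicit (all neighbors pinned versus at least one free neighbor); your later remark about the ``subtle case'' shows you have the right picture, so just tighten the wording to match.
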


\begin{proof}
Fix a pinning $\tau$ on some subset $\Lambda \subseteq B$ and consider two distinct vertices $u,v \in B \setminus \Lambda$. 
For any two colors $c_1,c_2$ that are available to $u$, there exists a color $c$ that is always available to $v$ no matter $\sigma_u = c_1$ or $c_2$. 
This is because: either all neighbors of $v$ are pinned and we can choose any color $c$ available to $v$; or $v$ has at least one free neighbor and thus at least three available colors, and we choose one of them which is not $c_1$ or $c_2$. 
For this color $c$, we have $\mu^{\eta,\tau,u\gets c_j}_v(c) \ge (2^\Delta q)^{-1}$ for $j=1,2$, which was already shown in the proof of \cref{lem:coloring-decomp-factor} (see \cref{eq:mag-coloring}). 
This implies that 
\[
\dtv\left( \mu^{\eta,\tau,u\gets c_1}_v, \mu^{\eta,\tau,u\gets c_2}_v \right) \le 1-(2^\Delta q)^{-1}.
\]
The lemma then follows from an application of \cref{lem:AT-crude-bound}, and taking expectation over $\eta$ we obtain \cref{eq:coloring-AT-sep} as claimed.
\end{proof}

We are now ready to prove \cref{thm:coloring-treewidth} and \cref{thm:main-coloring-treewidth} from the introduction.

\begin{proof}[Proof of \cref{thm:coloring-treewidth}]
We deduce the theorem from \cref{thm:strong-AT-sep-decomp}. 
Since the graph has bounded treewidth, there exists a balanced separator decomposition tree $T_{\mathsf{SD}}$ by \cref{lem:treewidth} such that each separator has size at most $t$. 
The height of $T_{\mathsf{SD}}$ is at most $3\log n$ by \cref{lem:height}. 
We pick $r=1$ and hence we can take $A = \Delta+1$ in \cref{eq:st-a} by \cref{lem:A}. 
Block factorization for decomposition \cref{eq:st-decomp-factor} is shown by \cref{lem:coloring-decomp-factor} with $C_{U,S} = 2(2^\Delta q)^t$ for each node $(U,S)$. 
Approximate tensorization for separators \cref{eq:st-AT-separa} follows from \cref{lem:coloring-at-sep} with $C_S = (2^\Delta q)^{(\Delta+1)t-1}$ for each separator $S$, noting that $|\ball_U(S,1)| \le (\Delta+1)t$. 
Thus, we conclude from \cref{thm:strong-AT-sep-decomp} that the uniform distribution $\mu$ of list colorings satisfies approximate tensorization of variance with multiplier 
\[
C = (\Delta+1) \cdot (2^\Delta q)^{(\Delta+1)t-1} \cdot \left( 2(2^\Delta q)^t \right) ^{3\log n}
= n^{O( t(\Delta+\log q) )}.
\]
The mixing time then follows from \cref{lem:AT-Glauber}.
\end{proof}

\begin{proof}[Proof of \cref{thm:main-coloring-treewidth}]
If $q > 2\Delta$ then optimal mixing of Glauber dynamics follows from standard path coupling arguments \cite{Jbook}.
Otherwise, the theorem follows from \cref{thm:coloring-treewidth}.
\end{proof}

\section{Rapid Mixing via SSM for Graphs of Bounded Local Treewidth}
\label{sec:SSM}


In this section we prove \cref{thm:SSM-planar} from the introduction. 
We consider families of graphs of bounded local treewidth, which include planar graphs as a special case. 
We establish rapid mixing of Glauber dynamics under SSM for all such families. 

Graphs of bounded local treewidth are defined as follows. 

\begin{definition}[Bounded Local Treewidth]
Let $G=(V,E)$ be a graph and $a,d > 0$ be reals. 
We say $G$ has polynomially bounded local treewidth if it satisfies the following diameter-treewidth property: 
for any subgraph $H$ of $G$, it holds
\[
\tw(H) \le a \cdot (\diam(H))^d. 
\]
\end{definition}

Examples of families of graphs that have bounded local treewidth include:
\begin{itemize}
\item Graphs of bounded treewidth;
\item Planar graphs, see \cite{Epp99,DH04,YZ13};
\item Graphs of bounded growth, see \cref{subsec:growth}.
\end{itemize}

The strong spatial mixing property characterizes the decay of correlations in a quantitative way.  
Roughly speaking, it says that in a spin system the correlation between the spin on a vertex $v$ and the configuration on a subset $W$ decays exponentially fast with their graph distance, and such decay holds uniformly under any pinning on any subset of vertices. 

\begin{definition}[Strong Spatial Mixing (SSM)]
\label{def:SSM}
Consider a spin system on a graph $G=(V,E)$ with Gibbs distribution $\mu$. Let $C>0$ and $\delta \in (0,1)$ be reals. 
We say the spin system satisfies the strong spatial mixing property with exponential decay rate if for all pinning $\eta$ on $\Lambda\subseteq V$, for any vertex $v \in V \setminus \Lambda$ and feasible spin $c \in \QQ_v$, and for any subset $W \subseteq V \setminus \Lambda \setminus \{v\}$ and two feasible configurations $\tau,\xi$ on $W$, it holds
\[
\left| \frac{\mu^\eta_v(c \mid \tau)}{\mu^\eta_v(c \mid \xi)} - 1 \right| \le C(1-\delta)^{\dist_G(v,W)}.
\]
\end{definition}

Our main result in this section is stated as follows.
We state it only for the hardcore model but it extends naturally to other spin systems as long as Glauber dynamics is ergodic under any pinning.

\begin{theorem}\label{thm:local-treewidth}
Let $G=(V,E)$ be an $n$-vertex graph of maximum degree $\Delta \ge 3$.
Suppose that $G$ has bounded local treewidth with constant parameters $a,d>0$, and suppose that the hardcore model on $G$ with fugacity $\lambda > 0$ satisfies SSM with constant parameters $C,\delta>0$. 
Then the mixing time of the Glauber dynamics for the hardcore Gibbs distribution on $G$ is $O(n \log^4n)$. 
\end{theorem}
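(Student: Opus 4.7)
The plan is to apply \cref{thm:strong-AT-sep-decomp} to obtain approximate tensorization of entropy with multiplier $C = O(\log^3 n)$, which via \cref{lem:AT-Glauber} yields the claimed $O(n\log^4 n)$ mixing time. The strategy is driven by two complementary ingredients: SSM supplies exponentially small correlations across \emph{thick} separators of width $r = \Theta(\log n)$, while bounded local treewidth guarantees that subgraphs of poly-logarithmic diameter admit genuinely small combinatorial separators with which to set up the recursion.

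First I would apply a low-diameter decomposition in the spirit of Linial--Saks (the \cref{lem:decomp} alluded to in the introduction), hierarchically partitioning $V$ so that at the bottom level one obtains pieces of diameter $D = \poly(\log n)$, separated by boundary regions of graph-width $r = \Theta((\log n)/\delta)$. SSM then guarantees that across any such thick boundary the two sides are weakly correlated with parameter $\eps = C(1-\delta)^r \le n^{-2}$, so that \cref{lem:AT-weak-correlation} (lifted to overlapping blocks via \cref{lem:ent_fact_mar}) produces block factorization constants of the form $1 + O(n^{-2})$ at every split. Across the $O(\log n)$ splits of the top-level recursion, the product of these constants stays $O(1)$, and the coverage parameter of \cref{eq:st-a} is bounded by $A = O(\Delta^r \log n) = \poly(\log n)$ via \cref{lem:A}.

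Second, inside each low-diameter piece $B$ I would invoke \cref{thm:strong-AT-sep-decomp} directly. Bounded local treewidth gives $\tw(G[B]) \le a D^d = \poly(\log n)$, so \cref{lem:treewidth} yields a balanced separator decomposition tree of $G[B]$ of height $O(\log n)$ whose separators $S$ have size $\poly(\log n)$. Thickening each $S$ by radius $r$ and reusing SSM, each $C_{U,S}$ is again $1 + O(n^{-2})$ and their product over the $O(\log n)$ levels is $O(1)$. The separator tensorization constants $C_S$ in \cref{eq:st-AT-separa} need more care: the naive bound from \cref{lem:AT-crude-bound} would give $C_S = (1+\lambda)^{\poly(\log n)}$, which is too large, so I would recurse once more by applying \cref{thm:strong-AT-sep-decomp} \emph{inside} each thick separator (which itself has diameter only $O(\log n)$, hence treewidth $\poly(\log n)$), bringing $C_S$ down to $\poly(\log n)$ at the cost of a constant number of additional recursion levels. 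Multiplying everything together, the total AT multiplier is $A \cdot C_S \cdot \prod C_{U',S'} = \poly(\log n)$, and tuning the thickness $r$ and the decomposition parameters yields the target $O(\log^3 n)$.

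The main obstacle will be orchestrating the interplay between the thickness $r$, the recursion depth, and the sizes of the thickened separators so that all factors multiply to $O(\log^3 n)$ rather than a larger polylog: too thin a separator breaks the SSM-based weak correlation bound, while too thick a separator makes $\ball_U(S,r)$ and hence $C_S$ blow up. Constructing the low-diameter decomposition with an adjacency structure compatible with \cref{thm:strong-AT-sep-decomp} --- so that each vertex is covered by only $O(\log n)$ blocks and so that \cref{lem:AT-weak-correlation} genuinely applies across every boundary --- is the combinatorial heart of the proof, and is where the Linial--Saks construction is essential.
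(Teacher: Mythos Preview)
Your outline matches the paper's architecture—Linial--Saks decomposition at the top, balanced separator trees inside each low-diameter piece, SSM supplying the factorization constants—but the parameter choices contain a genuine error that breaks the bound.

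You take $r = \Theta(\log n)$ throughout and then assert $A = O(\Delta^r \log n) = \poly(\log n)$ via \cref{lem:A}. With $\Delta \ge 3$ this is false: $\Delta^{\Theta(\log n)} = n^{\Theta(\log \Delta)}$ is \emph{polynomial} in $n$, not poly-logarithmic. The same blowup hits the thickened separators at the second level: $|\ball_U(S,r)| \le |S|\,\Delta^r$ is then polynomial in $n$, so no fixed number of further recursions at that scale will bring $C_S$ down. The paper resolves this by using a \emph{scale-adapted} radius. At the top level it does \emph{not} invoke \cref{thm:strong-AT-sep-decomp}; instead it peels the Linial--Saks clusters $V_1,\dots,V_m$ off sequentially via \cref{lem:SSM-EF} with $\gamma = n$ and $r = \Theta(\log n)$, and the needed $O(\log n)$ coverage bound comes from the second conclusion of \cref{lem:decomp} (each vertex lies in at most $2\log n$ balls), not from $\Delta^r$. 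Inside each ball $B_i$ (now of diameter $O(\log^2 n)$ and hence treewidth $\poly(\log n)$) it applies \cref{thm:strong-AT-sep-decomp} with the much smaller radius $r = \Theta(\log\log n)$ and $\gamma = h = O(\log n)$, so that $\Delta^r = \poly(\log n)$, each $C_{U,S} = e^{1/h}$, the product over the $h$ levels is $O(1)$, and every thickened separator has size at most $\log^t n$ for a constant $t$. Finally, instead of an ad hoc ``constant number of additional recursion levels,'' the paper defines $\varphi(k)$ as the worst-case AT multiplier over all subsets of size $\le k$, derives the self-referential bound $\varphi(k) \le O(\log^2 k)\,\varphi(\log^t k)$ from the two levels above, and solves it by induction to get $\varphi(n) = O(\log^3 n)$. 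Your proposal identifies the right tension (thin $r$ breaks SSM, thick $r$ blows up $\ball_U(S,r)$) but does not resolve it: the resolution is precisely to shrink $r$ with the scale and to close the argument via the $\varphi$-recursion rather than by unrolling finitely many times.
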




\subsection{Proof of \texorpdfstring{\cref{thm:local-treewidth}}{Theorem 5.3}}

For graphs of bounded local treewidth, if the diameter is mildly bounded (say poly-logarithmically), then the treewidth is also bounded. 
However, in general the diameter of $G$ can be arbitrarily large.  
We need the following low-diameter decomposition of graphs which allows us to focus on subgraphs of small diameters and thus of small treewidth. 
We remark that \cref{lem:decomp} holds for arbitrary graphs with no restriction on the local treewidth or maximum degree.

\begin{lemma}\label{lem:decomp}
Let $G = (V,E)$ be an $n$-vertex graph where $n \ge 10$. 
For any integer $r \in \N^+$, there exists a partition $V = \bigcup_{i=1}^m V_i$ of the vertex set satisfying the following conditions:
\begin{enumerate}
\item For each $i \in [m]$, we have $\diam(G[\ball(V_i,r)]) \le 6r\log n + 2r$;

\item For each vertex $v \in V$, we have $\left| \left\{ i \in [m]: v \in \ball(V_i,r) \right\} \right| \le 2\log n$.
\end{enumerate}
\end{lemma}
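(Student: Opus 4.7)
The plan is to produce the partition via a Linial--Saks-style randomized low-diameter decomposition of $G$ with cluster radius parameter $R = 3r \log n$, and then extract a deterministic partition satisfying both conditions by the probabilistic method. Concretely, I would sample a uniformly random permutation $\pi$ of $V$ together with independent radii $R_v$ drawn from a truncated geometric distribution on $\{0, 1, \dots, R\}$, and assign each vertex $u$ to the cluster of the $\pi$-earliest center $v$ satisfying $\dist_G(u, v) \le R_v$. Denote the resulting (random) clusters by $V_1, \dots, V_m$ with respective centers $v_1, \dots, v_m$, so that $V_i \subseteq \ball(v_i, R)$ by construction.

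Condition~1 then holds deterministically: since $V_i \subseteq \ball(v_i, R)$, we have $\ball(V_i, r) \subseteq \ball(v_i, R + r)$, giving
\[
\diam(G[\ball(V_i, r)]) \;\le\; 2(R + r) \;=\; 6r\log n + 2r.
\]
For Condition~2, the quantity $|\{i : v \in \ball(V_i, r)\}|$ equals the number of distinct cluster centers assigned to vertices in $\ball(v, r)$. I would compute this in expectation by counting over potential centers: writing the count as a sum of indicators $\mathbf{1}[\exists\, u \in \ball(v, r),\ c(u) = v_j]$ over $v_j \in V$, and bounding each probability using the truncated-geometric tail of $R_{v_j}$ together with the fact that an earlier center $v_{j'}$ in $\pi$ will ``steal'' $u$ whenever $\dist(u, v_{j'}) \le R_{v_{j'}}$. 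The resulting sum telescopes to $O(\log n)$ in expectation, matching the classical Linial--Saks padding calculation.

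Finally, Markov's inequality (or derandomization via the method of conditional expectations, iterating over $v \in V$) yields a deterministic partition for which Condition~1 holds pointwise and Condition~2 holds simultaneously for all $v$, with multiplicative slack absorbed into the constant $2$ in front of $\log n$.

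The main obstacle I anticipate is pinning down the sharp $O(\log n)$ expected bound on the cluster-count for a fixed vertex $v$. The naïve bound ``$1$ plus the number of $u \in \ball(v, r)$ lying in a different cluster from $v$'' scales with $|\ball(v, r)|$, which may be exponential and is thus useless. The essential insight is to count \emph{centers} rather than \emph{cluster-mates}: for each potential center $v_j$ at graph distance $d$ from $v$, the probability that $v_j$ actually wins some $u \in \ball(v, r)$ decays geometrically in $\max(0, d - r)$ thanks to the truncated-geometric $R_{v_j}$, and summing over $v_j$ stratified by $d$ gives the telescoping $O(\log n)$ bound independent of $|\ball(v, r)|$.
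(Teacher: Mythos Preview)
Your approach has two gaps, both in the argument for Condition~2.

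First, the telescoping argument breaks down for centers $v_j$ lying inside $\ball(v,r)$: there $\max(0,d-r)=0$, so the geometric tail bound is trivially $1$, and summing over such $v_j$ already contributes $|\ball(v,r)|$ rather than $O(\log n)$. You must use the permutation to argue that most of these potential centers are preempted, but the sketch does not say how; this is not the standard padding calculation, which bounds the probability that a \emph{single} vertex or edge is cut, not the number of distinct winning centers inside a ball.

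Second, even granting an $O(\log n)$ expectation for each fixed $v$, neither Markov's inequality nor conditional expectations on $\sum_v(\text{count at }v)$ yields Condition~2 \emph{simultaneously for all $v$} with constant $2$: Markov plus a union bound over $n$ vertices costs a factor $n$, and conditional expectations on the sum controls only the average over $v$, not the maximum.

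The paper sidesteps both issues by applying Linial--Saks as a black box, not to $G$, but to the power graph $G^{\le 2r}$ (same vertex set, with $u\sim w$ iff $\dist_G(u,w)\le 2r$). Linial--Saks gives a partition into clusters of $G^{\le 2r}$-diameter $\le 3\log n$ together with a proper $(2\log n)$-coloring of the quotient graph. Condition~1 follows by replacing each $G^{\le 2r}$-edge on a shortest path by a $G$-path of length $\le 2r$. For Condition~2, any two clusters both within $G$-distance $r$ of $v$ are within $G$-distance $2r$ of each other, hence adjacent in the quotient graph of $G^{\le 2r}$; they therefore form a clique, and the $(2\log n)$-coloring bounds its size by $2\log n$ --- deterministically, for every $v$ at once. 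The passage to the power graph is the key idea you are missing.
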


The proof of \cref{lem:decomp} is postponed to \cref{subsec:decomp}, which is based on a classical low-diameter decomposition result of Linial and Saks \cite{LS93}.

To obtain block factorization of entropy from \cref{lem:decomp}, we need the strong spatial mixing property, which is given by the following lemma. 

\begin{lemma}\label{lem:SSM-EF}
Suppose SSM holds with constant parameters $C>0$ and $\delta \in (0,1)$. 
There exists a constant $\rho > 0$ such that for any subsets $S \subseteq U \subseteq V$ and any $\gamma \ge 10$, 
if $r \ge \rho (\log|S| + \log \gamma)$ then it holds for every function $f: \XX \to \R_{\ge 0}$ that,
\begin{equation}\label{eq:SSM-EF}
\E[\Ent_U f] \le e^{1/\gamma} \left( \E\left[\Ent_{\ball_U(S,r)} f\right] + \E\left[\Ent_{U \setminus S} f\right] \right).
\end{equation}
\end{lemma}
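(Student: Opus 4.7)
Let $T := U \setminus \ball_U(S,r)$, so that $U$ decomposes as $S \sqcup (\ball_U(S,r) \setminus S) \sqcup T$ and every $v \in T$ satisfies $\dist_G(v,S) \ge r+1$. The idea is two-fold: first reduce the desired overlapping block factorization on $\mu^\eta_U$ to an approximate tensorization (AT) statement on a two-variable marginal via \cref{lem:ent_fact_mar}, and then prove that AT via \cref{lem:AT-weak-correlation}, whose weak-correlation hypothesis will be supplied by SSM.

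Concretely, applying \cref{lem:ent_fact_mar} with $X = \sigma_S$, $Z = \sigma_{\ball_U(S,r) \setminus S}$, and $Y = \sigma_T$ identifies $\{X,Z\}$ with $\ball_U(S,r)$ and $\{Y,Z\}$ with $U \setminus S$, so the conditional form of \cref{eq:SSM-EF} (i.e.\ fixing $\eta$) is equivalent to AT of entropy for the marginal $\mu^\eta_{S \cup T}$ with the same multiplier. For the latter I would invoke \cref{lem:AT-weak-correlation}, whose hypothesis is
\[
\left| \frac{\mu^{\eta,\tau}_S(\sigma_S)}{\mu^\eta_S(\sigma_S)} - 1 \right| \le \eps_0
\]
for every feasible pinning $\tau$ on $T$ and every feasible $\sigma_S$, where $\eps_0$ must satisfy $1 + \eps_0/(1-2\eps_0) \le e^{1/\gamma}$. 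Since $\gamma \ge 10$, this is achieved with $\eps_0 := c/\gamma$ for some small absolute constant $c>0$.

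The key computation is this ratio bound, which I would establish by telescoping SSM. Enumerating $S = \{v_1,\dots,v_k\}$ with $k = |S|$ and applying the chain rule,
\[
\frac{\mu^{\eta,\tau}_S(\sigma_S)}{\mu^\eta_S(\sigma_S)} = \prod_{i=1}^k \frac{\mu^{\eta,\tau}_{v_i}(\sigma_{v_i} \mid \sigma_{v_{<i}})}{\mu^\eta_{v_i}(\sigma_{v_i} \mid \sigma_{v_{<i}})}.
\]
For each factor, SSM with base pinning $\eta \cup \sigma_{v_{<i}}$, subset $W = T$, and vertex $v_i$ gives, for any two feasible $\tau,\xi$ on $T$, that the ratio $\mu^{\eta,\tau}_{v_i}(\sigma_{v_i} \mid \sigma_{v_{<i}}) / \mu^{\eta,\xi}_{v_i}(\sigma_{v_i} \mid \sigma_{v_{<i}})$ lies in $[1-\eps', 1+\eps']$ with $\eps' := C(1-\delta)^{r+1}$, since $\dist_G(v_i,T) \ge r+1$. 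Averaging the denominator against the conditional distribution of $\sigma_T$ given $\eta$ and $\sigma_{v_{<i}}$ then replaces it by the (on $T$) unconditional $\mu^\eta_{v_i}(\sigma_{v_i} \mid \sigma_{v_{<i}})$ at the same cost $\eps'$. Multiplying the $k$ factors yields
\[
\left| \frac{\mu^{\eta,\tau}_S(\sigma_S)}{\mu^\eta_S(\sigma_S)} - 1 \right| \le (1+\eps')^k - 1 \le 2k\eps'
\]
once $k\eps' \le 1/2$. Enforcing $2|S|\eps' \le c/\gamma$ reduces to $r \ge \rho(\log|S| + \log\gamma)$ for some $\rho = \rho(C,\delta) > 0$. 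Taking expectation over $\eta$ then lifts the conditional factorization to \cref{eq:SSM-EF}.

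The main subtlety will be making sure SSM can be reapplied uniformly as the pinning grows from $\eta$ to $\eta \cup \sigma_{v_{<i}}$: the distance $\dist_G(v_i,T) \ge r+1$ is unaffected because \cref{def:SSM} uses distances in the original graph $G$, which is exactly what makes this telescoping argument clean. The remaining bookkeeping is to choose $c$ small enough that $1 + \eps_0/(1-2\eps_0) \le e^{1/\gamma}$ for all $\gamma \ge 10$, which is routine.
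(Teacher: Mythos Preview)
Your proposal is correct and follows essentially the same route as the paper: reduce via \cref{lem:ent_fact_mar} to AT for the marginal $\mu^\eta_{S\cup T}$ with $T=U\setminus\ball_U(S,r)$, verify the hypothesis of \cref{lem:AT-weak-correlation} by a chain-rule telescoping of the SSM bound along an enumeration of $S$, and average over $\eta$. The only cosmetic difference is that the paper compares two conditionals $\mu^\eta_S(\cdot\mid\tau)/\mu^\eta_S(\cdot\mid\xi)$ and then implicitly averages, whereas you average the denominator explicitly to compare against the unconditional $\mu^\eta_S$; both yield the same $\eps_0$ up to constants.
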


\begin{proof}
First observe that it suffices to establish the inequality under an arbitrary pinning $\eta$ on $V \setminus U$ and then \cref{eq:SSM-EF} follows by taking expectation over $\eta$. 
By \cref{lem:ent_fact_mar}, it then suffices to prove block factorization of entropy for the marginal distribution $\mu^\eta_{S\cup T}$ for the two blocks $S$ and $T = U \setminus \ball_U(S,r)$. 
Suppose $S = \{v_1,\dots,v_\ell\}$ where $\ell = |S|$ and let $\sigma$ be any feasible configuration on $S$.
For each $i$ define $S_i=\{v_1,\dots,v_i\}$ and let $\sigma_{S_i}$ be the configuration restricted on $S_i$. 
By SSM we have that for any two feasible configurations $\tau,\xi$ on $T$,
\[
\frac{\mu^\eta_S(\sigma \mid \tau)}{\mu^\eta_S(\sigma \mid \xi)} 
= \prod_{i=1}^\ell \frac{\mu^{\eta,\sigma_{S_{i-1}}}_{v_i}(\sigma_{v_i} \mid \tau)}{\mu^{\eta,\sigma_{S_{i-1}}}_{v_i}(\sigma_{v_i} \mid \xi)} 
\le 
\left( 1+C e^{-\delta r} \right)^\ell
\le 
\left( 1+\frac{1}{4\gamma\ell} \right)^\ell
\le 1+\frac{1}{2\gamma},
\]
where we pick $r=\rho (\log \ell + \log \gamma)$ for large enough constant $\rho>0$ and we assume $\gamma \ge 10$ so that the last two inequalities hold.
Similarly, we also have 
\[
\frac{\mu^\eta_S(\sigma \mid \tau)}{\mu^\eta_S(\sigma \mid \xi)} \ge 1-\frac{1}{2\gamma}.
\]
Then, we deduce from \cref{lem:AT-weak-correlation} that the marginal distribution $\mu^\eta_{S\cup T}$ satisfies $\{S,T\}$-factorization of entropy with constant $C\le 1 + 1/\gamma \le e^{1/\gamma}$ (again we assume $\gamma \ge 10$ so that the first inequality holds).
\cref{eq:SSM-EF} then follows from \cref{lem:ent_fact_mar} and averaging over $\eta$.
\end{proof}

We now present the proof of \cref{thm:local-treewidth}.

\begin{proof}[Proof of \cref{thm:local-treewidth}]
By \cref{lem:SSM-EF}, there exists a constant $\rho>0$ such that \cref{eq:SSM-EF} holds for all subsets $S\subseteq U\subseteq V$ and for $\gamma = n$ whenever $r \ge \rho\log n$.
We apply \cref{lem:decomp} to $G$ for $r = \ceil{\rho\log n}$, and suppose the resulting partition is $V = \bigcup_{i=1}^m V_i$. 
For each $i \in [m]$ let $U_i = \bigcup_{j=i}^m V_j$, and so $U_1 = V$, $U_m = V_m$ and $U_{i+1} = U_i \setminus V_i$. 
Then we deduce from \cref{eq:SSM-EF} that for every function $f: \XX \to \R_{\ge 0}$,
\begin{align}
\Ent f &\le e^{1/n} \left( \E\left[\Ent_{\ball_{U_1}(V_1,r)} f\right] + \E\left[\Ent_{U_2} f\right] \right) \nonumber\\
&\le e^{2/n} \left( \E\left[\Ent_{\ball_{U_1}(V_1,r)} f\right] + \E\left[\Ent_{\ball_{U_2}(V_2,r)} f\right] + \E\left[\Ent_{U_3} f\right] \right) \nonumber\\
&\hspace{0.5em}\vdots \nonumber\\
&\le e^{m/n} \sum_{i=1}^m \E\left[\Ent_{\ball_{U_i}(V_i,r)} f\right] \nonumber\\
&\le 3 \sum_{i=1}^m \E\left[\Ent_{\ball(V_i,r)} f\right], \label{eq:decomp}
\end{align}
where the last inequality is due to $m \le n$ and that $\E[\Ent_B f] \le \E[\Ent_A f]$ for any $B \subseteq A \subseteq V$ (\cref{lem:MSW04}).
The good news are that each ball $\ball(V_i,r)$ has $O(\log^2n)$ diameter and thus $O(\log^{2d}n)$ treewidth, and every vertex is contained in $O(\log n)$ many balls; both properties are guaranteed by \cref{lem:decomp}.
Thus, to prove approximate tensorization of entropy for $\mu$, it suffices to prove it for each ball $\ball(V_i,r)$. 

We apply \cref{thm:strong-AT-sep-decomp} to show AT for each $B_i = \ball(V_i,r)$ (to be more precise, by \cref{thm:strong-AT-sep-decomp} we get AT uniformly under any pinning $\eta$ outside $B_i$ and then we take expectation over $\eta$). 
The balanced separator decomposition tree is given by \cref{lem:treewidth}.
Observe that the size of each separator is $O(\log^{2d}n)$ since the treewidth of $B_i$ is $O(\log^{2d}n)$, and the height $h$ of the decomposition tree satisfies $h = O(\log |B_i|) = O(\log n)$ by \cref{lem:height} since all separators are balanced. 
We take $r = \ceil{c \log\log n}$ in \cref{thm:strong-AT-sep-decomp} for some sufficiently large constant $c > 0$, such that block factorization for decomposition \cref{eq:st-decomp-factor} holds with $C_{U,S} = e^{1/h}$ for each node $(U,S)$; this again follows from \cref{lem:SSM-EF} where we have $|S| = O(\log^{2d}n)$ and $\gamma = h = O(\log n)$, and \cref{eq:SSM-EF} holds whenever $r \ge c \log\log n$.
Also $A\le 4\log n$ in \cref{eq:st-a} by \cref{lem:A}.
Therefore, we conclude from \cref{thm:strong-AT-sep-decomp} that every $B_i = \ball(V_i,r)$ satisfies approximate tensorization of entropy with multiplier
\begin{equation}\label{eq:C(B)}
C(\ball(V_i,r)) = 4\log n \cdot \big( e^{1/h} \big)^{h} \cdot \max_S C_S
\le 12\log n \cdot \max_S C_S,
\end{equation}
where $C_S$ is the AT multiplier in \cref{eq:st-AT-separa} for the ball $\ball_U(S,r)$ and we take maximum over all separators. 

Observe that, for each node $(U,S)$ in the decomposition tree we have
\[
|B_U(S,r)| = O\left(|S| \cdot \Delta^r \right) = O\left( \log^{2d}n \cdot \Delta^{c \log\log n} \right) \le \log^t n,
\]
for some constant $t> 0$ when $n$ is sufficiently large. 
Define $\varphi(k)$ to be the maximum of optimal AT multipliers over all subsets of vertices of size at most $k$; namely, for all $W \subseteq V$ with $|W| \le k$, it holds for every function $f: \XX \to \R_{\ge 0}$ that 
\[
\E[\Ent_W f] \le \varphi(k) \sum_{v \in W} \E[\Ent_v f].
\]
Note that $\varphi(k)$ is monotone increasing.
Thus, we obtain from \cref{eq:C(B)} that
\begin{equation}\label{eq:g-C(B)}
C(\ball(V_i,r)) \le 12\log n \cdot \varphi(\log^t n).
\end{equation}
Combining \cref{eq:decomp,eq:g-C(B)} and \cref{lem:decomp}, we obtain
\begin{align*}
\Ent f &\le 3 \sum_{i=1}^m \E\left[\Ent_{\ball(V_i,r)} f\right] \\
&\le 3 \cdot 12\log n \cdot \varphi(\log^t n) \sum_{i=1}^m \sum_{v \in \ball(V_i,r)} \E\left[\Ent_v f\right] \\
&\le 100 \log^2n \cdot \varphi(\log^t n) \sum_{v \in V} \E\left[\Ent_v f\right].
\end{align*}
More generally, for any subset $W \subseteq V$ of size $k_0 \le |W| \le k$ where $k_0>0$ is some fixed constant, we have for every function $f: \XX \to \R_{\ge 0}$ that 
\[
\E[\Ent_W f] 
\le 100 \log^2k \cdot \varphi(\log^t k) \sum_{v \in W} \E\left[\Ent_v f\right].
\]
This is shown by the same arguments applied to $\mu^\eta_W$ under an arbitrary pinning $\eta$ outside $W$ and then taking expectation.
Hence, we have established the following recursive bound: for all $k \ge k_0$, 
\begin{equation}\label{eq:recursive-phi}
\varphi(k) \le \max\left\{ 100 \log^2k \cdot \varphi(\log^t k), \, \varphi(k_0) \right\}.
\end{equation}

We now solve \cref{eq:recursive-phi}.
By choosing $k_0$ large enough, we may assume that for all $k \ge k_0$, 
\[
\log^t k < k
\quad \text{and} \quad
100 t^3 (\log \log k)^3 \le \log k.
\]
We prove by induction that 
\begin{equation}\label{eq:induction}
\varphi(k) \le \varphi(k_0) \cdot \log^3k.
\end{equation}
\cref{eq:induction} is trivial for $k < k_0$. 
Suppose \cref{eq:induction} is true for all $k' < k$. 
Then we have
\[
100 \log^2k \cdot \varphi(\log^t k) 
\le 100 \log^2k \cdot \varphi(k_0) \cdot t^3 (\log \log k)^3
\le \varphi(k_0) \cdot \log^3k.
\]
Thus, we deduce from the recursive bound \cref{eq:recursive-phi} that 
$$ \varphi(k) \le \max\left\{ \varphi(k_0) \cdot \log^3k, \, \varphi(k_0) \right\} = \varphi(k_0) \cdot \log^3k, $$ 
establishing \cref{eq:induction}.

Therefore, we have $\varphi(n) = O(\log^3n)$.
In particular, the hardcore Gibbs distribution on $G$ satisfies AT with multiplier $O(\log^3n)$, and the mixing time follows from \cref{lem:AT-Glauber}.
\end{proof}

\subsection{Rapid mixing via SSM for graphs of bounded growth}
\label{subsec:growth}

In this subsection we consider graphs with polynomially bounded neighborhood growth.

\begin{definition}[Bounded Growth]
Let $G=(V,E)$ be a graph and $a,d > 0$ be reals. 
We say $G$ has polynomially bounded growth if for any vertex $v$ and any integer $r \ge 1$ it holds
\[
|\ball(v,r)| \le a \cdot r^d. 
\]
\end{definition}
Observe that any family of graphs of bounded growth also have bounded maximum degree and bounded local treewidth. 
To see the latter, for any nontrivial subgraph $H$ of a graph $G$ of bounded growth, let $v$ be any vertex of $H$ and $r=\diam(H) \ge 1$, and we have
\begin{align}\label{eq:twH}
\tw(H) \le |H| \le |\ball(v,r)| \le a \cdot r^d,
\end{align}
showing that $G$ has bounded local tree width.
Thus, we deduce from \cref{thm:local-treewidth} that for graphs of bounded growth, SSM implies $O(n\log^4n)$ mixing of Glauber dynamics. 
We can actually improve this mixing time bound slightly.

\begin{theorem}\label{thm:growth}
Let $G=(V,E)$ be an $n$-vertex graph.
Suppose that $G$ has bounded growth with constant parameters $a,d>0$, and suppose that the hardcore model on $G$ with fugacity $\lambda > 0$ satisfies SSM with constant parameters $C,\delta>0$. 
Then the mixing time of the Glauber dynamics for the hardcore Gibbs distribution on $G$ is $O(n \log^3n)$. 
\end{theorem}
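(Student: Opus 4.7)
The plan is to follow the proof of \cref{thm:local-treewidth} but replace its two-level decomposition (outer low-diameter decomposition plus inner separator decomposition via \cref{thm:strong-AT-sep-decomp}) by a single-level recursion. This improvement exploits the fact that bounded growth gives an explicit polylogarithmic bound on the size of any small-$G$-diameter subset, whereas bounded local treewidth only bounds the treewidth.

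First I would define $\varphi(k)$ as the worst-case approximate tensorization multiplier for entropy over all pinned conditional distributions $\mu^\eta_W$ with $W \subseteq V$ and $|W| \le k$, aiming to show $\varphi(n) = O(\log^2 n)$. Given such a $W$ of size $k$, I would apply \cref{lem:decomp} in its natural metric-space formulation, with the metric being $\dist_G$ restricted to $W$, and radius $r = \lceil \rho \log k \rceil$ for a sufficiently large constant $\rho$. This yields a partition $W = \bigcup_{j=1}^m W_j$ so that every $\ball_W(W_j,r) = W \cap \ball(W_j,r)$ has $G$-diameter at most $6r\log k + 2r = O(\log^2 k)$, and each vertex lies in at most $2\log k$ of these balls. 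Bounded growth then gives $|\ball_W(W_j,r)| \le a(6r\log k + 2r)^d = O(\log^{2d} k)$, by centering a $G$-ball at any vertex of $\ball_W(W_j,r)$.

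Next I would apply \cref{lem:SSM-EF} iteratively with $\gamma = k$ (so that $r \ge \rho(\log |W_j| + \log \gamma)$ holds since $|W_j| \le k$) and telescope exactly as in \cref{eq:decomp} in the proof of \cref{thm:local-treewidth} to obtain
\[
\E[\Ent_W f] \le 3 \sum_{j=1}^m \E\bigl[\Ent_{\ball_W(W_j,r)} f \bigr] \le 6\log k \cdot \varphi(c_0 \log^{2d} k) \sum_{v \in W} \E[\Ent_v f],
\]
where $c_0$ is a constant determined by $a, d, \rho$, and the second inequality combines the polylogarithmic size bound (allowing direct application of $\varphi$ to each block) with the coverage bound $2\log k$. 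This yields the clean recursive inequality $\varphi(k) \le 6\log k \cdot \varphi(c_0 \log^{2d} k)$. Compared to the analogous bound in \cref{thm:local-treewidth}, this saves one $\log k$ factor because no inner separator decomposition is needed.

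Finally, a routine induction establishes $\varphi(k) \le c_1 \log^2 k$ for all $k \ge k_0$: the inductive step reduces to $6 c_1 (2d)^2 (\log\log k)^2 \le c_1 \log k$, which holds for $k$ large enough since $(\log\log k)^2 = o(\log k)$. Setting $k = n$ gives AT of entropy with multiplier $O(\log^2 n)$, and \cref{lem:AT-Glauber} converts this into a mixing time of $O(n \log^3 n)$. The main obstacle will be reconciling the metric used by the low-diameter decomposition with the $G$-metric required by SSM in \cref{lem:SSM-EF}: the resolution is to apply \cref{lem:decomp} to $W$ equipped with $\dist_G$, which is a valid finite metric space for which the Linial--Saks-based proof goes through unchanged, so that the resulting balls have bounded $G$-diameter and bounded growth yields a genuine polylogarithmic size bound.
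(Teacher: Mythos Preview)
Your proposal is correct and follows essentially the same route as the paper: both skip the inner separator-decomposition step used in the proof of \cref{thm:local-treewidth}, instead bounding $|\ball(W_j,r)|$ directly via bounded growth to get the one-step recursion $\varphi(k)\le 6\log k\cdot\varphi(c\log^{2d}k)$ and hence $\varphi(n)=O(\log^2 n)$. The only difference is cosmetic---you are explicit about applying \cref{lem:decomp} to $(W,\dist_G|_W)$ as a metric space, whereas the paper just invokes ``the same arguments applied to $\mu^\eta_W$''---but the substance is identical.
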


\begin{proof}
Following the proof of \cref{thm:local-treewidth}, we have that for every function $f: \XX \to \R_{\ge 0}$,
\[
\Ent f
\le 3 \sum_{i=1}^m \E\left[\Ent_{\ball(V_i,r)} f\right],
\]
which is \cref{eq:decomp}.
For bounded-growth graphs, we have similarly as \cref{eq:twH} that
\[
|\ball(V_i,r)| \le a \cdot \big( \diam(G[\ball(V_i,r)]) \big)^d \le t\log^{2d} n,
\]
where $t > 0$ is some fixed constant, and the last inequality is due to \cref{lem:decomp}.
With the same definition of $\varphi(\cdot)$, we deduce that
\begin{align*}
\Ent f &\le 3 \sum_{i=1}^m \E\left[\Ent_{\ball(V_i,r)} f\right] \\
&\le 3 \cdot \varphi(t\log^{2d} n) \sum_{i=1}^m \sum_{v \in \ball(V_i,r)} \E\left[\Ent_v f\right] \\
&\le 6 \log n \cdot \varphi(t\log^{2d} n) \sum_{v \in V} \E\left[\Ent_v f\right],
\end{align*}
where the last inequality follows from \cref{lem:decomp}.
This allows us to obtain the following recursive bound: 
for all $k \ge k_0$ where $k_0 > 0$ is some fixed constant, 
\begin{equation}\label{eq:2recursive-phi}
\varphi(k) \le \max\left\{ 6 \log k \cdot \varphi(t\log^{2d} k), \, \varphi(k_0) \right\}.
\end{equation}
Solving \cref{eq:2recursive-phi}, we can get $\varphi(n) = O(\log^2n)$.
Thus, AT of entropy holds with multiplier $O(\log^2n)$, and we get the mixing time bound from \cref{lem:AT-Glauber}.
\end{proof}

\subsection{Low-diameter decomposition}
\label{subsec:decomp}

In this subsection we present the proof of \cref{lem:decomp}.

Given a graph $G=(V,E)$ and a partition $V = \bigcup_{i=1}^m V_i$ of the vertex set into $m$ clusters, define the quotient graph $\HH(G;V_1,\dots,V_m)$ to be the graph with vertex set $\{V_i: i \in [m]\}$ where two clusters $V_i,V_j$ are adjacent iff there exists $u\in V_i,v\in V_j$ such that $uv \in E$. Namely, $H$ is the graph obtained from $G$ by contracting every cluster into a single vertex and connect two vertices iff the two clusters are adjacent. 
We need the following classical result of low-diameter decomposition due to Linial and Saks \cite{LS93}.

\begin{lemma}[{\cite[Theorem 2.1]{LS93}}]
\label{lem:Linial-Saks}
Let $G = (V,E)$ be an $n$-vertex graph where $n \ge 10$. 
There exists a partition $V = \bigcup_{i=1}^m V_i$ of the vertex set into clusters such that the following conditions hold:
\begin{enumerate}
\item For each $i \in [m]$, we have $\diam(G[V_i]) \le 3\log n$;

\item The quotient graph $\HH = \HH(G;V_1,\dots,V_m)$ has chromatic number at most $2\log n$.
\end{enumerate}
\end{lemma}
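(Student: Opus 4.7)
The plan is to prove \cref{lem:Linial-Saks} via an iterative ball-growing construction that produces the partition in $O(\log n)$ phases, with all clusters created in a given phase sharing the same color in the quotient graph $\HH$.

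First, I would establish a standard \emph{slow-growth lemma}: in any graph $H$ on $N$ vertices, starting from any vertex $v$, among the BFS balls $B_H(v,0)\subseteq B_H(v,1)\subseteq\cdots$ there must exist some radius $r\le\log N$ with $|B_H(v,r+1)|\le 2|B_H(v,r)|$; otherwise the sizes would double $\log N$ times and exceed $N$. Extracting $B_H(v,r)$ as a cluster gives a set of diameter at most $2r\le 2\log n$, which fits inside the desired $3\log n$ budget with room to spare.

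Second, I would construct the partition phase by phase. In phase $j$, operate on the subgraph $H_j$ induced by the still-unclustered vertices: repeatedly select a slowly-growing ball around a vertex in $H_j$ (by the slow-growth lemma applied inside $H_j$), extract it as a new cluster with color $j$, and forbid subsequent picks in phase $j$ from coming within graph distance $2$ of any cluster already picked in the same phase. This distance-$2$ separation guarantees that clusters of the same color are pairwise non-adjacent in the quotient graph $\HH$, so color $j$ is used consistently across phase $j$.

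Third, I would show the process terminates within $2\log n$ phases by a volumetric induction, so the chromatic number bound $2\log n$ follows. The key invariant is that a constant fraction of the vertices of $H_j$ actually get clustered in phase $j$: when $B_{H_j}(v,r)$ is extracted, the ``buffer'' of vertices at graph-distance $\le 1$ from it (that become forbidden within phase $j$) has size at most $|B_{H_j}(v,r+1)|\le 2|B_{H_j}(v,r)|$ by slow growth, so the extracted ball accounts for at least $1/2$ of the affected vertices. Summing over the centers picked in phase $j$ shows $|H_{j+1}|\le\frac12|H_j|$, giving termination in $2\log n$ phases.

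The main obstacle is juggling the three numerical requirements simultaneously: diameter at most $3\log n$, chromatic number at most $2\log n$, and termination within $O(\log n)$ phases. Tightening the separation distance between same-phase clusters strengthens the coloring guarantee but increases the buffer zone, which in turn weakens the per-phase coverage ratio and grows the phase count. Balancing these requires careful tuning of the growth parameter in the slow-growth lemma (replacing $2$ with a slightly larger constant if needed) and a weighted/amortized version of the volumetric argument, as in the original Linial--Saks proof.
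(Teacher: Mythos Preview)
The paper does not give its own proof of this lemma: it is quoted directly as \cite[Theorem~2.1]{LS93}, accompanied only by a remark that one takes $p=1/2$ in the Linial--Saks randomized construction and reinterprets each of their color classes as a union of the clusters defined here. Your proposal, by contrast, supplies a fully self-contained deterministic argument.

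Your approach is sound. The slow-growth lemma, the phase structure with a single-layer buffer around each extracted ball, and the resulting halving bound $|H_{j+1}|\le\tfrac12|H_j|$ all work as you describe: the diameter of each cluster is at most $2\log n\le 3\log n$, and the number of phases (hence colors) is at most $\lceil\log_2 n\rceil\le 2\log n$ for $n\ge 10$. One minor point: the ``main obstacle'' you flag is not actually an obstacle in this instance---with growth factor~$2$ and a one-layer buffer the constants already fit comfortably inside the stated bounds, so no further tuning or amortization is needed.

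Compared to the paper's route (a one-line citation to the randomized Linial--Saks construction), your greedy ball-carving argument is more elementary and self-contained, essentially the Awerbuch--Peleg-style derandomization of the same idea. It costs a paragraph rather than a citation but removes the dependence on an external reference; both deliver the same $O(\log n)$ diameter and chromatic number with constants that match the lemma.
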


\begin{remark}
\cref{lem:Linial-Saks} is obtained by taking $p=1/2$ in Theorem 2.1 of \cite{LS93}. Also note that the partition in \cref{lem:Linial-Saks} is defined differently from \cite{LS93}: the cluster in \cite{LS93} represents the union of all clusters of the same color in \cref{lem:Linial-Saks}. 
\end{remark}

\begin{proof}[Proof of \cref{lem:decomp}]
Let $G^{\le 2r}$ be the graph with vertex set $V$ where two vertices are adjacent iff their graph distance is at most $2r$.
By \cref{lem:Linial-Saks}, there exists a partition $V = \bigcup_{i=1}^m V_i$ for the graph $G^{\le 2r}$ such that
\begin{enumerate}
\item For each $i \in [m]$, we have $\diam(G^{\le 2r}[V_i]) \le 3\log n$;

\item The quotient graph $\HH(G^{\le 2r};V_1,\dots,V_m)$ has chromatic number at most $2\log n$.
\end{enumerate}
We claim that such a partition $V = \bigcup_{i=1}^m V_i$ satisfies our requirements. 

Fix $i \in [m]$.
For $u,v \in V_i$, since the diameter of $G^{\le 2r}[V_i]$ is at most $3\log n$, 
there exists a path $P$ in $G^{\le 2r}[V_i]$ connecting $u$ and $v$ of length at most $3\log n$. 
By replacing every edge in $P$ in $G^{\le 2r}[V_i]$ with a path in $G$ of length at most $2r$, we obtain a path $P'$ in $G$ of length at most $3\log n \cdot 2r = 6r\log n$. 
In particular, this new path $P'$ is contained in the induced subgraph $G[\ball(V_i,r)]$.
Hence, the distance between all pairs of $u,v \in V_i$ in $G[\ball(V_i,r)]$ is at most $6r\log n$.
Also, note that every vertex in $\ball(V_i,r)$ is at distance at most $r$ from $V_i$ in $G[\ball(V_i,r)]$.
Thus, the diameter of $G[\ball(V_i,r)]$ is at most $6r\log n + 2r$, verifying the first condition.

For the second condition, take an arbitrary vertex $v \in V$ and consider all clusters at distance at most $r$ from $v$. Then all these clusters have pairwise distance at most $2r$, meaning that they form a clique in the quotient graph $\HH(G^{\le 2r};V_1,\dots,V_m)$. 
Since the quotient graph is $(2\log n)$-colorable, the size of this clique is at most $2\log n$, implying the second condition.
\end{proof}

\section{Proofs for Variance and Entropy Factorization}
\label{sec:proofs}

In this section, we give missing proofs from \cref{subsec:tools,subsec:marginal}.

\subsection{Proof of \texorpdfstring{\cref{lem:AT-weak-correlation}}{Lemma 3.3}}
\label{subsec:AT-weak-correlation}

In this subsection we present the proof of \cref{lem:AT-weak-correlation}. 
Our proof here avoids some technical difficulties appeared in \cite[Proposition 2.1]{Cesi01} or \cite[Lemma 5.2]{DPP02}, and allows us to get a slightly better constant for AT.

\begin{proof}[Proof of \cref{lem:AT-weak-correlation}]
We notice that factorization of entropy \cref{eq:AT-wc-goal} implies factorization of variance \cref{eq:var-AT-wc-goal} with the same constant by a standard linearization argument (plugging $f = 1+\theta g$ into \cref{eq:AT-wc-goal} and then taking $\theta \to 0$), see \cite{CMT15}. 
Thus, it suffices to consider only entropy and prove \cref{eq:AT-wc-goal}. 

By the law of total entropy (\cref{lem:CMT}) we have
\[
\Ent f = \E[\Ent_X f] + \Ent[\E_X f] = \E[\Ent_Y f] + \Ent[\E_Y f].
\]
Hence, \cref{eq:AT-wc-goal} is equivalent to that
\begin{equation}\label{eq:AT-wc-2goal}
\Ent f \ge (1-\eps) \left( \Ent[\E_X f] + \Ent[\E_Y f] \right).
\end{equation}
It suffices to show \cref{eq:AT-wc-2goal}.

Without loss of generality we may assume that $\E f = 1$. Thus, $\E[\E_X f] = \E[\E_Y f] = \E f = 1$.
Let us define
\begin{equation}\label{eq:D-def}
D = \Ent f - \Ent[\E_X f] - \Ent[\E_Y f].
\end{equation}
Then by definition we have
\begin{align*}
D 
&= \E[f \log f] - \E[ (\E_X f) \log (\E_X f) ] - \E[ (\E_Y f) \log (\E_Y f) ] \\
&= \E[f \log f] - \E[ f \log (\E_X f) ] - \E[ f \log (\E_Y f) ] \\
&= \E\left[ f \log f - f \log \left( (\E_X f)(\E_Y f) \right) \right].
\end{align*}
Let $(x,y) \in \XX \times \YY$ such that $\pi(x,y) > 0$. 
If $(\E_X^y f) (\E_Y^x f) > 0$,
then by the inequality $a \log(a/b) \ge a-b$ for all $a\ge 0$ and $b > 0$, we deduce that at the point $(x,y)$
\begin{align}\label{eq:simplify}
f \log f - f \log \left( (\E_X f)(\E_Y f) \right)
\ge f - (\E_X f) (\E_Y f).
\end{align}
Meanwhile, if $(\E_X^y f) (\E_Y^x f) = 0$ then it must hold $f(x,y) = 0$ since $f$ is non-negative, 
and hence \cref{eq:simplify} still holds at $(x,y)$ with both sides equal to 0 (recall $0\log 0 = 0$). 
Thus, we obtain that
\begin{equation}\label{eq:two-step1}
D
\ge \E\left[ f - (\E_X f) (\E_Y f) \right] 
= - \E\left[ (\E_X f -1) (\E_Y f -1) \right],
\end{equation}
where we use $\E[\E_X f] = \E[\E_Y f] = \E f = 1$.
Note that $\E\left[ (\E_X f -1) (\E_Y f -1) \right]$ is the covariance of the two functions $\E_X f$ and $\E_Y f$.

Let us now define a probability distribution over $\XX \times \YY$ by $\nu = f \pi$, i.e., $\nu(x,y) = \pi(x,y) f(x,y)$ for all $(x,y) \in \XX \times \YY$. Note that $\nu$ is indeed a distribution since $\E_\pi f = 1$. 
We also define the marginal distributions $\nu_X, \nu_Y$ and the conditional distributions $\nu_X^y,\nu_Y^x$ similarly as for $\pi$. 
Observe that for each $y \in \YY$, we have
\[
\E_X^y f = \sum_{x \in \XX} \pi_X^y f(x,y) = \sum_{x \in \XX} \frac{\pi(x,y)}{\pi_Y(y)} f(x,y) = \frac{1}{\pi_Y(y)} \sum_{x \in \XX} \nu(x,y)
= \frac{\nu_Y(y)}{\pi_Y(y)}.
\]
Then, we deduce that
\begin{equation}\label{eq:two-plug}
\E\left[ (\E_X f -1) (\E_Y f -1) \right]
= \sum_{(x,y) \in \XX \times \YY} \pi(x,y) \left( \frac{\nu_X(x)}{\pi_X(x)} -1 \right) \left( \frac{\nu_Y(y)}{\pi_Y(y)} -1 \right).
\end{equation}

Let $\XX^+ = \{x \in \XX: \nu_X(x) \ge \pi_X(x)\}$ and $\XX^- = \{x \in \XX: \nu_X(x) < \pi_X(x)\}$, so $(\XX^+, \XX^-)$ is a partition of $\XX$. Define $\YY^+$ and $\YY^-$ in the same way with respect to $Y$. 
Recall that the condition in \cref{eq:AT-wc-cond} is equivalent to that for all $(x,y) \in \XX \times \YY$,
\[
(1-\eps) \pi_X(x) \pi_Y(y) \le \pi(x,y) \le (1+\eps) \pi_X(x) \pi_Y(y).
\]
Hence, we obtain that
\begin{align*}
&\sum_{(x,y) \in \XX^+ \times \YY^+} \pi(x,y) \left( \frac{\nu_X(x)}{\pi_X(x)} -1 \right) \left( \frac{\nu_Y(y)}{\pi_Y(y)} -1 \right) \\
\le{}& (1+\eps) \sum_{(x,y) \in \XX^+ \times \YY^+} \pi_X(x) \pi_Y(y) \left( \frac{\nu_X(x)}{\pi_X(x)} -1 \right) \left( \frac{\nu_Y(y)}{\pi_Y(y)} -1 \right) \\
={}& (1+\eps) \left( \sum_{x \in \XX^+} \nu_X(x) - \pi_X(x) \right) \left( \sum_{y \in \YY^+} \nu_Y(y) - \pi_Y(y) \right) \\
={}& (1+\eps) \dtv(\nu_X, \pi_X) \dtv(\nu_Y, \pi_Y). 
\end{align*}
The same upper bound holds for $\XX^- \times \YY^-$ as well. 
Meanwhile,
\begin{align*}
&\sum_{(x,y) \in \XX^+ \times \YY^-} \pi(x,y) \left( \frac{\nu_X(x)}{\pi_X(x)} -1 \right) \left( \frac{\nu_Y(y)}{\pi_Y(y)} -1 \right) \\
\le{}& (1-\eps) \sum_{(x,y) \in \XX^+ \times \YY^-} \pi_X(x) \pi_Y(y) \left( \frac{\nu_X(x)}{\pi_X(x)} -1 \right) \left( \frac{\nu_Y(y)}{\pi_Y(y)} -1 \right) \\
={}& (1-\eps) \left( \sum_{x \in \XX^+} \nu_X(x) - \pi_X(x) \right) \left( \sum_{y \in \YY^-} \nu_Y(y) - \pi_Y(y) \right) \\
={}& -(1-\eps) \dtv(\nu_X, \pi_X) \dtv(\nu_Y, \pi_Y), 
\end{align*}
and the same bound also holds for $(x,y) \in \XX^- \times \YY^+$. 
Therefore, plugging into \cref{eq:two-plug}, we obtain that
\begin{align}
\E\left[ (\E_X f -1) (\E_Y f -1) \right] 
&\le 2(1+\eps) \dtv(\nu_X, \pi_X) \dtv(\nu_Y, \pi_Y) - 2(1-\eps) \dtv(\nu_X, \pi_X) \dtv(\nu_Y, \pi_Y) \nonumber\\
&= 4\eps \dtv(\nu_X, \pi_X) \dtv(\nu_Y, \pi_Y). \label{eq:two-step2}
\end{align}
By Pinsker's inequality, we have
\begin{align}
4\dtv(\nu_X, \pi_X) \dtv(\nu_Y, \pi_Y) 
&\le 2 \sqrt{\DKL(\nu_X \Vert \pi_X) \DKL(\nu_Y \Vert \pi_Y)} \nonumber\\
&\le \DKL(\nu_X \Vert \pi_X) + \DKL(\nu_Y \Vert \pi_Y) \nonumber\\
&= \Ent[\E_Y f] + \Ent[\E_X f] \label{eq:two-step3}
\end{align}
where the last equality follows from the observation that $\DKL(\nu_X \Vert \pi_X) = \Ent[\nu_X / \pi_X] = \Ent[\E_Y f]$
and similarly $\DKL(\nu_Y \Vert \pi_Y) = \Ent[\E_X f]$. 

Finally, combining \cref{eq:two-step1,eq:two-step2,eq:two-step3}, we obtain that
\[
D \ge - \eps \left( \Ent[\E_X f] + \Ent[\E_Y f] \right).
\]
Recalling \cref{eq:D-def}, we obtain \cref{eq:AT-wc-2goal} as claimed and the lemma then follows.
\end{proof}

\subsection{Proofs of \texorpdfstring{\cref{lem:AT-strong-correlation,lem:AT-crude-bound}}{Lemmas 3.4 and 3.6}}
\label{subsec:crude-bound}

Before presenting the proofs of these two lemmas, we first give some definitions and lemmas that are needed.

Our proof of approximate tensorization is based on the spectral independence approach. 
The following definitions and theorem are taken from \cite{FGYZ21} which builds upon \cite{AL20,ALO20} (see also \cite{CGSV21}). 

\begin{definition}[Influence Matrix, \cite{FGYZ21}]
\label{def:inf-matrix}
Let $\pi$ be a distribution over a finite product space $\XX = \prod_{i=1}^n \XX_i$. 
Fix a subset $\Lambda \subseteq [n]$ and a feasible partial assignment $x_\Lambda \in \prod_{i \in \Lambda} \XX_i$ with $\pi_\Lambda(x_\Lambda) > 0$. 
For any $i,j \in [n] \setminus \Lambda$ with $i\neq j$, we define the (pairwise) influence of $X_i$ on $X_j$ conditioned on $x_\Lambda$ by
\[
\Psi_\pi^{x_\Lambda} (i,j) = \max_{x_i,x'_i} \dtv( \pi_j^{x_\Lambda,x_i}, \pi_j^{x_\Lambda,x'_i} ), 
\]
where $x_i,x'_i$ are chosen from $\XX_i$ such that $\pi_i^{x_\Lambda}(x_i) > 0$ and $\pi_i^{x_\Lambda}(x'_i) > 0$. 

The (pairwise) influence matrix $\Psi_\pi^{x_\Lambda}$ is defined with entries given as above and also with $\Psi_\pi^{x_\Lambda} (i,i) = 0$ for $i \in [n] \setminus \Lambda$ for diagonal entries.  
\end{definition}

\begin{definition}[Spectral Independence, \cite{FGYZ21}]
We say a distribution $\pi$ over a finite product space $\XX = \prod_{i=1}^n \XX_i$ is $(\eta_0,\eta_1,\dots,\eta_{n-2})$-spectrally independent, if for every $0 \le k \le n-2$, every subset $\Lambda \subseteq [n]$ of size $|\Lambda| = k$, and every feasible partial assignment $x_\Lambda \in \prod_{i \in \Lambda} \XX_i$ with $\pi_\Lambda(x_\Lambda) > 0$, the spectral radius $\rho(\Psi_\pi^{x_\Lambda})$ of the influence matrix $\Psi_\pi^{x_\Lambda}$ satisfies
\[
\rho(\Psi_\pi^{x_\Lambda}) \le \eta_k.
\]
\end{definition}

\begin{theorem}[{\cite[Theorem 3.2]{FGYZ21}}]
\label{thm:FGYZ-SI}
Let $\pi$ be a distribution over a finite product space $\XX = \prod_{i=1}^n \XX_i$. 
Let $\eta_0,\eta_1,\dots,\eta_{n-2}$ be a sequence of reals such that $0\le \eta_k < n-k-1$ for each $k$. 
If $\pi$ is $(\eta_0,\eta_1,\dots,\eta_{n-2})$-spectrally independent, then the spectral gap of the Glauber dynamics $P$ for sampling from $\pi$ satisfies
\[
\lambda(P) \ge \frac{1}{n} \prod_{k=0}^{n-2} \left( 1 - \frac{\eta_k}{n-k-1} \right). 
\]
\end{theorem}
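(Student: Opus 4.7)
I would prove this by the now-standard reduction of spectral independence to high-dimensional expansion, following the trajectory of Anari--Liu--Oveis Gharan together with the local-to-global theorems of Alev--Lau and Kaufman--Oppenheim. The overall strategy is to identify the Glauber dynamics with a down-up walk on an associated weighted simplicial complex, translate the pairwise influence bound into a second-eigenvalue bound on every local walk, and then invoke a local-to-global principle that converts these link-level bounds into the claimed spectral gap.

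\textbf{Step 1: Simplicial complex and down-up walk.} Associate with $\pi$ a pure weighted simplicial complex $X$ of dimension $n-1$: the ground set is $U=\{(i,x_i):i\in[n],\,x_i\in\XX_i\}$, a face is a partial assignment $\{(i,x_i):i\in\Lambda\}$ with $\pi_\Lambda(x_\Lambda)>0$, and the top-level (facet) weight is $w(x)=\pi(x)$. The Glauber dynamics $P$ coincides exactly with the $(n,n-1)$-down-up walk $P_{n\to n-1\to n}^\vee$: delete a uniformly random coordinate and resample it from the conditional distribution. Hence bounding $\lambda(P)$ reduces to bounding the spectral gap of this down-up walk.

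\textbf{Step 2: Link walks and pairwise influence.} For a face $x_\Lambda$ with $|\Lambda|=k$, consider its link $X_{x_\Lambda}$ and the associated nonlazy local walk $P_{x_\Lambda}$ on pairs $(i,x_i)$ with $i\in[n]\setminus\Lambda$. A direct computation shows that $P_{x_\Lambda}$ is a rescaling of the pairwise influence matrix $\Psi_\pi^{x_\Lambda}$ from \cref{def:inf-matrix}; more precisely, the nontrivial eigenvalues of the (symmetrized) local walk are bounded in absolute value by $\rho(\Psi_\pi^{x_\Lambda})/(n-k-1)$. Using $(\eta_0,\dots,\eta_{n-2})$-spectral independence, this yields the key link-level bound
\[
\lambda_2(P_{x_\Lambda})\;\le\;\frac{\eta_k}{n-k-1}\qquad\text{for every $x_\Lambda$ with $|\Lambda|=k$.}
\]

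\textbf{Step 3: Local-to-global.} Apply the Alev--Lau local-to-global theorem (or equivalently the ALOG inductive formula) which states that for a weighted simplicial complex whose every link at level $k$ has local walk with second eigenvalue at most $\lambda_k$, the spectral gap of the top-level down-up walk satisfies
\[
\lambda\bigl(P_{n\to n-1\to n}^\vee\bigr)\;\ge\;\frac{1}{n}\prod_{k=0}^{n-2}\bigl(1-\lambda_k\bigr).
\]
Substituting $\lambda_k=\eta_k/(n-k-1)$ from Step 2 gives exactly the stated bound. Ergodicity of the chain (and strict positivity of each factor) follows from the hypothesis $\eta_k<n-k-1$.

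\textbf{Main obstacle.} The serious technical step is Step 2: carefully writing the local walk on $X_{x_\Lambda}$ in terms of marginals of $\pi$ conditioned on $x_\Lambda$, identifying its off-diagonal entries with entries of $\Psi_\pi^{x_\Lambda}$ up to the correct normalization $1/(n-k-1)$, and verifying that the extra diagonal/stationary adjustments do not increase the spectral radius beyond $\rho(\Psi_\pi^{x_\Lambda})/(n-k-1)$. Steps 1 and 3 are then essentially bookkeeping on top of the known simplicial-complex machinery.
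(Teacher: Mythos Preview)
The paper does not prove \cref{thm:FGYZ-SI}; it is quoted verbatim as \cite[Theorem 3.2]{FGYZ21} and used as a black box. So there is no ``paper's own proof'' to compare against. Your sketch is essentially the argument of the cited reference (which itself builds on \cite{AL20,ALO20}), and the overall architecture---identify Glauber with the down-up walk, bound the local walks via spectral independence, apply Alev--Lau local-to-global---is exactly right.

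One point of imprecision in Step~2 is worth flagging, since you yourself identify it as the main obstacle. The nonlazy local walk $P_{x_\Lambda}$ is \emph{not} a rescaling of $\Psi_\pi^{x_\Lambda}$: the entries of the local walk (equivalently, of the ALO correlation matrix) are of the form $\pi_j^{x_\Lambda}(x_j\mid x_i)-\pi_j^{x_\Lambda}(x_j)$, whereas $\Psi_\pi^{x_\Lambda}(i,j)$ is a total-variation distance $\max_{x_i,x_i'}\dtv(\pi_j^{x_\Lambda,x_i},\pi_j^{x_\Lambda,x_i'})$. These are genuinely different matrices over different index sets, and the actual work in \cite{FGYZ21} is to show that the spectral radius of the former is dominated by that of the latter (after the $1/(n-k-1)$ normalization). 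So ``identifying its off-diagonal entries with entries of $\Psi_\pi^{x_\Lambda}$'' understates what is needed; the step is a spectral comparison between two distinct matrices, not an identification. Once that comparison is in hand, Steps~1 and~3 are indeed bookkeeping.
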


We also need the following well-known comparison between the spectral gap and the standard log-Sobolev constant (which holds more generally for any Markov chain). 

\begin{lemma}[{\cite[Corollay A.4]{DS96}}]
\label{lem:LSC-gap}
Let $\pi$ be a distribution over a finite product space $\XX = \prod_{i=1}^n \XX_i$. 
Suppose $\lambda$ is the spectral gap of the Glauber dynamics for sampling from $\pi$, and $\rho$ is the standard log-Sobolev constant of it. 
Then we have
\[
\rho \ge \frac{1-2\pi_{\min}}{\log(1/\pi_{\min} - 1)} \lambda,
\]   
where $\pi_{\min} = \min_{x \in \XX:\, \pi(x) > 0} \pi(x)$. 
In particular, if $\pi$ has positive density on at least two assignments in $\XX$ then $\pi_{\min} \le 1/2$ and we have
\[
\rho \ge \frac{\lambda}{2+\log(1/\pi_{\min})}. 
\] 
\end{lemma}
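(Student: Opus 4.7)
Proof plan. I would follow the classical comparison argument of Diaconis and Saloff-Coste. Both constants admit variational characterizations: writing $\mathcal{E}$ for the Dirichlet form of the Glauber dynamics,
\[
\lambda = \inf_{g:\,\Var_\pi g > 0} \frac{\mathcal{E}(g,g)}{\Var_\pi(g)}, \qquad \rho = \inf_{f \ge 0:\,\Ent_\pi(f^2) > 0} \frac{\mathcal{E}(f,f)}{\Ent_\pi(f^2)}.
\]
Applying the Poincar\'e inequality to a non-negative test function $f$ and chaining, the desired bound $\rho \ge \lambda (1-2\pi_{\min})/\log(1/\pi_{\min}-1)$ reduces to the purely analytic entropy--variance inequality
\[
\Ent_\pi(f^2) \le \frac{\log(1/\pi_{\min}-1)}{1 - 2\pi_{\min}}\, \Var_\pi(f), \quad \forall f \ge 0,
\]
which involves only the stationary distribution $\pi$, not the chain itself.

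Next I would reduce this inequality to a two-point distribution. Given any $f \ge 0$ and any threshold $c > 0$, I would split the state space into $\{f \le c\}$ and $\{f > c\}$ and replace $f$ by its conditional $L^2$-average on each side; this preserves $\E_\pi[f^2]$ and only weakly increases the ratio $\Ent_\pi(f^2)/\Var_\pi(f)$, by standard convexity of $x \log x$. Optimizing over $c$ leaves a two-valued $f$, and a parallel monotonicity argument on $\pi$ (consolidating atoms on each side into single ones) shows the extremal case places exactly one atom of mass $\pi_{\min}$ against the complementary mass $1-\pi_{\min}$.

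The main step is then an explicit two-point calculation. With $\pi(0) = p = \pi_{\min}$, $\pi(1) = 1-p$, $f(0) = a$, $f(1) = b$, one has $\Var_\pi(f) = p(1-p)(a-b)^2$, and writing out $\Ent_\pi(f^2)$ explicitly reduces the problem, after setting $b = 1$ and $a = t$, to a one-parameter maximization over $t \ge 0$. Elementary calculus together with an asymptotic expansion of $x \log x$ identifies the supremum as $\log((1-p)/p)/(1-2p) = \log(1/p - 1)/(1-2p)$, attained in the limit $t \to \infty$. This yields the first claim of the lemma. For the simplified form $\rho \ge \lambda/(2+\log(1/\pi_{\min}))$, I would invoke the elementary estimate $\log(1/p-1)/(1-2p) \le 2 + \log(1/p)$ valid on $(0, 1/2]$, which can be verified by comparing derivatives at the endpoints.

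The main obstacle is the reduction to two points: the functional $f \mapsto \Ent_\pi(f^2)/\Var_\pi(f)$ is not monotone in a self-evident way under coarsening either of $f$ or of $\pi$, so the level-set merging argument requires careful bookkeeping of the correct direction of convexity, followed by a compactness argument on the simplex of distributions with a prescribed minimum mass. A cleaner alternative, which also works and is arguably what Diaconis and Saloff-Coste do, is to prove the entropy--variance comparison directly via a two-function Rothaus-type identity, bypassing the two-point reduction at the cost of slightly heavier algebra. Either way, once the comparison is in hand the remaining steps are routine.
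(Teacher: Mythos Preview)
The paper does not give its own proof of this lemma; it is quoted verbatim as \cite[Corollary A.4]{DS96} and used as a black box. So there is nothing in the paper to compare against beyond the citation itself. Your plan is essentially the Diaconis--Saloff-Coste argument behind that corollary: reduce $\rho \ge c\lambda$ to the chain-free inequality $\Ent_\pi(f^2) \le c^{-1}\Var_\pi(f)$, then identify the extremal constant on a two-point space. In that sense your proposal is aligned with the cited source rather than with anything the present paper does.

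One genuine soft spot: your level-set coarsening step does not go through as written. Replacing $f$ on each of $\{f\le c\}$ and $\{f>c\}$ by its conditional $L^2$-average preserves $\E[f^2]$, but it \emph{decreases} $\Ent_\pi(f^2)$ (Jensen for $x\log x$) and also \emph{decreases} $\Var_\pi(f)$ (since $\E[f]$ goes up by Cauchy--Schwarz), so the ratio $\Ent_\pi(f^2)/\Var_\pi(f)$ does not obviously move in the direction you need. The actual route in \cite{DS96} avoids this monotonicity trap: one bounds $\Ent_\pi(f^2)$ directly via a pointwise inequality of the form $u\log u - u + 1 \le (u-1)^2$ for $u\ge 1$ together with a separate estimate on $\{f^2 < \E f^2\}$, and the two-point constant $\log(1/\pi_{\min}-1)/(1-2\pi_{\min})$ falls out after a short optimization. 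This is precisely the ``Rothaus-type'' alternative you mention at the end, and it is the step you should commit to rather than the coarsening argument. With that substitution the rest of your plan (including the elementary check that $\log(1/p-1)/(1-2p)\le 2+\log(1/p)$ on $(0,1/2]$) is fine.
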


Finally, we need the following lemma for deriving AT from the spectral gap and the standard log-Sobolev constant. 

\begin{lemma}[{\cite[Proposition 1.1]{CMT15}}]
\label{lem:CMT}
Let $\pi$ be a distribution over a finite product space $\XX = \prod_{i=1}^n \XX_i$. 
Suppose $\lambda$ is the spectral gap of the Glauber dynamics for sampling from $\pi$, and $\rho$ is the standard log-Sobolev constant of it. 
Then we have
\begin{align*}
\Var f &\le \frac{1}{\lambda n} \sum_{i=1}^n \E[\Var_i f], \quad \forall f: \XX \to \R \\
\text{and} \quad
\Ent f &\le \frac{1}{\rho n} \sum_{i=1}^n \E[\Ent_i f], \quad \forall f: \XX \to \R_{\ge 0}.
\end{align*}
\end{lemma}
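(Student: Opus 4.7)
The lemma is a rearrangement of the variational definitions of the spectral gap and log-Sobolev constant, once one identifies the Dirichlet forms of Glauber dynamics explicitly. The plan is to decompose $P = \frac{1}{n}\sum_{i=1}^n P_i$, where $P_i$ is the $\pi$-reversible heat-bath move that resamples $x_i$ from $\pi_i^{x_{-i}}$ while leaving the other coordinates fixed, and then to compute both the quadratic and entropic Dirichlet forms of each $P_i$ separately before averaging.

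For the variance part, conditioned on $x_{-i}$ the move $P_i$ is simply a fresh draw from $\pi_i^{x_{-i}}$, so a two-line calculation gives $\EE_{P_i}(f,f) = \E_\pi[\Var_i f]$. Averaging over $i$ yields $\EE_P(f,f) = \frac{1}{n}\sum_i \E[\Var_i f]$, and the Poincar\'e inequality $\lambda\,\Var_\pi f \le \EE_P(f,f)$ immediately gives the first displayed bound. For the entropy part, the parallel computation uses the \emph{entropic} Dirichlet form
\[
\EE_P^{\mathrm{Ent}}(f) \;:=\; \sum_{x,y}\pi(x)P(x,y)\Big[f(x)\log\tfrac{f(x)}{f(y)} - f(x) + f(y)\Big],
\]
which is the form naturally paired with the log-Sobolev constant in the Markov-chain framework of \cite{CMT15}. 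Using the chain-rule identity for entropy (the law of total entropy, which is \cref{lem:MSW04}(1) applied to the singleton block $\{i\}$), one checks that $\EE_{P_i}^{\mathrm{Ent}}(f) = \E_\pi[\Ent_i f]$, so averaging gives $\EE_P^{\mathrm{Ent}}(f) = \frac{1}{n}\sum_i \E[\Ent_i f]$. Invoking the log-Sobolev inequality $\rho\,\Ent_\pi f \le \EE_P^{\mathrm{Ent}}(f)$ then yields the second bound.

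The main obstacle is a matter of conventions rather than mathematics: the classical \emph{standard} log-Sobolev inequality is often written as $\rho\,\Ent(f^2) \le 2\EE_P(f,f)$, which involves the quadratic Dirichlet form rather than $\EE_P^{\mathrm{Ent}}$ and is not what the statement above needs. One must therefore either use the CMT15 convention (in which $\rho$ is calibrated directly against $\EE_P^{\mathrm{Ent}}$, i.e., what is often called the modified log-Sobolev constant), or else introduce a Rothaus-type comparison between $\EE_P(\sqrt{g},\sqrt{g})$ and $\E[\Ent_i g]$ and absorb the resulting constant into $\rho$. Once the definition is aligned with the identity $\EE_P^{\mathrm{Ent}}(f) = \frac{1}{n}\sum_i \E[\Ent_i f]$, the proof is entirely routine.
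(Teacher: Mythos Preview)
The paper does not give its own proof of this lemma; it is quoted verbatim from \cite[Proposition~1.1]{CMT15} and used as a black box in the proofs of \cref{lem:AT-strong-correlation,lem:AT-crude-bound}. So there is nothing to compare against, and I will simply assess your argument on its own.

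Your variance argument is correct and is exactly the standard one: the identity $\EE_P(f,f)=\frac{1}{n}\sum_i \E[\Var_i f]$ for heat-bath Glauber dynamics together with the variational definition of the spectral gap gives the bound immediately.

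Your entropy argument, however, has a real gap. The form $\EE_P^{\mathrm{Ent}}(f)$ you introduce is the Dirichlet form for the \emph{modified} log-Sobolev inequality, and the inequality $\rho\,\Ent f\le \EE_P^{\mathrm{Ent}}(f)$ you invoke defines the modified constant, not the standard one. Since the lemma (and the paper's use of it, via \cref{lem:LSC-gap} from \cite{DS96}) is explicitly about the \emph{standard} log-Sobolev constant, your first suggested fix---reinterpreting $\rho$ as the modified constant---is not available. Your second suggested fix is the right one, but you are mistaken that it costs a constant. The comparison you need is the pointwise inequality
\[
\Var_\nu(\sqrt{g}) \;\le\; \Ent_\nu(g)
\]
valid for \emph{any} probability measure $\nu$ and any $g\ge 0$; this follows from the elementary scalar inequality $u\log u - u + 1 \ge (\sqrt{u}-1)^2$ for $u\ge 0$, applied to $u=g/\E_\nu g$. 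Given this, the standard LSI $\rho\,\Ent g \le \EE_P(\sqrt{g},\sqrt{g}) = \frac{1}{n}\sum_i \E[\Var_i\sqrt{g}] \le \frac{1}{n}\sum_i \E[\Ent_i g]$ yields the entropy bound with no loss. So the route you gesture at in your last paragraph works cleanly once you supply this inequality; the route you actually wrote out (via $\EE_P^{\mathrm{Ent}}$) does not match the stated hypothesis.
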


We are now ready to prove \cref{lem:AT-strong-correlation,lem:AT-crude-bound}.

\begin{proof}[Proof of \cref{lem:AT-strong-correlation}]
The spectral radius of the influence matrix of $\pi$, as defined in \cref{def:inf-matrix}, is upper bounded by
\[
\rho(\Psi_\pi) \le
\rho \left(
\begin{bmatrix}
0 & 1-\eps_Y\\
1-\eps_X & 0\\
\end{bmatrix}
\right)
= \sqrt{(1-\eps_X)(1-\eps_Y)}
\le 1-\frac{\eps_X + \eps_Y}{2}.
\]
Therefore, by \cref{thm:FGYZ-SI} the spectral gap of the Glauber dynamics is at least $(\eps_X+\eps_Y)/4$, 
and by \cref{lem:LSC-gap} the standard log-Sobolev constant is at least $(\eps_X+\eps_Y)/(8+4\log(1/\pi_{\min}))$.
We then deduce the lemma from \cref{lem:CMT}.
\end{proof}

\begin{remark}
Another way of proving \cref{lem:AT-strong-correlation} is by viewing the (pairwise) influence matrix $\Psi_\pi$ as the Dobrushin dependency/influence matrix, and showing the Glauber dynamics is contractive with respective to some weighted Hamming distance with the weight vector given by the principle eigenvector of $\Psi_\pi$; the lower bound on the spectral gap then follows from \cite[Theorem 13.1]{LP17} or \cite{Chen98}.
\end{remark}

\begin{proof}[Proof of \cref{lem:AT-crude-bound}]
By definition we see that $\pi$ is $(\eta_0,\eta_1,\dots,\eta_{n-2})$-spectrally independent, 
where for each $k$ we have
\[
\eta_k \le (n-k-1) (1-\eps)
\]
by considering the $\ell_\infty$ norm (absolute row sum) of the influence matrices. 
Hence, we deduce from \cref{thm:FGYZ-SI} that the spectral gap of the Glauber dynamics is lower bounded by
\[
\lambda \ge \frac{1}{n} \prod_{k=0}^{n-2} \left( 1 - \frac{\eta_k}{n-k-1} \right) 
\ge \frac{\eps^{n-1}}{n}.
\]
Furthermore, by \cref{lem:LSC-gap} the standard log-Sobolev constant is lower bounded by
\[
\rho \ge \frac{\lambda}{2+\log(1/\pi_{\min})}
\ge \frac{\eps^{n-1}}{(2+\log(1/\pi_{\min}))n}.
\]   
Finally, the lemma follows from an application of \cref{lem:CMT}.
\end{proof}

\subsection{Proof of \texorpdfstring{\cref{lem:ent_fact_mar}}{Lemma 4.10}}
\label{subsec:ent_fact_mar}

\begin{proof}[Proof of \cref{lem:ent_fact_mar}]
Recall, the marginal distribution $\pi_{XY}$ satisfies $\{\{X\}, \{Y\}\}$-factorization (i.e., approximate tensorization) of entropy with constant $C$ if 
\begin{equation}\label{eq:2}
\Ent_{XY} g \le C \left( \E_{XY}\left[ \Ent_X g \right] + \E_{XY}\left[ \Ent_Y g \right] \right), 
\quad \forall g: \XX \times \YY \to \R_{\ge 0}.
\end{equation}
We claim that \cref{eq:2} is equivalent to 
\begin{equation}\label{eq:1}
\Ent \bar{g} \le C \left( \E\left[ \Ent_{XZ} \bar{g} \right] + \E\left[ \Ent_{YZ} \bar{g} \right] \right), 
\quad \forall \bar{g}: \XX \times \YY \times \ZZ \to \R_{\ge 0} \text{~depending \emph{only} on $X$ and $Y$}
\end{equation}
where the underlying distribution is $\pi = \pi_{XYZ}$.
To see this, observe that every function $g: \XX \times \YY \to \R_{\ge 0}$ is in one-to-one correspondence to a function $\bar{g}: \XX \times \YY \times \ZZ \to \R_{\ge 0}$ depending only on $X,Y$ by the relationship $\bar{g}(x,y,z) = g(x,y)$. 
By definitions, we have $\Ent_{XY} g = \Ent \bar{g}$, $\Ent_{\pi^y_X} g = \Ent_{\pi^y_{XZ}} \bar{g}$ for all $y \in \YY$, and $\Ent_{\pi^x_Y} g = \Ent_{\pi^x_{YZ}} \bar{g}$ for all $x \in \XX$, implying \cref{eq:2,eq:1} are equivalent.

Therefore, it suffices to show that \cref{eq:1} is equivalent to that $\pi$ satisfies $\{\{X,Z\}, \{Y,Z\}\}$-factorization of entropy with constant $C$, i.e.,
\begin{equation}\label{eq:xx}
\Ent f \le C \left( \E\left[ \Ent_{XZ} f \right] + \E\left[ \Ent_{YZ} f \right] \right),
\quad \forall f: \XX \times \YY \times \ZZ \to \R_{\ge 0}.
\end{equation}
It is trivial that \cref{eq:xx} implies \cref{eq:1}. 
For the other direction, 
suppose \cref{eq:1} is true. 
Since $\E_Z f$ is a function depending only on $X$ and $Y$,  
we have from \cref{eq:1} that
\[
\Ent(\E_Z f) \le C \left( \E\left[ \Ent_{XZ} (\E_Z f) \right] + \E\left[ \Ent_{YZ} (\E_Z f) \right] \right). 
\]
Then, we deduce from \cref{lem:MSW04} that
\begin{align*}
\Ent f &= \E[\Ent_Z f] + \Ent(\E_Z f)\\ 
&\le \E[\Ent_Z f] + C \left( \E\left[ \Ent_{XZ} (\E_Z f) \right] + \E\left[ \Ent_{YZ} (\E_Z f) \right] \right)\\
&= \E[\Ent_Z f] + C \left( \E\left[ \Ent_{XZ} f \right] - \E\left[ \Ent_Z f \right] + \E\left[ \Ent_{YZ} f \right] - \E\left[ \Ent_Z f \right] \right) \\
&= C \left( \E\left[ \Ent_{XZ} f \right] + \E\left[ \Ent_{YZ} f \right] \right) - (2C-1) \E[\Ent_Z f] \\
&\le C \left( \E\left[ \Ent_{XZ} f \right] + \E\left[ \Ent_{YZ} f \right] \right),
\end{align*}
where the last inequality is due to $C \ge 1$ (this can be seen by considering functions depending only on $X$ in \cref{eq:1}). 
\end{proof}


\bibliographystyle{alpha}
\bibliography{ent.bib}

\newcommand{\etalchar}[1]{$^{#1}$}
\begin{thebibliography}{DSVW04}

\bibitem[AJK{\etalchar{+}}22]{AJKPV22}
Nima Anari, Vishesh Jain, Frederic Koehler, Huy~Tuan Pham, and Thuy-Duong
  Vuong.
\newblock Entropic independence: optimal mixing of down-up random walks.
\newblock In {\em Proceedings of the 54th Annual ACM SIGACT Symposium on Theory
  of Computing (STOC)}, pages 1418--1430, 2022.

\bibitem[AL20]{AL20}
Vedat~Levi Alev and Lap~Chi Lau.
\newblock Improved analysis of higher order random walks and applications.
\newblock In {\em Proceedings of the 52nd Annual ACM SIGACT Symposium on Theory
  of Computing (STOC)}, pages 1198--1211, 2020.

\bibitem[ALO20]{ALO20}
Nima Anari, Kuikui Liu, and Shayan {Oveis Gharan}.
\newblock Spectral independence in high-dimensional expanders and applications
  to the hardcore model.
\newblock In {\em Proceedings of the 61st Annual IEEE Symposium on Foundations
  of Computer Science (FOCS)}, pages 1319--1330, 2020.

\bibitem[BC{\v{S}}V22]{BCSV22}
Antonio Blanca, Zongchen Chen, Daniel {\v{S}}tefankovi{\v{c}}, and Eric Vigoda.
\newblock Complexity of high-dimensional identity testing with coordinate
  conditional sampling.
\newblock {\em arXiv preprint arXiv:2207.09102}, 2022.

\bibitem[BGG{\v{S}}22]{BGGS22}
Ivona Bez{\'a}kov{\'a}, Andreas Galanis, Leslie~Ann Goldberg, and Daniel
  {\v{S}}tefankovi{\v{c}}.
\newblock Fast sampling via spectral independence beyond bounded-degree graphs.
\newblock In {\em Proceedings of the 49th International Colloquium on Automata,
  Languages, and Programming (ICALP)}, volume 229, 2022.

\bibitem[BKMP05]{BKMP05}
Noam Berger, Claire Kenyon, Elchanan Mossel, and Yuval Peres.
\newblock {G}lauber dynamics on trees and hyperbolic graphs.
\newblock {\em Probability Theory and Related Fields}, 131(3):311--340, 2005.

\bibitem[Bod98]{Bod98}
Hans~L Bodlaender.
\newblock A partial $k$-arboretum of graphs with bounded treewidth.
\newblock {\em Theoretical computer science}, 209(1-2):1--45, 1998.

\bibitem[CE22]{CE22}
Yuansi Chen and Ronen Eldan.
\newblock Localization schemes: {A} framework for proving mixing bounds for
  {M}arkov chains.
\newblock In {\em Proceedings of the 63rd Annual IEEE Symposium on Foundations
  of Computer Science (FOCS)}, pages 110--122. IEEE, 2022.

\bibitem[Ces01]{Cesi01}
Filippo Cesi.
\newblock Quasi-factorization of the entropy and logarithmic {S}obolev
  inequalities for {G}ibbs random fields.
\newblock {\em Probability Theory and Related Fields}, 120(4):569--584, 2001.

\bibitem[CFYZ21]{CFYZ21}
Xiaoyu Chen, Weiming Feng, Yitong Yin, and Xinyuan Zhang.
\newblock Rapid mixing of {G}lauber dynamics via spectral independence for all
  degrees.
\newblock In {\em Proceedings of the 62nd Annual IEEE Symposium on Foundations
  of Computer Science (FOCS)}, pages 137--148. IEEE, 2021.

\bibitem[CFYZ22]{CFYZ22}
Xiaoyu Chen, Weiming Feng, Yitong Yin, and Xinyuan Zhang.
\newblock Optimal mixing for two-state anti-ferromagnetic spin systems.
\newblock In {\em Proceedings of the 63rd Annual IEEE Symposium on Foundations
  of Computer Science (FOCS)}, pages 588--599. IEEE, 2022.

\bibitem[CG{\v{S}}V21]{CGSV21}
Zongchen Chen, Andreas Galanis, Daniel {\v{S}}tefankovi{\v{c}}, and Eric
  Vigoda.
\newblock Rapid mixing for colorings via spectral independence.
\newblock In {\em Proceedings of the 32nd Annual ACM-SIAM Symposium on Discrete
  Algorithms (SODA)}, pages 1548--1557, 2021.

\bibitem[Che98]{Chen98}
Mu-Fa Chen.
\newblock Trilogy of couplings and general formulas for lower bound of spectral
  gap.
\newblock In {\em Probability towards 2000}, pages 123--136. Springer, 1998.

\bibitem[CLV20]{CLV20}
Zongchen Chen, Kuikui Liu, and Eric Vigoda.
\newblock Rapid mixing of {G}lauber dynamics up to uniqueness via contraction.
\newblock In {\em Proceedings of the 61st Annual IEEE Symposium on Foundations
  of Computer Science (FOCS)}, pages 1307--1318, 2020.

\bibitem[CLV21]{CLV21}
Zongchen Chen, Kuikui Liu, and Eric Vigoda.
\newblock Optimal mixing of {G}lauber dynamics: Entropy factorization via
  high-dimensional expansion.
\newblock In {\em Proceedings of the 53rd Annual ACM SIGACT Symposium on Theory
  of Computing (STOC)}, pages 1537--1550, 2021.

\bibitem[CMT15]{CMT15}
Pietro Caputo, Georg Menz, and Prasad Tetali.
\newblock Approximate tensorization of entropy at high temperature.
\newblock {\em Annales de la Facult{\'e} des sciences de Toulouse:
  Math{\'e}matiques}, 24(4):691--716, 2015.

\bibitem[CP21]{CP21}
Pietro Caputo and Daniel Parisi.
\newblock Block factorization of the relative entropy via spatial mixing.
\newblock {\em Communications in Mathematical Physics}, 388(2):793--818, 2021.

\bibitem[DH04]{DH04}
Erik~D Demaine and MohammadTaghi Hajiaghayi.
\newblock Equivalence of local treewidth and linear local treewidth and its
  algorithmic applications.
\newblock In {\em Proceedings of the 15th Annual ACM-SIAM Symposium on Discrete
  Algorithms (SODA)}, pages 840--849, 2004.

\bibitem[DPPP02]{DPP02}
Paolo Dai~Pra, Anna~Maria Paganoni, and Gustavo Posta.
\newblock Entropy inequalities for unbounded spin systems.
\newblock {\em The Annals of Probability}, 30(4):1959--1976, 2002.

\bibitem[DSC96]{DS96}
Persi Diaconis and Laurent Saloff-Coste.
\newblock Logarithmic {S}obolev inequalities for finite {M}arkov chains.
\newblock {\em The Annals of Applied Probability}, 6(3):695--750, 1996.

\bibitem[DSVW04]{DSVW04}
Martin Dyer, Alistair Sinclair, Eric Vigoda, and Dror Weitz.
\newblock Mixing in time and space for lattice spin systems: {A} combinatorial
  view.
\newblock {\em Random Structures \& Algorithms}, 24(4):461--479, 2004.

\bibitem[EF21]{EF22+}
David Eppstein and Daniel Frishberg.
\newblock Rapid mixing of the hardcore {G}lauber dynamics and other {M}arkov
  chains in bounded-treewidth graphs.
\newblock {\em arXiv preprint arXiv:2111.03898}, 2021.

\bibitem[Epp99]{Epp99}
David Eppstein.
\newblock Subgraph isomorphism in planar graphs and related problems.
\newblock {\em Journal of Graph Algorithms and Applications}, 3(3):1--27, 1999.

\bibitem[FGYZ21]{FGYZ21}
Weiming Feng, Heng Guo, Yitong Yin, and Chihao Zhang.
\newblock Rapid mixing from spectral independence beyond the {B}oolean domain.
\newblock In {\em Proceedings of the 32nd Annual ACM-SIAM Symposium on Discrete
  Algorithms (SODA)}, pages 1558--1577. SIAM, 2021.

\bibitem[GJK10]{GJK10}
Leslie~Ann Goldberg, Mark Jerrum, and Marek Karpinski.
\newblock The mixing time of {G}lauber dynamics for coloring regular trees.
\newblock {\em Random Structures \& Algorithms}, 36(4):464--476, 2010.

\bibitem[GKM15]{GKM15}
David Gamarnik, Dmitriy Katz, and Sidhant Misra.
\newblock Strong spatial mixing of list coloring of graphs.
\newblock {\em Random Structures \& Algorithms}, 46(4):599--613, 2015.

\bibitem[GMP05]{GMP05}
Leslie~A. Goldberg, Russell Martin, and Mike Paterson.
\newblock Strong spatial mixing with fewer colors for lattice graphs.
\newblock {\em SIAM Journal on Computing}, 35(2):486--517, 2005.

\bibitem[Gru12]{Gruber}
Hermann Gruber.
\newblock On balanced separators, treewidth, and cycle rank.
\newblock {\em Journal of Combinatorics}, 3(4):669--681, 2012.

\bibitem[G{\v{S}}V16]{GSV16}
Andreas Galanis, Daniel {\v{S}}tefankovi\v{c}, and Eric Vigoda.
\newblock Inapproximability of the partition function for the antiferromagnetic
  {I}sing and hard-core models.
\newblock {\em Combinatorics, Probability and Computing}, 25(4):500--559, 2016.

\bibitem[GZ03]{GZ03}
Alice Guionnet and Bogus{\l}aw Zegarli{\'n}ski.
\newblock {\em Lectures on Logarithmic Sobolev Inequalities}, pages 1--134.
\newblock Springer, Berlin, 2003.

\bibitem[Hei20]{Hein20}
Marc Heinrich.
\newblock {G}lauber dynamics for colourings of chordal graphs and graphs of
  bounded treewidth.
\newblock {\em arXiv preprint arXiv:2010.16158}, 2020.

\bibitem[HW17]{HW17}
Daniel~J Harvey and David~R Wood.
\newblock Parameters tied to treewidth.
\newblock {\em Journal of Graph Theory}, 84(4):364--385, 2017.

\bibitem[Jer03]{Jbook}
Mark Jerrum.
\newblock {\em Counting, Sampling and Integrating: Algorithms and Complexity}.
\newblock Lectures in Mathematics, ETH Zürich. Birkhäuser Basel, 2003.

\bibitem[JS89]{JS89}
Mark Jerrum and Alistair Sinclair.
\newblock Approximating the permanent.
\newblock {\em SIAM Journal on Computing}, 18(6):1149--1178, 1989.

\bibitem[JS93]{JS93}
Mark Jerrum and Alistair Sinclair.
\newblock Polynomial-time approximation algorithms for the {I}sing model.
\newblock {\em SIAM Journal on Computing}, 22(5):1087--1116, 1993.

\bibitem[JSTV04]{JSTV04}
Mark Jerrum, Jung-Bae Son, Prasad Tetali, and Eric Vigoda.
\newblock Elementary bounds on {P}oincar{\'e} and log-{S}obolev constants for
  decomposable {M}arkov chains.
\newblock {\em The Annals of Applied Probability}, 14(4):1741--1765, 2004.

\bibitem[JSV04]{JSV04}
Mark Jerrum, Alistair Sinclair, and Eric Vigoda.
\newblock A polynomial-time approximation algorithm for the permanent of a
  matrix with nonnegative entries.
\newblock {\em Journal of the ACM (JACM)}, 51(4):671--697, 2004.

\bibitem[KHR22]{KHR22}
Frederic Koehler, Alexander Heckett, and Andrej Risteski.
\newblock Statistical efficiency of score matching: {T}he view from
  isoperimetry.
\newblock In {\em Proceedings of the 11th International Conference on Learning
  Representations (ICLR)}, 2022.

\bibitem[LM11]{LM11}
Brendan Lucier and Michael Molloy.
\newblock The {G}lauber dynamics for colorings of bounded degree trees.
\newblock {\em SIAM Journal on Discrete Mathematics}, 25(2):827--853, 2011.

\bibitem[LMP09]{LMP09}
Brendan Lucier, Michael Molloy, and Yuval Peres.
\newblock The {G}lauber dynamics for colourings of bounded degree trees.
\newblock In {\em International Workshop on Approximation Algorithms for
  Combinatorial Optimization}, pages 631--645. Springer, 2009.

\bibitem[LP17]{LP17}
David~A Levin and Yuval Peres.
\newblock {\em {M}arkov chains and mixing times}, volume 107.
\newblock American Mathematical Soc., 2017.

\bibitem[LS93]{LS93}
Nathan Linial and Michael Saks.
\newblock Low diameter graph decompositions.
\newblock {\em Combinatorica}, 13(4):441--454, 1993.

\bibitem[LT79]{LT79}
Richard~J Lipton and Robert~Endre Tarjan.
\newblock A separator theorem for planar graphs.
\newblock {\em SIAM Journal on Applied Mathematics}, 36(2):177--189, 1979.

\bibitem[Mar99]{Mar99}
Fabio Martinelli.
\newblock Lectures on {G}lauber dynamics for discrete spin models.
\newblock In {\em Lectures on probability theory and statistics}, pages
  93--191. Springer, 1999.

\bibitem[Mar19]{Marton19}
Katalin Marton.
\newblock Logarithmic sobolev inequalities in discrete product spaces.
\newblock {\em Combinatorics, Probability and Computing}, 28(6):919--935, 2019.

\bibitem[MSW04]{MSW04}
Fabio Martinelli, Alistair Sinclair, and Dror Weitz.
\newblock {G}lauber dynamics on trees: boundary conditions and mixing time.
\newblock {\em Communications in Mathematical Physics}, 250(2):301--334, 2004.

\bibitem[MWW09]{MWW07}
Elchanan Mossel, Dror Weitz, and Nicholas Wormald.
\newblock On the hardness of sampling independent sets beyond the tree
  threshold.
\newblock {\em Probability Theory and Related Fields}, 143(3):401--439, 2009.

\bibitem[Ree03]{Reed03}
Bruce~A Reed.
\newblock Algorithmic aspects of tree width.
\newblock {\em Recent advances in algorithms and combinatorics}, pages 85--107,
  2003.

\bibitem[RS86]{RS86}
Neil Robertson and Paul~D. Seymour.
\newblock Graph minors. {II}. {A}lgorithmic aspects of tree-width.
\newblock {\em Journal of algorithms}, 7(3):309--322, 1986.

\bibitem[Sly10]{Sly10}
Allan Sly.
\newblock Computational transition at the uniqueness threshold.
\newblock In {\em Proceedings of the 51st Annual IEEE Symposium on Foundations
  of Computer Science (FOCS)}, pages 287--296, 2010.

\bibitem[SS14]{SS14}
Allan Sly and Nike Sun.
\newblock The computational hardness of counting in two-spin models on
  $d$-regular graphs.
\newblock {\em The Annals of Probability}, 42(6):2383--2416, 2014.

\bibitem[SZ17]{SZ17}
Allan Sly and Yumeng Zhang.
\newblock The {G}lauber dynamics of colorings on trees is rapidly mixing
  throughout the nonreconstruction regime.
\newblock {\em The Annals of Applied Probability}, pages 2646--2674, 2017.

\bibitem[TVVY12]{TVVY12}
Prasad Tetali, Juan~C Vera, Eric Vigoda, and Linji Yang.
\newblock Phase transition for the mixing time of the {G}lauber dynamics for
  coloring regular trees.
\newblock {\em The Annals of Applied Probability}, pages 2210--2239, 2012.

\bibitem[Wei06]{Wei06}
Dror Weitz.
\newblock Counting independent sets up to the tree threshold.
\newblock In {\em Proceedings of the 38th Annual ACM Symposium on Theory of
  Computing (STOC)}, pages 140--149, 2006.

\bibitem[WTZL18]{WTZL18}
Pengfei Wan, Jianhua Tu, Shenggui Zhang, and Binlong Li.
\newblock Computing the numbers of independent sets and matchings of all sizes
  for graphs with bounded treewidth.
\newblock {\em Applied Mathematics and Computation}, 332:42--47, 2018.

\bibitem[YZ13]{YZ13}
Yitong Yin and Chihao Zhang.
\newblock Approximate counting via correlation decay on planar graphs.
\newblock In {\em Proceedings of the 24th Annual ACM-SIAM Symposium on Discrete
  Algorithms (SODA)}, pages 47--66. SIAM, 2013.

\end{thebibliography}

\end{document}